\newtheorem*{condition2}{Condition 3'}
\newtheorem{algo}{Algorithm}
\newcommand{\yc}[1]{{{\color{blue}  #1}}}
\let\hat\widehat
\let\tilde\widetilde
\newtheorem{remark}{Remark}
\newtheorem{corollary}{Corollary}
\newtheorem{lemma}{Lemma}
\newtheorem{theorem}{Theorem}
\newtheorem{condition}{Condition}
\DeclareMathOperator*{\argmax}{arg\,max}
\DeclareMathOperator*{\argmin}{arg\,min}
\newcommand*{\addFileDependency}[1]{
\typeout{(#1)}
%
%
%
\IfFileExists{#1}{}{\typeout{No file #1.}}
}\makeatother
\newcommand*{\myexternaldocument}[1]{%
\externaldocument{#1}%
\addFileDependency{#1.tex}%
\addFileDependency{#1.aux}%
}
\date{} 
\begin{document}

\def\spacingset#1{\renewcommand{\baselinestretch}%
{#1}\small\normalsize} \spacingset{1}


  \title{Dynamic Factor Analysis of High-dimensional   Recurrent Events}

  \author{Fangyi Chen$^a$, Yunxiao Chen$^{b}$, Zhiliang Ying$^a$,  and Kangjie Zhou$^c$ ~\\~\\$^a$\footnotesize\textit{Department of Statistics, Columbia University}~\\$^b$\footnotesize\textit{Department of Statistics, London School of Economics and Political Science}~\\$^c$\footnotesize\textit{Department of Statistics, Stanford University}}
  \maketitle

\bigskip
\begin{abstract}
  Recurrent event time data arise in many studies, including biomedicine, public health, marketing, and social media analysis. High-dimensional recurrent event data involving many event types and observations have become prevalent with advances in information technology. This paper proposes a semiparametric dynamic factor model for the dimension reduction of high-dimensional recurrent event data. The proposed model imposes a low-dimensional structure on the mean intensity functions of the event types while allowing for dependencies. A nearly rate-optimal smoothing-based estimator is proposed. An information criterion that consistently selects the number of factors is also developed. Simulation studies demonstrate the effectiveness of these inference tools. The proposed method is applied to grocery shopping data, for which an interpretable factor structure is obtained.
\end{abstract}
\noindent%
{\it Keywords:} Counting process; factor analysis; marginal modelling; kernel smoothing; information criterion
\vfill
 %
 
\spacingset{1.9} 

\section{Introduction}\label{section1}

As information technology advances, high-dimensional recurrent event data are becoming increasingly common. 
For example, such data are commonly seen in market basket analysis, which often tracks customers' purchasing behaviour over time to develop personalized recommendation strategies. Here, each customer can be viewed as an observation unit. Their shopping history can be viewed as a multivariate counting process, wherein the elements of the process correspond to a large number of merchandise items, and the event times correspond to the times when the items are purchased. 
Another example is text data from social media platforms \citep[e.g.,][]{liang2018dynamic,bogdanowicz2022dynamic}. In such data, a user's dynamics correspond to a multivariate counting process, where event times record the occurrence of words or phrases in posts (e.g., tweets). The user dynamics are often analyzed for user profiling, opinion mining, or understanding and predicting the information cascade on a social medium. 
High-dimensional recurrent event data also emerge in human-computer interactions such as simulated problem-solving tasks in educational assessment \citep{chen2020continuous}, where event times are the time-stamps of different types of actions. Data of a similar structure also appear in medicine and public health, finance, and insurance \citep[e.g.,][]{cook2007statistical,sun2006statistical,yang2022nonparametric}.

We propose a dynamic factor model for analyzing high-dimensional recurrent event time data. This model introduces low-dimensional time-varying factors in a continuous time domain to capture the dynamic trends underlying a multivariate counting process while keeping the constant event-type-specific parameters, known as the loadings, to strengthen the interpretability. The model is only specified based on the mean rate functions \citep{lin2000semiparametric}, allowing for a flexible conditional dependence structure among the processes.
This is crucial in applications such as consumer shopping behaviour analysis, where recurrent events could be highly dependent due to population heterogeneity. {Model identification is studied, based on which rotation methods for exploratory factor analysis \citep{browne2001overview,rohe2020vintage} can be applied to the current model for obtaining an interpretable factor structure.} 
Simultaneous estimation of factors and loadings is proposed based on a kernel-smoothed pseudo-likelihood function. We further propose an information criterion for determining the number of factors.  Desirable asymptotic properties are established as the number of event types and the sample size grow to infinity. In particular, we show that the proposed information criterion consistently selects the number of factors, and the estimation is consistent and nearly rate-optimal. The proposed method is applied to a large grocery shopping dataset. This analysis finds interpretable customer factors that provide insight into grocery shopping behaviours.

The proposed method is related to frailty models for recurrent event data \citep[e.g.,][]{abu1990analysis,chen2005statistical}. These models introduce correlated event-type-specific random effects (frailties) into the intensity functions to capture the dependence among events. With many event types, the traditional frailty model has to introduce many random effects and specify their joint distribution, making the model specification and parameter estimation challenging. The proposed model is also related to dynamic factor models for irregularly spaced longitudinal data \citep{chen2020latent,lu2015bayesian,tang2017bayesian}, where the 
dynamic factors are treated as stochastic processes, and Bayesian or empirical Bayesian inferences are performed. 
The proposed method may also be viewed as an extension of 
high-dimensional factor analysis methods 
\citep{bai2012statistical,chen2020structured,chen2021quantile,liu2023generalized,wang2019factor,he2023one}. In these methods, the latent factors are treated as unknown parameters rather than random variables during parameter estimation, which avoids distributional assumptions on the latent factors and makes the estimation computationally more affordable. Based on this estimation framework, information criteria are developed for determining the number of factors  \citep{bai2002determining,chen2022determining}. The current work is similar in spirit but involves a more challenging task of estimating low-dimensional functions of dynamic factors.

For a matrix $\mathbf{X}=(x_{ij})_{N\times J}$, let $\|\mathbf{X}\|_F=(\sum_{i,j}x_{ij}^2)^{1/2}$ and $\|\mathbf{X}\|_{2\rightarrow \infty}=\sup_{\|\alpha\|_2=1}\|X\alpha\|_{\infty}$ denote its Frobenius norm and two-to-infinity norm, respectively. For two real numbers $a$ and $b$, we write $a\land b=\min\{a,b\}$ and $a\vee b=\max\{a,b\}$. For two sequences of real numbers $\{ a_n \}$ and $\{ b_n \}$, we write $a_n\ll b_n$ or, equivalently, $a_n=o(b_n)$, if $\lim_{n\rightarrow \infty}a_n/b_n=0$, $a_n=O(b_n)$ (or $a_n \lesssim b_n$) if there is a positive constant $M$ independent with $n$, such that $|a_n|\leq M|b_n|$ for all $n$, and $a_n\asymp b_n$ if there are two positive constants $M_1$ and $M_2$ independent with $n$, such that $M_1|b_n|\leq |a_n|\leq M_2|b_n|$. We
use the standard $O_p(\cdot)$ notation for stochastic boundedness in probability. We use $L_{N\times J}^{2}[0,1] = \{(f_{ij}(t))_{N\times J}: 0\leq t\leq 1, \ \Vert f_{ij}\Vert_{L^2 [0, 1]} <\infty \ \text{for all} \ i, j \}$ to denote the space of $N\times 
J$-dimensional square integrable matrix-valued functions on $[0,1]$.

\section{Proposed Method}\label{section2}

\subsection{Model}

Consider multivariate recurrent event data from $N$ independent observation units on a standardized time interval $[0, 1]$. The data from observation unit $i$ can be described by $\mathbf{Y}_i (t) = \left( Y_{i1} (t), \ldots, Y_{iJ} (t) \right)^\top,$
where  $J$ is the number of event types, and each component $Y_{ij} (t)$ is a right-continuous counting process.  We introduce a factor model to reduce the dimensionality of data and further identify and interpret the factors underlying the observed processes. 
A marginal modelling approach \citep{lin2000semiparametric} is adopted to 
accommodate a more flexible conditional dependence
structure among the processes. This approach specifies the mean rate function for each event type $j$ as
\begin{equation}\label{model}
	E ( \mbox{d} Y_{ij} (t)) = f(X_{ij}(t)) \mbox{d} t,
\end{equation}
where $f: \mathbb R \rightarrow [0, \infty)$ is a pre-specified link function, and $X_{ij}(t)$ is an unknown function with a low-dimensional structure. Specifically, $X_{ij}(t)$ is parameterized as 
\begin{equation}\label{eq:factorisation}
X_{ij}(t) = \sum_{k=1}^r a_{jk} \theta_{ik}(t),
\end{equation}
where
$\theta_{ik}(\cdot)$'s are functions that may be interpreted as unobserved dynamic factors, 
$a_{jk}$'s are referred to as the loading parameters, and $r$ is the number of factors. We denote 
$\boldsymbol\Theta(t) = (\theta_{ik}(t))_{N\times r}$, $\mathbf A = (a_{jk})_{J\times r}$, and $\mathbf X(t) = (X_{ij}(t))_{N\times J}$. 
Rewriting Eq.~\eqref{eq:factorisation} in matrix form, we have $\mathbf X(t) = \boldsymbol\Theta(t) \mathbf{A}^\top$, where both $\boldsymbol\Theta(\cdot)$ and $\mathbf A$ are to be estimated. 

\begin{remark}[Link function]
The link function $f$ is needed to ensure the mean rate function is non-negative. For simplicity, we let $f$ be known and set $f(x) = \exp(x)$ in the numerical analysis. Extensions to the setting with unknown $f$ can be done by estimating the link function nonparametrically via, e.g., non-negative basis function approximations.

\end{remark}

\begin{remark}[Intensity formulation]\label{rmk:intensity}
Alternative to the mean rate specification \eqref{model}, one can model the intensity functions as
$E \left( \mathrm{d} Y_{ij} (t) \vert \mathcal{F}_t \right)= f(X_{ij}(t)) \mbox{d} t, $  
for a suitable right-continuous filtration $\{\mathcal{F}_t \}_{0 \le t \le 1}$ that leads to a martingale structure \citep{andersen1993statistical}. As pointed out by \cite{lin2000semiparametric}, the mean rate specification \eqref{model} is more versatile than the intensity specification in that it allows arbitrary dependence structures among recurrent events.  For example, when analyzing customers’ purchasing behaviour, multiple merchandise items may be purchased simultaneously and, thus, have the same event time. When analyzing users' dynamics on a social medium, multiple words or phrases often appear in the same post and, thus, have the same event time. The intensity specification implies independent increments, i.e., $\mathrm{d}Y_{ij}(t)$, $j = 1,\ldots, J$, are conditionally independent given $\mathcal F_t$. As a result,  $\mathrm{d} Y_{ij} (t) = 1$ can only occur to one of the $J$ event types for a specific time $t$, which does not align with the real-world situations mentioned previously. The mean rate specification does not have this restriction.

\end{remark}

\begin{remark}[Connection with factor models] \label{rmk:poisson}
The proposed model is closely related to the Poisson factor model for count data. Consider a special case of \eqref{model} when ${Y}_{ij} (t)$, $j = 1,\ldots, J$, are  
independent Poisson processes with static factors $\theta_{ik}$, i.e.,
$f(X_{ij}(t)) = f\left(\sum_{k=1}^r a_{jk} \theta_{ik}\right).$
In this case, the counts $\{ Y_{ij}(1): \ i = 1, \ldots, N, \ j = 1,\ldots, J \}$ are a sufficient statistic for the unknown parameters, with $Y_{ij}(1)$ following a Poisson distribution with rate $f\left(\sum_{k=1}^r a_{jk} \theta_{ik}\right)$. This model for count data is known as the Poisson factor model \citep{chen2020structured,wedel2001factor}, where $a_{jk}$'s are known as the loading parameters, and $\theta_{ik}$'s are interpreted as the unobserved factors. In this sense, the proposed model \eqref{model}-\eqref{eq:factorisation} can be viewed as an extension of the Poisson factor model. The Poisson factor model can be estimated by a constrained joint maximum likelihood estimator \citep{chen2020structured}, which is consistent and minimax rate optimal under suitable regularity conditions. 
Our model is also closely related to matrix factor models \citep{wang2019factor,he2023one} in that our data at each time can be viewed as a matrix. A major difference is that our model considers a continuous time domain, with observed data being very sparse at each time. In contrast, the matrix factor models assume a discrete time domain.

\end{remark}

\begin{remark}[Indeterminacy of $\boldsymbol\Theta(\cdot)$ and $\mathbf A$ and a rotated solution] \label{rmk:rotation}

Note that $\boldsymbol\Theta(\cdot)$ and $\mathbf A$ are not determined, in the sense that for any $r\times r$ invertible matrix $\mathbf Q$, the model remains unchanged if we replace the factors by $\boldsymbol\Theta(t) (\mathbf Q^{\top})^{-1}$ and replace the loadings by $\mathbf A \mathbf Q$. Similar indeterminacies also occur in other factor models \citep[see, e.g.,][]{bai2012statistical}. 
To interpret the factor structure, one must fix the transformation $\mathbf Q$, which may be done using an analytic rotation method \citep{browne2001overview,rohe2020vintage}. However, the current setting is slightly different from standard exploratory factor analysis settings, as factors here are functions of time $t$. To apply existing analytic rotation methods, we may first aggregate the factors by calculating $\bar{\boldsymbol{\Theta}} = \int_{0}^1 \boldsymbol{\Theta}(t) dt$, and then apply an analytic rotation method to $\bar{\boldsymbol{\Theta}}\mathbf{A}^\top$. In real data analysis in Section~\ref{sec:groc}, a varimax rotation method \citep{kaiser1958varimax,rohe2020vintage} is applied to fix the transformation.

\end{remark}

\begin{remark}[Time-varying loadings] 
The flexibility of the model can be further enhanced by letting the loading parameters be time-varying, i.e.,
$X_{ij}(t) = \sum_{k=1}^r a_{jk}(t) \theta_{ik}(t).$
However, this model is far less determined than the current model
as 
$\mathbf X(t) = \boldsymbol\Theta(t) (\mathbf A(t))^\top = \boldsymbol\Theta(t)  (\mathbf Q(t)^{\top})^{-1}  (\mathbf A(t)\mathbf Q(t))^\top$
for any $r\times r$ invertible matrix-valued function $\mathbf Q(t)$. Determining this transformation function $\mathbf Q(t)$ is more challenging than determining the time-independent transformation discussed in Remark~\ref{rmk:rotation}. Consequently, it is hard to identify and interpret the factor structure. In our grocery shopping application, each event type corresponds to a merchandise item, and each observation corresponds to a customer. In this context, the loading parameters can be viewed as a summary of item characteristics,
and the factors can be interpreted as a summary of customer preferences. Because item characteristics tend to be stable while customer preferences often vary over time, treating the loading parameters as static and the factors as dynamic is sensible.  Thus, the current paper focuses on the static loading and dynamic factor setting. 

\end{remark}

\subsection{Estimation}\label{section2.2}

We introduce a kernel-based approach for estimating the unknown parameters. Kernel smoothing borrows information from nearby time points because the observed events are very sparse at each single time point in the continuous time domain. 
Let $K(x)$ be a kernel function with sufficient smoothness, satisfying $K(x) \geq 0$, $K(-x) = K(x)$ and $\int_{-\infty}^\infty K(x) \mathrm{d} x  = 1$. For a smoothing bandwidth $h > 0$, we further define $K_h (x) = \left( 1/h \right) K \left( x/h \right)$. We consider the following kernel-smoothed pseudo-likelihood function:  
\begin{equation}\label{likelihood}
	\mathcal{L}_{h} \left(\mathbf{\Theta},\mathbf{A}\right) = \sum_{i=1}^{N} \sum_{j=1}^{J} \int_{h}^{1-h}\left( \frac{\int_{0}^{1} K_h (t-s) \mathrm{d} Y_{ij} (s)}{\int_{0}^{1} K_h (t-s) \mathrm{d} s} \log f \left( X_{ij}(t) \right) - f \left( X_{ij}(t) \right) \right)\mathrm{d}t,
\end{equation}
where $X_{ij}(t)$ is a function of $\boldsymbol{\Theta}$ and $\mathbf A$ as defined in \eqref{eq:factorisation}. 
We consider the parameter space
$\mathcal{G}= \big\{(\mathbf{\Theta},\mathbf{A}):\sup_{t\in [0,1]}\| \mathbf{\Theta} (t)\|_{2\rightarrow\infty}\leq M^{1/2}, \ \| \mathbf{A} \|_{2\rightarrow\infty}\leq M^{1/2} \big\},$
where $M>0$ is a prespecified constant.
We define $(\widehat{\mathbf{\Theta}},\widehat{\mathbf{A}})$ as a constrained maximizer of \eqref{likelihood},
\begin{equation}\label{estimator}
    (\widehat{\mathbf{\Theta}},\widehat{\mathbf{A}})  \in \argmax \mathcal{L}_{h} \left(\mathbf{\Theta},\mathbf{A}\right), s.t. (\mathbf{\Theta},\mathbf{A}) \in \mathcal{G}.
\end{equation}
Since the parameter space $\mathcal{G}$ is compact and $\mathcal{L}_{h} \left(\mathbf{\Theta},\mathbf{A}\right)$ is continuous with respect to the norm 
$\|(\mathbf{\Theta}, \mathbf{A})\| := \sup_{t\in [0,1]}\|\mathbf{\Theta}(t)\|_{2\rightarrow \infty}\vee \|\mathbf{A}\|_{2\rightarrow \infty},$
the existence of at least one solution is guaranteed. Therefore,   $(\widehat{\mathbf{\Theta}},\widehat{\mathbf{A}})$ is well-defined. 

\begin{remark}
The pseudo-likelihood \eqref{likelihood} ignores the possible dependence between event types that is allowed under the mean rate specification \eqref{model}. 
    If (\ref{model}) is replaced by an intensity  specification, i.e., $E ( \mathrm{d} Y_{ij} (t)|\mathcal{F}_t) = f( X_{ij} (t) ) \mathrm{d} t$, and let $h$ go to $0$, then (\ref{likelihood}) becomes the log-likelihood function for recurrent event time data \citep{cook2007statistical}.
\end{remark}

\begin{remark}\label{rmk:disc}
 In practice, we can only obtain an approximate solution to \eqref{estimator}, as the optimization involves infinite-dimensional functions. When the resolution of the approximation is carefully chosen, this approximate solution can achieve the same error rate as that of 
 \eqref{estimator}.
 More specifically, the approximate solution is obtained by a two-step procedure. In the first step, we discretize the interval
 $[h,1-h]$ by equally spaced grid points $t_1,\ldots,t_q$ and solve 
     \begin{equation}\label{likelihood_computation2}
    \begin{aligned}
(\widetilde{\boldsymbol{\Theta}}(t_1), \ldots, \widetilde{\boldsymbol{\Theta}}(t_q), \widetilde{\mathbf A}) \in  \argmax~ & \mathcal{L}_{h} \left( \boldsymbol{\Theta}(t_1),\ldots, \boldsymbol{\Theta}(t_q), \mathbf A \right) \\&\mbox{s.t.} ~~\|\boldsymbol{\Theta}(t_l)\|_{2\rightarrow \infty}\leq M, \ \|\mathbf A\|_{2\rightarrow \infty}\leq M, \ l =1,\ldots, q,
\end{aligned}
\end{equation}
where  
  \begin{equation}\label{likelihood_computation}
    \begin{aligned}
	&
    \mathcal{L}_{h} \left( \boldsymbol{\Theta}(t_1), \ldots, \boldsymbol{\Theta}(t_q), \mathbf A \right)  = \sum_{i=1}^{N} \sum_{j=1}^{J} \sum_{l=1}^q \left( \frac{\int_{0}^{1} K_h (t_l-s) \mathrm{d} Y_{ij} (s)}{\int_{0}^{1} K_h (t_l-s) \mathrm{d} s} \log f \left( X_{ij}(t_l) \right) - f \left( X_{ij}(t_l) \right) \right)
 \end{aligned}
\end{equation}
is the pseudo-likelihood defined on the grid points. In the second step, based on $\widetilde{\boldsymbol{\Theta}}$ we find an approximation to $\widehat{\boldsymbol{\Theta}}$ on $[h, 1-h]$ by interpolation, such as a linear interpolation.  By choosing the number of grid points to be inversely proportional to the error rate of \eqref{estimator}, the approximate solution is guaranteed to achieve the same error rate. {An efficient projected gradient descent algorithm is developed to obtain the approximate solution. This algorithm handles the constraints based on the two-to-infinity norms with an easy-to-compute projection operator. The details of the algorithm and the properties of its convergence can be found in the Appendix.}

\end{remark}

\subsection{Determining the Number of Factors}\label{subsec:numfact}
In practice, the number of factors $r$ is unknown and, thus, needs to be chosen. We propose an information criterion to choose $r$. To avoid ambiguity, we use $(\widehat{\boldsymbol{\Theta}}^{(r)}, \widehat{\mathbf A}^{(r)}) $ to denote the estimator \eqref{estimator} to emphasize its dependence on the number of factors. The proposed information criterion takes the form
 $   \text{IC}(r)=-2\mathcal{L}_{h} ( \widehat{\boldsymbol{\Theta}}^{(r)}, \widehat{\mathbf A}^{(r)}) +v(N,J,r),$
where $v(N,J,r)$ is a penalty term that increases with $N$, $J$ and $r$. The conditions on $v(N, J, r)$ for consistent model selection will be determined in Section~\ref{sec:model_select}. Given $v(N, J, r)$, we choose the number of factors by 
  $  \hat{r}=\operatorname{argmin}_{r\in \mathcal{R}}\text{IC}(r),$
where $\mathcal{R}\subset \mathbb{N}$ is a candidate set for the number of factors. As shown in Section~\ref{section3}, under suitable conditions on the penalty term and
additional regularity conditions, $\hat r$ consistently selects the number of factors. In our implementation, the pseudo-likelihood $\mathcal{L}_{h} ( \widehat{\boldsymbol{\Theta}}^{(r)}, \widehat{\mathbf A}^{(r)})$ in $\text{IC}(r)$ is replaced by its discretized version as discussed in Remark~\ref{rmk:disc}.

\section{Theoretical properties}\label{section3}

\subsection{Consistency and Rate of Convergence}

We present our main theoretical results about the estimator proposed in Section~\ref{section2.2}. When deriving these results, the number of factors is assumed to be correctly specified. To avoid ambiguity of notation, we let $\boldsymbol{\Theta}^*(\cdot)$ and $\mathbf A^*$ denote the true parameters, and further 
let $\mathbf{X}^*(t) = {\boldsymbol{\Theta}}^*(t) (\mathbf{A}^*)^\top$. To avoid the complications brought by the indeterminacy of $\boldsymbol{\Theta}(\cdot)$ and $\mathbf A$, we focus on evaluating the estimation accuracy of $\widehat{\mathbf X}(t) := \widehat{\boldsymbol{\Theta}}(t) \widehat{\mathbf A}^\top$. 
Let $m\geq 1$ be a positive integer. 
We assume the following regularity conditions. 
\begin{condition}\label{cond:1}
The link function $f$ is $m$ times continuously differentiable. Moreover, for $x\in[-M,M]$, $f(x)$ and $f^{\prime}(x)$ are bounded away from 0.
\end{condition}
\begin{condition}\label{cond:2}
	The matrix function $\mathbf{X}^{\ast}(\cdot)\in\mathcal{G}$ is $m$ times continuously differentiable on $[0, 1]$.
\end{condition}
\begin{condition}\label{cond:3}
	The kernel function $K$ satisfies: (i) it is a Lipschitz function of order $m$ with compact support on $[-1,1]$; (ii) it attains its unique maximum at $x = 0$; (iii) it is twice continuously differentiable in a neighbourhood of $0$ and $(\log K)'' (0) < 0$.
\end{condition}
\begin{condition}\label{cond:4}

        (i) The multivariate point processes
         $   \{ Y_{1j} (\cdot) : j \in [J] \}, \ldots, \{ Y_{Nj} (\cdot) : j \in [J] \}$
        are independent.
	(ii) There exists $\lambda > 0$, such that for any $i, j, k$, and $0 < s_1 < \ldots < s_k <1$,
    		$\mathbb{E} \left[ \mathrm{d} Y_{ij} (s_1) \cdots \mathrm{d} Y_{ij} (s_k) \right] \leq \lambda^k \mathrm{d} s_1 \cdots \mathrm{d} s_k.$
    (iii) For any $i$, there exists a partition $B_{i,1},\ldots,B_{i,W_i}$ of $\{1,\ldots,J\}$ and a function $\phi (J) = o(J)$ satisfying $\max_{i=1,\ldots,N}\max_{k=1,\ldots,W_i}|B_{i,k}|\leq \phi(J)$, such that
	$		\{Y_{ij}(\cdot):j\in B_{i,1}\},\ldots,\{Y_{ij}(\cdot):j\in B_{i,W_i}\}$
		are independent. 
\end{condition}
\begin{remark}
	In Condition \ref{cond:1}, we assume that both $f (x)$ and $f' (x)$ are non-zero in $\left[-M,M\right]$. This requirement is rather mild. In particular, this requirement is automatically satisfied when $f (x)$ is strictly positive and monotone increasing (or decreasing), including when $f(x) = \exp(x)$. 
\end{remark}
\begin{remark}
Condition \ref{cond:2} requires the true model to lie in the same parameter space $\mathcal G$ as the one used to regularize our estimator \eqref{estimator}. This requirement, together with Condition~\ref{cond:1}, implies that the mean rate function  
$f(X_{ij}(t))$ is nonnegative and uniformly bounded away from zero for all $i$, $j$, and $t$. In the context of our grocery shopping application, it means that the proposed method is most suitable for analyzing customers who shop frequently and items that are frequently purchased. 
The size of this parameter space plays a key role in our theory about model estimation and selection.
Condition \ref{cond:2} also imposes a smoothness requirement that is standard in nonparametric regression models \citep{gyorfi2002distribution}.

\end{remark}
\begin{remark}
    Conditions \ref{cond:3}(i) and \ref{cond:3}(ii) are standard assumptions for kernel functions. Condition~\ref{cond:3}(iii) assumes log-concavity at the maximum point.
\end{remark}
\begin{remark}
    Condition \ref{cond:4}(i) assumes that the observed data $\mathbf{Y}_i (t) = \left( Y_{i1} (t), \ldots, Y_{iJ} (t) \right)^\top$ are independent across observational units $i = 1, \ldots, N$.
    Condition \ref{cond:4}(ii) assumes non-degeneracy of the counting process. Condition \ref{cond:4}(iii) assumes a blockwise independent structure, which substantially relaxes the independence assumption among different event types. Note that we restrict the maximum block size rather than assuming the blockwise independent structure to be the same across observations $i=1,\ldots, N$. 
    Condition \ref{cond:4}(iii) can be further relaxed. Instead of requiring the processes in all the blocks to be independent, our theoretical results in Theorems \ref{thm1} and \ref{thm3} are still valid if only
	$		\{Y_{ij}(\cdot):j\in B_{i,1}\},\ldots,\{Y_{ij}(\cdot):j\in B_{i,W_i-1}\}$
are independent. This relaxed condition allows the processes in $B_{i,W_i}$
    to be dependent on all the rest of the processes.  We note that Condition~\ref{cond:4}(iii) can be seen as an identifiability condition for the low-rank structure $\mathbf X(t)$ to be identifiable, as it excludes the noise in the data to have a low-rank structure through the independence or blockwise independence assumption. The assumption of blockwise independence is 
    similar in spirit to
    the  weakly dependent error assumption adopted in approximate factor models \citep[e.g.,][]{chamberlain1983arbitrage,bai2023approximate} and may be seen as a version of 
the weakly dependent error assumption for factor models of high-dimensional recurrent event data. 
    
\end{remark}
\begin{theorem}[upper bound]\label{thm1}
	Under Conditions \ref{cond:1}-\ref{cond:4}:
    \begin{enumerate}[(i)]
        \item (Dependent case) Assume $J=O(N)$ and recall $\phi(J)$ from Condition \ref{cond:4}(iii). For any $\delta > 0$, choose $h \asymp \left( J/\phi(J)\right)^{-1/(2m+1)+\delta/m}$. Then, as $N$ and $J$ go to infinity, we have
    \begin{equation*}
        \frac{1}{NJ} \int_{h}^{1-h}\left\Vert \widehat{\mathbf{X}} (t) - \mathbf{X}^{\ast} (t) \right\Vert_F^2 \mathrm{d}t = O_{p} \left( \left( J/\phi(J)\right)^{-m/(2m+1)+\delta} \right).
    \end{equation*}
        \item (Independent case) Assume that $\phi(J)=1$ in Condition \ref{cond:4}(iii) and $\log (N\vee J) \ll N\land J$. For any $\delta > 0$, choose $h \asymp ((N\land J)/(\log^2 (N\land J)))^{-1/(2m+1)}$. Then, as $N$ and $J$ go to infinity, we have
	\begin{align*}
		\frac{1}{NJ} \int_{h}^{1-h}\left\Vert \widehat{\mathbf{X}} (t) - \mathbf{X}^{\ast} (t) \right\Vert_F^2 \mathrm{d}t = O_p \left( \left( N\land J\right)^{ -2m/(2m+1) + \delta} \right).
	\end{align*}
    \end{enumerate}
\end{theorem}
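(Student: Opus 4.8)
The plan is a standard $M$-estimation argument adapted to the kernel-smoothed, low-rank, blockwise-dependent setting: a basic inequality from optimality, a bias--variance decomposition of $\mathcal L_h$, a local-curvature lower bound from Condition~\ref{cond:1}, and a uniform control of the stochastic part via concentration of smoothed counting-process increments combined with Condition~\ref{cond:4}. Throughout, write $\bar N_{ij}(t)=\int_0^1 K_h(t-s)\,\mathrm dY_{ij}(s)$; note that on $[h,1-h]$ the denominator in \eqref{likelihood} equals $\int_{-1}^1 K(u)\,\mathrm du=1$. Since $(\widehat{\boldsymbol\Theta},\widehat{\mathbf A})$ maximizes $\mathcal L_h$ over $\mathcal G$ and $(\boldsymbol\Theta^*,\mathbf A^*)\in\mathcal G$ by Condition~\ref{cond:2}, we have $\mathcal L_h(\widehat{\boldsymbol\Theta},\widehat{\mathbf A})\ge\mathcal L_h(\boldsymbol\Theta^*,\mathbf A^*)$. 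Split $\mathcal L_h=S+D$, where $S(\mathbf X)=\sum_{i,j}\int_h^{1-h}\bigl(\bar N_{ij}(t)-\mathbb E\bar N_{ij}(t)\bigr)\log f(X_{ij}(t))\,\mathrm dt$ is the stochastic part and $D(\mathbf X)=\sum_{i,j}\int_h^{1-h}\bigl(\mathbb E\bar N_{ij}(t)\log f(X_{ij}(t))-f(X_{ij}(t))\bigr)\,\mathrm dt$ the deterministic part. Let $\widetilde X_{ij}(t)=f^{-1}(\mathbb E\bar N_{ij}(t))$ be the pseudo-true target; using the smoothness of $\mathbf X^{*}$ (Condition~\ref{cond:2}) and the kernel conditions (Condition~\ref{cond:3}), a Taylor expansion of $f\circ X^{*}_{ij}$ gives the smoothing-bias bound $\sup_{i,j,t}|\widetilde X_{ij}(t)-X^{*}_{ij}(t)|=O(h^{m})$, hence $\tfrac1{NJ}\int_h^{1-h}\|\widetilde{\mathbf X}(t)-\mathbf X^{*}(t)\|_F^2\,\mathrm dt=O(h^{2m})$.

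\emph{Curvature.} For fixed $\mu$, the map $x\mapsto\mu\log f(x)-f(x)$ is maximized at $f^{-1}(\mu)$ with second derivative $-f'(x)^2/f(x)$, which by Condition~\ref{cond:1} is bounded away from $0$ on $[-M,M]$; moreover every admissible $X_{ij}(t)$ lies in $[-M,M]$ because of the $\|\cdot\|_{2\rightarrow\infty}$ constraints. Hence $D$ is strongly concave in $\mathbf X$ entrywise, and since $\widetilde{\mathbf X}$ maximizes $D$ (with zero gradient there), a second-order expansion yields
\[
D(\mathbf X^{*})-D(\widehat{\mathbf X})\;\ge\;c\sum_{i,j}\int_h^{1-h}\bigl(\widehat X_{ij}(t)-\widetilde X_{ij}(t)\bigr)^2\mathrm dt\;-\;C\,NJ\,h^{2m},
\]
the subtracted term bounding the negligible gap $D(\widetilde{\mathbf X})-D(\mathbf X^{*})\lesssim NJh^{2m}$. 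Combined with the basic inequality, $c\sum_{i,j}\int_h^{1-h}(\widehat X_{ij}-\widetilde X_{ij})^2\,\mathrm dt\le S(\widehat{\mathbf X})-S(\mathbf X^{*})+C\,NJh^{2m}$.

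\emph{Uniform control of $S$.} The crux is to bound $S(\widehat{\mathbf X})-S(\mathbf X^{*})$ by the target rate times $\|\widehat{\mathbf X}-\mathbf X^{*}\|_{L^2}$. (a) For fixed $(i,j,t)$, $\bar N_{ij}(t)-\mathbb E\bar N_{ij}(t)=\int K_h(t-s)\,\mathrm d(Y_{ij}-\mathbb E Y_{ij})(s)$; expanding its moment generating function and invoking the factorial-moment bound of Condition~\ref{cond:4}(ii) with $\|K_h\|_\infty\asymp 1/h$ and $\|K_h\|_{L^2}^2\asymp 1/h$ shows it is sub-exponential with variance proxy $\lesssim\lambda/h$ and scale $\lesssim 1/h$. (b) For fixed $t$ (or a grid point), we aggregate $\sum_{i,j}(\bar N_{ij}(t)-\mathbb E\bar N_{ij}(t))\,g_{ij}$ over $g$ in the relevant increment class of $\log f(\boldsymbol\Theta(t)\mathbf A^\top)$: independence across $i$ (Condition~\ref{cond:4}(i)) plus the partition into independent blocks of size $\le\phi(J)$ within each $i$ (Condition~\ref{cond:4}(iii)) makes $\sum_j$ a sum of independent block-terms, so a Bernstein bound produces an effective sample size $\asymp NJ/\phi(J)$ in the dependent case and $\asymp N\wedge J$ in the independent case, the latter by additionally using the rank-$r$ structure through a matrix-Bernstein argument, at the cost of logarithmic factors (whence the side condition $\log(N\vee J)\ll N\wedge J$). (c) For uniformity, discretize $[h,1-h]$ on the $O(1/h)$ grid of Remark~\ref{rmk:disc} (everything is Lipschitz in $t$ by Conditions~\ref{cond:2}--\ref{cond:3}), take an $\varepsilon$-net of $\mathcal G$ whose $\|\cdot\|_{2\rightarrow\infty}$-entropy is $\lesssim (N+J)r\log(1/\varepsilon)$ per grid point, union-bound over grid$\times$net, and pass back to all of $\mathcal G$ by a Lipschitz/chaining step. (d) A peeling argument over shells $\{2^{k}\varepsilon_{N,J}\le(NJ)^{-1/2}\|\mathbf X-\mathbf X^{*}\|_{L^2}<2^{k+1}\varepsilon_{N,J}\}$ converts the global supremum into the bound $S(\mathbf X)-S(\mathbf X^{*})\lesssim\sqrt{NJ}\,\rho_{N,J}\,\|\mathbf X-\mathbf X^{*}\|_{L^2}$ with $\rho_{N,J}^2\asymp\phi(J)/(Jh)$ (resp.\ $\log^2(N\wedge J)/((N\wedge J)h)$).

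\emph{Conclusion and bandwidth choice.} Inserting Step~(d) into the curvature inequality and solving the resulting quadratic inequality for $D:=\bigl((NJ)^{-1}\int_h^{1-h}\|\widehat{\mathbf X}-\widetilde{\mathbf X}\|_F^2\bigr)^{1/2}$ gives $D^2=O_p(\rho_{N,J}^2+h^{2m})$; adding the squared smoothing bias $O(h^{2m})$ yields $\tfrac1{NJ}\int_h^{1-h}\|\widehat{\mathbf X}-\mathbf X^{*}\|_F^2=O_p\bigl(\phi(J)/(Jh)+h^{2m}\bigr)$ (resp.\ $O_p\bigl(\log^2(N\wedge J)/((N\wedge J)h)+h^{2m}\bigr)$). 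Balancing the two terms gives $h\asymp(J/\phi(J))^{-1/(2m+1)}$ (resp.\ $h\asymp((N\wedge J)/\log^2(N\wedge J))^{-1/(2m+1)}$) and the stated rates; the extra $\delta$ — appearing in the inflated bandwidth $h\asymp(J/\phi(J))^{-1/(2m+1)+\delta/m}$ in part (i) and in the final exponent in both parts — is slack absorbing the polylogarithmic factors and the constants from the covering-number/chaining bounds.

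\emph{Main obstacle.} The hard part is Step~(b)--(c): extracting the correct effective sample size ($NJ/\phi(J)$, resp.\ $N\wedge J$) from the blockwise-dependence structure of Condition~\ref{cond:4}(iii) while simultaneously maintaining uniformity over the infinite-dimensional, rank-$r$-constrained class $\mathcal G$ and over the continuum of time points, and tracking how the $1/h$ noise inflation and the heavy overlap of neighbouring smoothing windows propagate through the concentration and chaining estimates. The independent case is more delicate still, since seeing the $N\wedge J$ (rather than $NJ$) scaling requires a matrix-concentration argument that exploits the low-rank structure and is responsible for the logarithmic factors and the side condition $\log(N\vee J)\ll N\wedge J$.
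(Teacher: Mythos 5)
Your overall architecture --- basic inequality from optimality, bias/variance split, local curvature from Condition~\ref{cond:1} (the paper's Lemma~1), smoothing bias of order $h^m$ from Conditions~\ref{cond:2}--\ref{cond:3} (the paper's Lemma~2), a crude global deviation bound via a max-norm covering of the rank-$r$ class, and then a localization/peeling step that exploits the low-rank structure through matrix concentration --- is essentially the paper's proof of part~(ii). The paper implements your step~(d) as an iterative bootstrap (sharpening a presumed exponent $\alpha$ to roughly $\alpha/2-m/(2m+1)$ and repeating), using Rademacher symmetrization, the Ledoux--Talagrand contraction principle, and Schatten-norm moment bounds for $\left\Vert \mathbf{E}\circ\mathbf{M}\right\Vert_{\mathcal{S}_{2p}}$ paired with $\left\Vert \mathbf{X}'-\mathbf{X}\right\Vert_{\mathcal{S}_{2q}}\lesssim r^{1/2q-1/2}\left\Vert \mathbf{X}'-\mathbf{X}\right\Vert_F$, but that is the same idea as your matrix-Bernstein step, and your accounting of the independent-case rate is consistent with the theorem.

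The genuine gap is in part~(i). You assert the localized bound $S(\mathbf{X})-S(\mathbf{X}^{*})\lesssim\sqrt{NJ}\,\rho_{N,J}\,\Vert\mathbf{X}-\mathbf{X}^{*}\Vert_{L^2}$ with $\rho_{N,J}^2\asymp\phi(J)/(Jh)$ in the dependent case as well; balancing $\phi(J)/(Jh)+h^{2m}$ would then yield $(J/\phi(J))^{-2m/(2m+1)}$, which is strictly stronger than the stated rate $(J/\phi(J))^{-m/(2m+1)+\delta}$ --- so your bound does not, as you claim, ``balance to the stated rate,'' and the mismatch signals that the bound is not actually available. The localization gain in the independent case comes precisely from the random-matrix concentration, which requires independent entries and does not apply under the individual-specific blockwise dependence of Condition~\ref{cond:4}(iii); a scalar Bernstein bound over an $\varepsilon$-net of a Frobenius shell does not substitute for it, because the max-norm entropy of the shell is no smaller than that of all of $\mathcal{G}$ and the sub-exponential scale $\phi(J)/h$ of the block sums leaves you outside the sub-Gaussian regime at the target radius. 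The paper's proof of part~(i) therefore abandons localization entirely: it bounds $\sup_{\mathbf{X}\in G}\vert\mathcal{L}_{t,h}(\mathbf{X})-\mathbb{E}\mathcal{L}_{t,h}(\mathbf{X})\vert$ globally at level $NJh^m$, playing a covering number $\exp(r(N+J)\log(1/h))$ against a blockwise MGF bound $\exp(CNJ\phi(J)x^2/h)$ --- which is exactly why the bandwidth must be inflated to $(J/\phi(J))^{-1/(2m+1)+\delta/m}$ --- and the resulting rate is only the bias level $h^m$, not $h^{2m}$. Whether the faster rate holds under general Condition~\ref{cond:4}(iii) is explicitly left open (Remark~\ref{rmk:betterrate} and the Discussion). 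A secondary, fixable issue: your $O(1/h)$ time grid is too coarse for the uniformity-in-$t$ step, since $t\mapsto\mathcal{L}_{t,h}(\mathbf{X})$ has Lipschitz constant of order $NJ/h^2$ while the target deviation is $NJh^m$; the paper uses a grid of mesh $O(h^{m+4})$.
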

To show the near optimality of the proposed estimator, we then derive the minimax lower bound under the independent Poisson process setting in the following theorem.
\begin{theorem}[lower bound]\label{thm2}
	Assume that the $Y_{ij}$'s are independent Poisson point processes and $\phi(J)=1$ in Condition \ref{cond:4}(iii). Further assume that $\sup_{ \left\vert x \right\vert \leq M } f'(x)^2 / f(x) < \infty$, then there is an absolute constant $C>0$, so that for any estimator $\widehat{\mathbf{X}} (t) \in L_{N\times J}^2 [0, 1]$, there exists an $\mathbf{X}^{\ast} (t)$ satisfying Condition \ref{cond:2} such that
	\begin{equation*}
		\operatorname{pr} \left( \frac{1}{NJ} \int_{0}^{1} \left\Vert \widehat{\mathbf{X}} (t) - \mathbf{X}^{\ast} (t) \right\Vert_F^2 \mathrm{d} t \geq C \left( N\land J\right)^{-2m/(2m+1)} \right)  \geq \frac{1}{2}.
	\end{equation*}
\end{theorem}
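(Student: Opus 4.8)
The plan is to establish Theorem~\ref{thm2} by the classical information-theoretic route for nonparametric minimax lower bounds: construct a finite family of admissible signals $\{\mathbf X^{(\omega)}(\cdot)\}_{\omega\in\mathcal S}\subset L_{N\times J}^2[0,1]$, each of rank at most $r$, $m$-times continuously differentiable, and lying in $\mathcal G$ (so each satisfies Condition~\ref{cond:2}); show that its members are pairwise separated by $\gtrsim (N\land J)^{-2m/(2m+1)}$ in the normalized squared Frobenius metric $\tfrac{1}{NJ}\int_0^1\|\cdot\|_F^2\,dt$, while the corresponding Poisson data laws satisfy $\max_{\omega\ne\omega'}\mathrm{KL}(P_{\mathbf X^{(\omega)}}\|P_{\mathbf X^{(\omega')}})\lesssim\log|\mathcal S|$; and then invoke the standard reduction of estimation to multiple testing together with Fano's inequality, which yields a lower bound holding with probability at least $1/2$. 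Morally, the construction embeds the classical problem of estimating $m$-smooth functions from Poisson counts into our model, the only nonstandard feature being that the rank-$r$ constraint pins the effective sample size at $N\land J$ rather than $NJ$.

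For the construction I would perturb along a factor component. Fix a $C^\infty$ bump $\phi$ supported in $[-1,1]$, a bandwidth $h>0$ to be chosen, grid points $t_\ell\asymp\ell h$ for $\ell=1,\dots,L$ with $L\asymp 1/h$, and set $\psi_\ell(t)=h^m\phi((t-t_\ell)/h)$; these have disjoint supports and $\|\psi_\ell\|_{L^2[0,1]}^2\asymp h^{2m+1}$. For a binary array $\omega=(\omega_{i,\ell})$ let $\theta^{(\omega)}_{i,1}(t)=c_0+\sum_\ell\omega_{i,\ell}\psi_\ell(t)$ with a fixed admissible constant level $c_0$, keep all other factor coordinates and the entire loading matrix $\mathbf A$ equal to fixed admissible constants (with $\sum_j a_{j1}^2\asymp J$), and put $\mathbf X^{(\omega)}(t)=\boldsymbol\Theta^{(\omega)}(t)\mathbf A^\top$. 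Because the perturbation lies in the span of a single loading column, $\mathrm{rank}\,\mathbf X^{(\omega)}(t)\le r$; for $h$ small the $O(h^m)$ perturbation keeps $\|\boldsymbol\Theta^{(\omega)}(t)\|_{2\rightarrow\infty}\le M^{1/2}$, $\|\mathbf A\|_{2\rightarrow\infty}\le M^{1/2}$ and $\mathbf X^{(\omega)}\in C^m$, and the intensities $f(X^{(\omega)}_{ij}(t))$ stay strictly positive, so Condition~\ref{cond:2} and the validity of the Poisson model hold. A Varshamov--Gilbert argument on the relevant Boolean cube extracts a subfamily $\mathcal S$ with $\log|\mathcal S|\asymp|\text{cube}|$ whose members have pairwise Hamming distance a constant fraction of the cube dimension.

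The two quantitative checks are then the following. \emph{(Separation.)} Distinct members differ only in the first factor, so $\mathbf X^{(\omega)}(t)-\mathbf X^{(\omega')}(t)=\big(\theta^{(\omega)}_{\cdot,1}(t)-\theta^{(\omega')}_{\cdot,1}(t)\big)(a_{\cdot1})^\top$, and orthogonality of the $\psi_\ell$ together with the Hamming separation and $\sum_j a_{j1}^2\asymp J$ give $\tfrac{1}{NJ}\int_0^1\|\mathbf X^{(\omega)}(t)-\mathbf X^{(\omega')}(t)\|_F^2\,dt\asymp h^{2m}$. \emph{(Information.)} For two Poisson point processes on $[0,1]$ with intensities $\lambda_1=f(X_1)$ and $\lambda_2=f(X_2)$, $\mathrm{KL}=\int_0^1\big(\lambda_1\log(\lambda_1/\lambda_2)-\lambda_1+\lambda_2\big)\,dt$, which---since $X_1,X_2$ lie in a fixed bounded set and $X_1-X_2=O(h^m)$---is $\lesssim\int_0^1(\lambda_1-\lambda_2)^2/\lambda_2\,dt\lesssim\sup_{|x|\le M}\big(f'(x)^2/f(x)\big)\int_0^1|X_1(t)-X_2(t)|^2\,dt$; this is precisely where the hypothesis $\sup_{|x|\le M}f'(x)^2/f(x)<\infty$ enters. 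Summing the per-process divergences over the independent coordinates and using the bump structure, the total divergence between two members is of order $(N\land J)\,h^{2m+1}$ times their Hamming distance; taking $h\asymp(N\land J)^{-1/(2m+1)}$ makes this comparable to $\log|\mathcal S|\asymp N/h$, so Fano applies and forces any estimator's loss to be $\gtrsim h^{2m}\asymp(N\land J)^{-2m/(2m+1)}$ on some member, with probability at least $1/2$.

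The step I expect to be the main obstacle is precisely showing that the aggregate KL divergence scales like $N\land J$ rather than like $NJ$---equivalently, that the effective sample size matches the denoising factor $\tfrac{1}{N}+\tfrac{1}{J}\asymp(N\land J)^{-1}$ behind the upper bound of Theorem~\ref{thm1}. This is where the rank constraint must be exploited in an essential way: one must choose which factor rows are perturbed and how the signal sits in the column space of $\mathbf A$ so that each flipped bit is ``seen'' by only $O(N\land J)$ of the $NJ$ coordinate processes while keeping $|\mathcal S|$ exponentially large and the separation of order $h^{2m}$, the case $N<J$ being the delicate one since a factor perturbation naively propagates to all $J$ event types of a unit. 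A routine but necessary accompaniment is the bookkeeping that $c_0$ and the amplitude $h^m$ can be chosen uniformly in $(N,J)$ so that the construction stays inside $\mathcal G\cap C^m$ and the intensities stay bounded away from zero, legitimizing both the KL formula and the reduction to Condition~\ref{cond:2}.
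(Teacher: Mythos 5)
Your overall strategy -- a Varshamov--Gilbert packing, the Kullback--Leibler bound $\mathrm{KL}\lesssim\sup_{|x|\le M}(f'(x)^2/f(x))\int|X_1-X_2|^2$ for Poisson processes, and Fano's inequality -- is exactly the skeleton of the paper's proof. The gap is in the packing-set construction, and it is the one you yourself flag at the end without resolving. In your construction each hypercube bit $(i,\ell)$ perturbs $\theta_{i,1}$ and is broadcast through a dense loading column with $\sum_j a_{j1}^2\asymp J$, so each bit is ``seen'' by all $J$ processes of unit $i$: the per-bit KL contribution is $\asymp Jh^{2m+1}$, while the cube dimension is only $\asymp N/h$. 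Fano then forces $Jh^{2m+1}\lesssim 1$, i.e.\ $h\lesssim J^{-1/(2m+1)}$, and the resulting separation $h^{2m}$ is only $(N\vee J)^{-2m/(2m+1)}$ when $N<J$ -- strictly weaker than the claimed $(N\wedge J)^{-2m/(2m+1)}$. Your suggested remedy (making the signal sit sparsely in the column space of $\mathbf A$) does not help: shrinking $\sum_j a_{j1}^2$ reduces the per-bit KL and the per-bit Frobenius separation by the same factor, so the normalized loss $\frac{1}{NJ}\int\|\cdot\|_F^2$ drops below $h^{2m}$ and nothing is gained. The statement ``total divergence is of order $(N\wedge J)h^{2m+1}$ times the Hamming distance'' is therefore not established by anything in the proposal; it is precisely the missing step.

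The paper closes this by letting the perturbation vary with the event type $j$ as well as with the unit: it works with the hypercube $\{0,1\}^{r\times J\times n}$, defines the first $r\times J$ block of $\mathbf X^{\omega}(t)$ entrywise by $X^{\omega}_{ij}(t)=\sum_{k}\omega^{k}_{ij}\,g_k(t)$ with disjointly supported bumps $g_k$ of amplitude $n^{-m}$, and then replicates these $r$ pattern rows cyclically down all $N$ rows to enforce rank at most $r$. Now the cube has dimension $nrJ$ while each bit is observed by only $\lceil N/r\rceil$ of the $NJ$ processes, so the total KL is $\lesssim NJ\,n^{-2m}$ against $\log|\mathcal S|\asymp nrJ$, Fano balances at $n\asymp(N/r)^{1/(2m+1)}$ (taking $N\le J$ without loss of generality), and the separation $n^{-2m}\asymp(N\wedge J)^{-2m/(2m+1)}$ comes out as claimed. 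In short: to get the $N\wedge J$ rate you must put the richness of the alternative hypotheses in the $j$-direction and the low-rank replication in the $i$-direction, rather than perturbing a shared factor and broadcasting it through a fixed loading vector. Without that idea the construction you describe proves only the weaker $(N\vee J)^{-2m/(2m+1)}$ bound in the regime $N<J$.
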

Hence, this information-theoretic lower bound matches the upper bound in Theorem \ref{thm1} only up to an arbitrarily small exponent under the independence assumption, which implies the near minimax optimality of our estimator.
\begin{remark}\label{rmk:betterrate}
    {When the blockwise independent structure is the same across observations (i.e.,  $W_i = W$, and $B_{i,w} = B_w$, $i=1,\ldots, N$, $w = 1, \ldots, W$), we can sharpen the rate in Theorem 1(i) from $-m/(2m+1)+\delta$ to $-2m/(2m+1)+\delta$ and establish its near minimax optimality.}
\end{remark}
\begin{remark}
{Due to the rotational indeterminacy mentioned in Remark~\ref{rmk:rotation}, the estimated loading matrix $\widehat{\mathbf A}$ is not guaranteed to converge to the true loading matri $\mathbf A^*$. However, it can be shown that
the maximum principal angle between the column spaces of $\widehat{\mathbf A}$ and $\mathbf A^*$ converges to zero in probability under the same conditions as Theorem~\ref{thm1} and an additional regularity condition on the singular values of $\mathbf X^*(t)$. See the Appendix for more details.}

\end{remark}
\subsection{Model Selection Consistency}\label{sec:model_select}
As introduced in Section~\ref{subsec:numfact}, 
$v(N, J,r)$ is the penalty function in the information criterion. As $v(N, J,r)$  is required to 
be increasing in $r$, we denote $u(N,J,r)=v(N,J,r)-v(N,J,r-1)>0$. Further, for any $t \in [0, 1]$, let $\sigma_{1,t}\geq \sigma_{2,t}\geq\ldots\geq \sigma_{r^{\ast},t}$ be the non-zero singular values of $\mathbf{X}^{\ast}(t)$. Theorem~\ref{thm3}  provides sufficient conditions on $u(N, J, r)$ for consistent model selection.
\begin{theorem}[model selection consistency]\label{thm3}
    Assume that the candidate set $\mathcal{R}$ has a finite number of elements and $r^* \in \mathcal{R}$. 
    Under Conditions \ref{cond:1}-\ref{cond:4}:
    \begin{enumerate}[(i)]
        \item (Dependent case) Assume $J=O(N)$ and function $u$ satisfies that
      $  u(N,J,r)=o\big(\int_{h}^{1-h}\sigma_{r^{\ast},t}^2\mathrm{d}t \big)$ and $NJ\left( J/\phi(J)\right)^{-m/(2m+1)+\delta}=o(u(N,J,r))$
    for any sufficiently small $\delta>0$ and any $r\in \mathcal R$ as $N$ and $J$ go to infinity. Choose $h \asymp \left( J/\phi(J)\right)^{-1/(2m+1)+\delta/m}$. Then 
     $   \lim_{N,J\rightarrow \infty}\operatorname{pr}(\hat{r}=r^{\ast})=1.$
    \item (Independent case) Assume that $\phi(J)=1$ in Condition \ref{cond:4}(iii), $\log (N\vee J) \ll N\land J$ and the function $u$ satisfies that
     $   u(N,J,r)=o\big(\int_{h}^{1-h}\sigma_{r^{\ast},t}^2\mathrm{d}t\big)$ and $NJ\left( N\land J \right)^{- 2m/(2m+1) + \delta}=o(u(N,J,r))$
    for any sufficiently small $\delta>0$ and any $r\in \mathcal R$ as $N$ and $J$ go to infinity. Choose $h \asymp ((N\land J)/(\log^2 (N\land J)))^{-1/(2m+1)}$. Then 
   $     \lim_{N,J\rightarrow \infty}\operatorname{pr}(\hat{r}=r^{\ast})=1.$
    \end{enumerate}
\end{theorem}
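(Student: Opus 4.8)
The plan is to decompose the event $\{\hat r \neq r^*\}$ into the underfitting event $\{\hat r < r^*\}$ and the overfitting event $\{\hat r > r^*\}$, and to show each has probability tending to zero. The engine behind both is the rate bound from Theorem~\ref{thm1} applied at \emph{each} candidate $r$, together with the observation that the pseudo-likelihood $\mathcal L_h(\widehat{\boldsymbol\Theta}^{(r)},\widehat{\mathbf A}^{(r)})$ is, up to a concave-in-$\mathbf X$ reparameterization, a sum of Poisson-type log-likelihood increments whose curvature is controlled by Conditions~\ref{cond:1} and \ref{cond:2}. First I would establish a deterministic two-sided comparison between the value of the pseudo-likelihood and the integrated squared error $D(\mathbf X,\mathbf X^*):=\frac{1}{NJ}\int_h^{1-h}\|\mathbf X(t)-\mathbf X^*(t)\|_F^2\,dt$: by a Taylor expansion of the smoothed pseudo-likelihood in $\mathbf X$ around $\mathbf X^*$ and the uniform bounds on $f, f', f''$ on $[-M,M]$, one gets that, up to a stochastic remainder that is itself $O_p$ of the fluctuation of the smoothed counting-process term,
\begin{equation*}
\mathcal L_h(\widehat{\boldsymbol\Theta}^{(r^*)},\widehat{\mathbf A}^{(r^*)}) - \mathcal L_h(\boldsymbol\Theta,\mathbf A) \asymp -NJ\,D(\boldsymbol\Theta\mathbf A^\top,\mathbf X^*) + \text{(cross term)},
\end{equation*}
valid for any $(\boldsymbol\Theta,\mathbf A)\in\mathcal G$. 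This is exactly the ingredient already used (implicitly) to prove Theorem~\ref{thm1}, so I would reuse its internal estimates rather than reprove them.

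For the overfitting direction $\{\hat r > r^*\}$: for any $r > r^*$ with $r\in\mathcal R$, the true model is nested in the $r$-factor model, so $\mathcal L_h(\widehat{\boldsymbol\Theta}^{(r)},\widehat{\mathbf A}^{(r)}) \geq \mathcal L_h(\widehat{\boldsymbol\Theta}^{(r^*)},\widehat{\mathbf A}^{(r^*)})$ trivially, but the excess is controlled: applying Theorem~\ref{thm1} at index $r$ (its conclusion holds for any fixed correctly-or-over-specified rank, since an over-specified model still contains the truth and the parameter space $\mathcal G$ is unchanged) gives $D(\widehat{\mathbf X}^{(r)},\mathbf X^*) = O_p\big((J/\phi(J))^{-m/(2m+1)+\delta}\big)$ in the dependent case, hence
\begin{equation*}
0 \leq \mathcal L_h(\widehat{\boldsymbol\Theta}^{(r)},\widehat{\mathbf A}^{(r)}) - \mathcal L_h(\widehat{\boldsymbol\Theta}^{(r^*)},\widehat{\mathbf A}^{(r^*)}) = O_p\big(NJ\,(J/\phi(J))^{-m/(2m+1)+\delta}\big).
\end{equation*}
Then $\mathrm{IC}(r) - \mathrm{IC}(r^*) = -2[\mathcal L_h^{(r)} - \mathcal L_h^{(r^*)}] + [v(N,J,r)-v(N,J,r^*)] \geq u(N,J,r) - O_p(NJ(J/\phi(J))^{-m/(2m+1)+\delta})$, which is positive with probability tending to one precisely by the assumption $NJ(J/\phi(J))^{-m/(2m+1)+\delta} = o(u(N,J,r))$ (summing the one-step increments $u$ over the finitely many indices between $r^*$ and $r$). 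For the underfitting direction $\{\hat r < r^*\}$: for $r < r^*$, the best $r$-factor approximation cannot capture $\mathbf X^*(t)$, and the approximation error is bounded below by the discarded singular values — specifically $\min_{(\boldsymbol\Theta,\mathbf A)\in\mathcal G,\ \mathrm{rank}\leq r} D(\boldsymbol\Theta\mathbf A^\top,\mathbf X^*) \gtrsim \frac{1}{NJ}\int_h^{1-h}\sigma_{r^*,t}^2\,dt$ by the Eckart–Young lower bound applied pointwise in $t$ (one must check the constrained minimizer over $\mathcal G$ is not much worse than the unconstrained low-rank truncation, which follows since $\mathbf X^*\in\mathcal G$ and the truncation has operator norm no larger than $\mathbf X^*$). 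Feeding this into the deterministic comparison shows $\mathcal L_h^{(r^*)} - \mathcal L_h^{(r)} \gtrsim \int_h^{1-h}\sigma_{r^*,t}^2\,dt \cdot(1+o_p(1))$, so $\mathrm{IC}(r)-\mathrm{IC}(r^*) \geq 2c\int_h^{1-h}\sigma_{r^*,t}^2\,dt - [v(N,J,r^*)-v(N,J,r)]$, which is positive w.h.p. because $v(N,J,r^*)-v(N,J,r)$ is a finite sum of increments each $u(\cdot) = o\big(\int_h^{1-h}\sigma_{r^*,t}^2\,dt\big)$.

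The independent-case part (ii) is identical in structure, substituting the sharper rate $(N\land J)^{-2m/(2m+1)+\delta}$ from Theorem~\ref{thm1}(ii) and the correspondingly matched condition on $u$. Combining: $\operatorname{pr}(\hat r = r^*) \geq 1 - \sum_{r\in\mathcal R, r\neq r^*}\operatorname{pr}(\mathrm{IC}(r)\leq \mathrm{IC}(r^*)) \to 1$ since $\mathcal R$ is finite. The main obstacle I anticipate is making the deterministic two-sided comparison between $\mathcal L_h$ and $D(\cdot,\mathbf X^*)$ sufficiently uniform over the (non-compact-in-rank, but norm-bounded) parameter set, and in particular controlling the stochastic cross term $\big|\sum_{i,j}\int_h^{1-h}(\text{smoothed }dY_{ij} - f(X^*_{ij}))(X_{ij}-X^*_{ij})\big/f\text{-factor}\big|$ uniformly — this is where the blockwise-independence Condition~\ref{cond:4}(iii), the moment bound \ref{cond:4}(ii), and a chaining/covering argument over $\mathcal G$ enter, exactly as in the proof of Theorem~\ref{thm1}; for the model-selection statement one additionally needs this cross term to be $o_p$ of \emph{both} $u(N,J,r)$ and $\int_h^{1-h}\sigma_{r^*,t}^2\,dt$, which is why the theorem's hypotheses are stated as a two-sided sandwich on $u$.
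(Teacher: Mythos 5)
Your proposal is correct and follows essentially the same route as the paper: split into under- and over-selection, lower-bound the likelihood gap for $r<r^*$ via the residual singular values (the paper uses Weyl's inequality plus its Lemma 1 relating the KL-type divergence to squared Frobenius distance, which is the same Eckart--Young-type argument you invoke), and upper-bound the likelihood excess for $r>r^*$ by the Theorem 1 estimation rate over the enlarged parameter class, finishing with a union bound over the finite candidate set. The only cosmetic difference is that the paper re-runs the uniform-deviation bound over $\mathcal G_{r_{\max}}$ rather than citing Theorem 1 directly at each over-specified rank, and your claimed two-sided ``$\asymp$'' comparison is used only one-sidedly in each direction, exactly as in the paper.
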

\begin{remark}
The two conditions on $u(N,J,r)$ in 
both the dependent and independent cases are needed to ensure the existence of a suitable penalty that guards against both over- and under-selections of the number of factors. For such a $u$ function to exist, $\int_{h}^{1- h}\sigma_{r^{\ast},t}^2\mathrm{d}t$
cannot be too small. 
The first condition $u(N,J,r)=o \big( \int_{h}^{1-h}\sigma_{r^{\ast},t}^2\mathrm{d}t \big)$ requires that $u(N,J,r)$ is smaller than the integral of the gap between non-zero singular values and zero singular values of $\mathbf{X}^\ast(\cdot)$.  It ensures the probability of under-selecting the number of factors to be small. The second condition requires that $u(N,J,r)$ grows faster than the upper bound of estimation error, which guarantees that, with high probability, we do not over-select the number of factors. 
\end{remark}
\begin{remark}\label{relax}
    The results in Theorems \ref{thm1} and \ref{thm3} can be extended if it is of interest to use a kernel function supported on the whole real line, for example, the Gaussian kernel. In such cases, Condition \ref{cond:3} needs to be modified. The details are given in the Appendix. 
\end{remark}
\begin{remark}
    The results in Theorems \ref{thm1} and \ref{thm3} can also be extended to 
   a missing data setting under an ignorable missingness assumption. Let $\omega_{ij}$ be a binary random variable, indicating the missingness of $\{Y_{ij}(t):t\in[0,1]\}$, where $\omega_{ij}=1$ means that $\{Y_{ij}(t):t\in[0,1]\}$ is observed and $\omega_{ij}=0$ if $\{Y_{ij}(t):t\in[0,1]\}$ is missing. We can still establish corresponding results in Theorems \ref{thm1} and \ref{thm3} under suitable conditions based on a pseudo-likelihood function 
  that replaces the summations over all $i$ and $j$ in \eqref{likelihood} by summations over $i$ and $j$ such that $\omega_{ij}=1$.

\end{remark}
\section{Simulation Study}
We evaluate the proposed estimator and information criterion with a simulation study. In this study, we generate data from the proposed model, where the number of factors is set to $r^*=3$, and the numbers of observation units and event types satisfy $N = 2J$. 
We consider three patterns regarding the dynamic component $\mathbf{\Theta}^*(t)$, denoted by C1, C2, and C3, in which $\mathbf{\Theta}^*(t)$ is constant, changes linearly, and changes periodically, respectively. We further consider two different settings for generating $\mathbf A^*$, denoted by S1 and S2, resulting in two different signal-to-noise levels, where Setting S1 has a stronger signal than Setting S2. We vary the number of event types $J$, by setting $J=100, 200, 400,$ and $800$. Finally, we consider data generation under the dependent and independent settings in Theorem~\ref{thm1}. The factors discussed above lead to a total of 24 simulation settings. For each setting, 50 independent replications are generated. The proposed method is compared with the Poisson factor model discussed in Remark~\ref{rmk:poisson} that ignores the dynamic nature of the process and only concerns the total event counts on the entire time interval.  Following a similar proof as that for Theorem~\ref{thm1}, the likelihood-based estimator under the Poisson factor model is consistent even under the dependent-event-type settings when $\mathbf{\Theta}^*(t)$ is constant. As the 
Poisson factor model involves less parameters, it is expected to be statistically more efficient than the proposed estimator under the settings when $\mathbf{\Theta}^*(t)$ is constant. In the other settings, the Poisson factor model has biases as it ignores the dynamic nature of the event data.

We now elaborate on the data generation and results under some settings with dependent event types. Further details about the simulation are given in the Appendix. 
More simulations are performed in the Appendix under additional settings, including those with independent event types, more event types than observation units, and modified specifications for $\mathbf{\Theta}^*(t)$ and $\mathbf A^*$ that lead to even weaker signals. While the results vary under different settings, their patterns are consistent with the results of the current simulations below. 

We set $\phi(J)=J^{1/3}$ and generate data $\{Y_{ij}(t):t\in[0,1]\}$ as follows. 
First, we divide event types $j=1,\ldots,J$ into $\lfloor J/\phi(J)\rfloor$ blocks of approximately equal sizes, $B_1,\ldots,B_{\lfloor J/\phi(J)\rfloor}$, where $\lfloor J/\phi(J)\rfloor$ denotes the greatest integer less than or equal to $J/\phi(J)$.
Second,  for the $k$-th block, we generate a Poisson process with intensity function 
$f_k(t):= \max_{j\in B_k} f(X^*_{ij}(t)) = f \big( \max_{j\in B_k}X^*_{ij}(t) \big),$
and denote the generated event times as $0<t_{k,1}<\ldots<t_{k,p_k}<1$.
Finally,  using a thinning algorithm  \citep{chen2016thinning}, for each $i = 1, ..., N$ and each $j\in B_k$, we accept $t_{k,1},\ldots,t_{k,p_k}$ with probabilities $f(X^*_{ij}(t_{k,1}))/f_k(t_{k,1}),\ldots,f(X^*_{ij}(t_{k,p_k}))/f_k(t_{k,p_k})$ independently and let the accepted time points to be the event times of $Y_{ij}(t)$. The resulting processes are guaranteed to follow the proposed model. 
We choose the Epanechnikov kernel function $K(x)=0.75(1-x^2), \ -1\leq x\leq 1$, with kernel order $m=2$. It is easy to verify that the chosen kernel function satisfies Condition \ref{cond:3}. The link function is $f(x)=\exp(x)$. We set $h=0.1(J/\phi(J))^{-0.19}$ and $M=36$. 
Our estimation is based on a discretized likelihood with $31$ evenly distributed time points $t_1,\ldots,t_{31}$ on $[h,1-h]$. A sensitivity analysis is performed in the Appendix regarding the number of grid points, which suggests that the choice of 31 is sufficient for our simulation settings.  
After the estimation, we obtain $\widehat{\mathbf X}(t)$ for $t\in [h, 1-h]$ by a linear interpolation. The estimation error is evaluated by $\int_{h}^{1-h}\|\mathbf{X}^\ast (t)-\widehat{\mathbf{X}}(t)\|_F^2 \mathrm{d}t/NJ$. 
{Under the Poisson factor model, we obtain $\widehat{\mathbf A}$ and $\widehat{\mathbf \Theta}$. We compute 
$\int_{h}^{1-h}\|\mathbf{X}^\ast (t)-\widehat{\mathbf{X}}(t)\|_F^2 \mathrm{d}t/NJ$ as its estimation error, where 
$\widehat{\mathbf{X}}(t) = \widehat{\mathbf \Theta} \widehat{\mathbf A}^\top$ is constant over time.}

The results regarding the estimation errors are given in Table \ref{main_table1}. They show that, for each combination of $S_i$ and $C_j$, $i=1, 2$, $j = 1, 2, 3$, the estimation error of the proposed method decays as $N$ and $J$ grow. Under 
settings in which $\mathbf{\Theta}^*(t)$ is constant (i.e., C1), the estimator given by the Poisson factor model has smaller errors than the proposed estimator. In the rest of the settings, the proposed estimator yields substantially smaller estimation errors than those under the Poisson factor model. In the Appendix, the two models are also compared in   recovering the loading matrix $\mathbf A^*$. Due to the rotational indeterminacy mentioned in Remark~\ref{rmk:rotation}, we measure the accuracy by the principal angles between the subspace spanned by the column vectors of $\mathbf A^*$ and that spanned by those of $\widehat{\mathbf A}$. The results
show that the proposed method provides substantially more accurate estimates of $\mathbf A^*$ under settings when the Poisson factor model is misspecified and similar but slightly less accurate estimates when 
the Poisson factor model is correctly specified. 

\begin{table}	\centering
\small
	\begin{tabular}{|c|c|c|c|c|c|c|}
        \hline & \multicolumn{3}{c|}{S1} & \multicolumn{3}{c|}{S2}\\
        \hline Kernel-based method & C1 & C2 & C3 & C1 & C2 & C3 \\
        \hline
        $J=100$ &0.1006&{0.1174}&{ 0.1048}&0.1371&{0.1606}&{0.1407}\\
        $J=200$ &0.0536& {0.0630}&{0.0562}&0.0692&{0.0806}&{0.0727}\\
        $J=400$ &0.0291&{0.0350}&{0.0308}&0.0378&{0.0437}&{0.0398}\\
        $J=800$ &0.0159&{0.0190}&{0.0170}&0.0205&{0.0240}&{0.0217}\\
        \hline Poisson factor model& C1 & C2 & C3 & C1 & C2 & C3 \\
        \hline
        $J=100$ &{0.0154}&0.9743&0.7518&{0.0192}&0.6928&0.5530\\
        $J=200$ &{0.0073}&0.9785&0.7611&{0.0091}&0.6830&0.5458\\
        $J=400$ &{0.0036}&0.9773&0.7442&{0.0046}&0.6911&0.5442\\
        $J=800$ &{0.0018}&1.0012&0.7542&{0.0023}&0.6955&0.5491\\
        \hline
    \end{tabular}
    \vspace{0.5em}
	\caption{Mean estimation error among 50 independent replications based on the proposed estimator and the estimator under the Poisson factor model under 24 simulation settings.}\label{main_table1}
\end{table}

Finally, we evaluate the accuracy regarding selecting the number of factors. We set the penalty term to be $v(N,J,r)=40rNJh^{1.99}$ and 
select $r$ from the candidate set $\{1, 2, 3, 4, 5\}$.   
According to our simulation results, the number of factors is always correctly selected under all the simulation settings,
except for three settings where $J = 100$ and the signal-to-noise level follows S2. For these three cases (C1-C3),  13, 29, and 10 out of 50 replications mis-select the number of factors. Overall, the proposed information criterion has effective performance.

\section{Application to Grocery Shopping Data}\label{sec:groc}

\subsection{Background, Data Processing, and Analysis}
We apply the proposed method to a grocery shopping dataset available at \url{https://www.dunnhumby.com/source-files}.
It contains transaction records collected by a retailer in two years about its frequent shoppers. We discard the first 15\% of the observation period since the number of total transactions is significantly lower than the rest, likely due to late entries. The remaining period is then standardized to the interval [0,1]. After pre-processing, we obtain a dataset with $N=1,978$ shoppers and $J=2,000$ products. {The dataset contains information on each product regarding its type (e.g., cheese, chips).} It also contains the
demographic information of 796 shoppers, including age, income, and child (having children or not). Such information is not used in the proposed model but is used for validating and interpreting our results. 
Here, the matrix-valued function $\boldsymbol{\Theta}(\cdot)$ may be interpreted as the dynamic customer factors, and the matrix $\mathbf A$ may be interpreted as the attributes of the products. 
We apply the proposed information criterion with the candidate set $\{1, 2, 3, 4, 5\}$, which selects $r=3$ factors.  Following the discussion in Remark \ref{rmk:rotation}, we apply a varimax rotation for the selected three-factor model to obtain an interpretable factor structure.

\subsection{Interpreting Factors}

We interpret the factors based on the estimated loading matrix after rotation. 
Specifically, let $\widetilde{ \mathbf{A}} = (\tilde a_{ij})_{J\times r}$ be the loading matrix after rotation. We say a product $j$ dominantly loads on factor $k$ if $\tilde a_{jk}^2/(\sum_{l=1}^r \tilde a_{jl}^2)$ is large, i.e., $\tilde a_{jk}$ is dominantly larger than the rest of the loadings in magnitude. We investigate the top 60 products that dominantly load on each factor. 
Table~\ref{main_table4} lists the types of these products. We note that many products with dominant loadings on the same factor tend to be of a small number of types. These types are presented only once in the table, followed by the corresponding number of products of this type in parentheses. Product types that only appear once for each factor are omitted for brevity. 

\begin{table}
    \centering
    \small
    \begin{tabular}{c|l}
    \hline
    Factor 1 & yogurt(10), salad(3), herbs(parsley, cilantro)(3), organic fruit/vegetable(3),\\ 
    &blueberry(3), mushroom(2), tropical fruit(mango, pineapple)(2), beans(2),\\&pepper(2), cheese(2)\\
    \hline
    Factor 2 & soft drink(11), cold cereal(5), hot sauce(5), refrigerated drink(4), chicken wings(4),\\&frozen meat(3), dinner sausage(3), candy(3), frozen pizza(2), cigarette(2),\\&potato chips(2), canned pasta(2)\\
    \hline
    Factor 3 & cheese(7), milk(5), white bread(4), fruit(banana, grape, strawberry)(4), egg(4),\\&vegetable(cucumber, celery, cabbage, corn)(4), onion(4), salad(3), soft drink(2),\\&hamburger bun(2), beef(2), tomato(2), potato(2)\\
    \hline
    \end{tabular}
    \vspace{0.5em}
    \caption{Products with large positive factor loadings for each of the three factors.}\label{main_table4}
\end{table}

Table \ref{main_table4} shows that the items with dominant loadings on the first factor are mostly 
fresh and healthy food products suitable for vegetarian dietary preferences. Items with dominant loadings on the second factor contain unhealthy (e.g., soft drinks, candy, potato chips), fast food (e.g., cold cereal, frozen pizza), or budget-friendly products (e.g., frozen meat). 
Finally, items that load dominantly on the third factor are mostly basic food products of daily need, including bread, eggs, milk, and beef. While these products include many fresh and healthy food products similar to those loading on the first factor, they tend to be more budget-friendly. 
Given these features, we may interpret the three factors as ``healthy food consumption", ``unhealthy food consumption", and ``basic food consumption" factors, respectively. 

We further investigate the three factors by regressing them on the three demographic variables: age, income, and child. Here, age is an ordinal variable referring to the estimated age range of the shopper. For simplicity, we transform it into a binary variable, which takes value 1 if the age is above 55 and 0 otherwise. 
The variable income is an ordinal variable recording the household income level. We simplify it to be a variable with three categories, -- under \$35,000 (income1 = 0, income2 = 0),  \$35,000 - \$75,000 (income1 = 1, income2 = 0), and above \$75,000 (income1 = 0, income2 = 1). Finally, the variable child is a binary variable indicating whether the shopper's household has children. We run a linear regression model for each factor by regressing the factor scores on age, income1, income2, child, and the interactions between child and income1 and income2. The interaction terms are added because it is suspected that the child effect differs between high- and low-income households.

The results from these regression models are given in Table~\ref{main_table5}, where the statistically significant coefficients and their p-values are presented, and the R-squared values of the three models are given. As the coefficients for the interaction between the dummy variable income1 and child are insignificant in all three models, the corresponding row is not presented. All the terms are statistically significant for the first factor, except for age and the interaction between income1 and child. In particular, the coefficients associated with the summary variables for income are all positive, meaning that the consumption of healthy food increases with the household income, controlling for the rest of the variables. In addition, the coefficient for child is negative, and the coefficient for the interaction between income2 and child is positive and larger in absolute value than that of the coefficient for child. It means households with relatively lower income (less than \$75,000) tend to shop less healthy food when they have children, while those with higher income (above \$75,000) tend to shop more healthy food when they have children. 

All the coefficients are significant for the second factor, except for those associated with the two interaction terms. The coefficient for age is negative, suggesting that the older group tends to consume less unhealthy food than the younger ones, controlling for the rest of the variables. The coefficients for income are also negative, suggesting that households with a higher income tend to consume less unhealthy food when controlling for the rest of the variables. On the other hand, the coefficient for child is positive, meaning that households with children tend to consume more unhealthy food. This may be because this food category contains most soft drinks and snacks like candy and potato chips that children often favour. 

Regarding the third factor, only the coefficient for age is statistically significant, and the R-squared value is quite low. The positive coefficient means that older people consume more basic food products. Combining the results for the second factor, we believe this may be because older people tend to have a healthier lifestyle. Although they do not consume more healthy food associated with the first factor, 
they cook more frequently using basic food products and eat less unhealthy food than the younger ones. 

\begin{table}
    \centering
    \small
    \begin{tabular}{c|rrr}
    \hline
    & Factor 1 ($R^2=0.13$) & Factor 2 ($R^2=0.17$) & Factor 3 ($R^2=0.02$)\\
    \hline
    age & & -0.007 ($p=0.00$)& 0.006 ($p=0.00$)\\
    income1 & 0.006 ($p=0.00$) & -0.005 ($p=0.02$)& \\
    income2 & 0.015 ($p=0.00$) & -0.021 ($p=0.00$)& \\
    child & -0.007 ($p=0.01$)& 0.010 ($p=0.00$)& \\
    income2$\times$ child & 0.009 ($p=0.01$) & & \\
    \hline
    \end{tabular}
    \vspace{0.5em}
    \caption{The coefficients when regressing the factors on 
    demographic variables.}\label{main_table5}
\end{table}

\subsection{Investigating Purchase Dynamics}
We further investigate the dynamic trend the model captures. In particular, for each pair of consumer $i$ and product $j$, we measure the variability in the personal purchasing rate by the total variation of $X_{ij}(\cdot)$, i.e., $\int_{0}^{1}|X_{ij}^{\prime}(t)|\mathrm{d}t$, where  
$X_{ij}^{\prime}(t)$ denotes the derivative of $X_{ij}(t)$. A larger total variation implies a higher variability. Under the estimated three-factor model, we estimate this variability based on the finite differences between $\widehat X_{ij}(t)$ for time $t$ at adjacent grid points. 

The variability measure is computed for 1,019 products of 18 product types that are most frequently purchased.  For each product type, we look at the empirical distribution of the estimated total variations based on all the shoppers and all the products of this type and compute its 25\%, 50\% (i.e., median), and 75\% quartiles. The results are given in Figure \ref{figure1}, where the 18 product types are organized in descending order for each quartile. 
The ranking of the product types is reasonably stable across the three quartiles and consistent with our understanding of their sales pattern. We remark that dimension reduction is important for the proposed method to produce the current results. One cannot obtain sensible results by averaging the sales of the products over shoppers due to the high noise level in the data.

\begin{figure}
\centering
\includegraphics[scale=0.32]{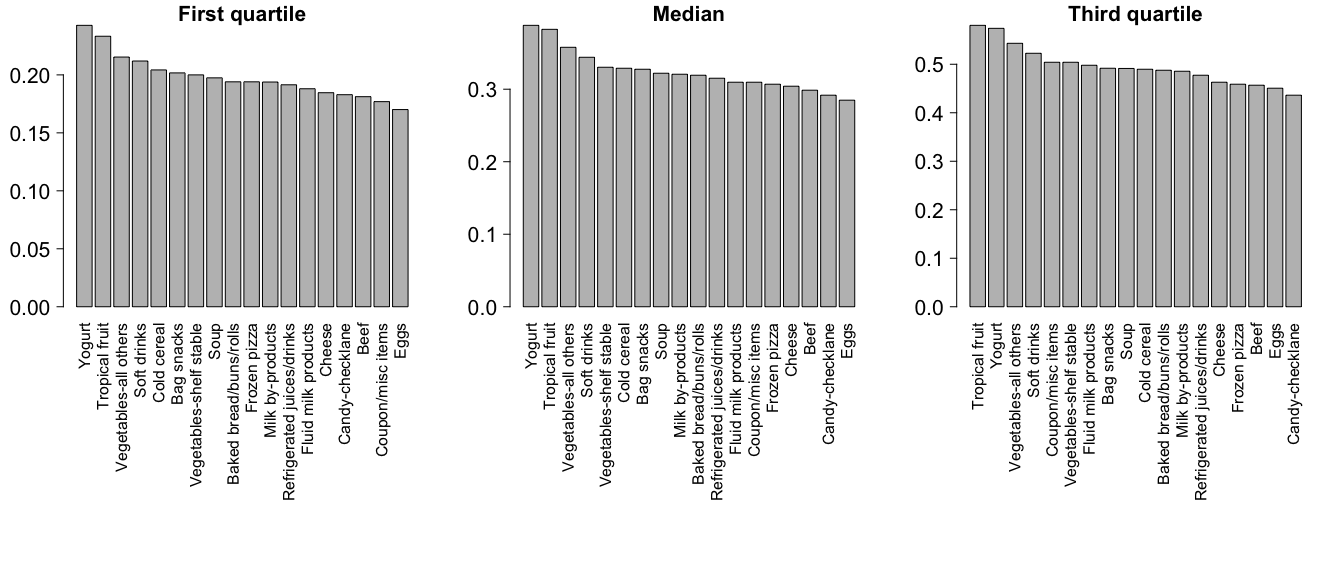}
\caption{Quartiles of the variability of the most frequently purchased product types.}\label{figure1}
\end{figure}
   
Vegetables, tropical fruits, yogurt, and soft drinks
are product types with 
consistently high 
variability scores across all three quartiles. 
The price and quality of many vegetables and fruits depend on their growing seasons. In addition, tropical fruits are exported products whose price and supply depend on additional factors that fluctuate over time, such as transportation costs. Due to the previously mentioned factors, these products show higher variability in their sales. On the other hand, the higher variability of yogurt and soft drinks may be due to seasonal shifts in consumer demand. The demand for these products tends to 
increase during the warmer months, while it decreases during the colder months when warming foods and drinks are preferred.

Dairy products, eggs, beef, and candy displayed at the checkout lane are product types with consistently low variability. These are staples in many people's daily diets.
Their supply and demand are typically stable throughout the year. The sales of candies displayed in the checkout lane are expected to be stable due to their constant high visibility, accessibility, and affordability, which can hardly be affected by economic conditions or other seasonal factors.

\section{Discussions}

The theoretical results of the proposed estimator under the dependent event setting may be improved. There is a gap between the error rates under the dependent and independent settings in Theorem~\ref{thm1}, and in particular, the convergence rate is slower when $\phi(J)$ is of a constant order in the upper bound for the dependent setting than that for the independent setting. This may be an artifact of our proof strategy, as certain random matrix results that are key to establishing the upper bound for the independent setting do not apply to the dependent setting. As discussed in Remark~\ref{rmk:betterrate}, the gap can be filled when the blockwise independent structure does not vary across individuals. Under the more general individual-specific blockwise structure in Condition~\ref{cond:4}, this gap may still be filled with a more refined analysis. We leave it for future investigation.  

The current method is not particularly designed for forecasting, though it still has some prediction power. For example, we may predict events associated with the existing observation units and event types at a future time point 
(i.e., $t>1$) based on $f(\widehat {\mathbf X}(1-h))$. This prediction is sensible if the model still holds after time 1, and ${\mathbf X}^*(1-h)$ and ${\mathbf X}^*(t)$ are close to each other due to the smoothness of the function. We may improve the prediction power of the proposed method by further assuming a stochastic model (e.g., a Gaussian random field model) for the latent process $\mathbf X(t)$ and estimating it based on our estimate $\widehat{\mathbf X}(t)$. 
This model may allow us to better predict future events, even if they are associated with new observation units or event types not used in the model training, as long as the new observation units and event types are from the same populations as the existing ones.

A useful application of the proposed method is for detecting changes in each observation unit, which may be of interest in many applications. For example, in the grocery shopping application, a change in the dynamic factor of a household may imply a structural change in their consumer behaviour, based on which individualized marketing strategy may be developed. Although we currently require each $\theta_{ik}(t)$ to be sufficiently smooth, this requirement can be relaxed to allow each $\theta_{ik}(t)$ to be a piecewise smooth function. Using the proposed method, changes can be detected based on the estimated functions, which is closely related to change-point detection in the nonparametric regression literature \citep[e.g.,][]{xia2015jump}. Methods and theories remain to be developed for optimally localizing the changes based on the estimated functions. 

\newpage

\section*{Appendix}

\appendix
\renewcommand{\theequation}{A.\arabic{equation}}
In the following proof, we write $a_{N,J}\lesssim b_{N,J}$ if there exists a constant $C$ (independent of $N$, $J$) such that $|a_{N,J}|\leq C|b_{N,J}|$. The results in Theorems 1 and 3 can be extended if we want to use a kernel function supported on the whole real line, for example, the Gaussian kernel function $K(x)\propto \exp(-x^2/2)$. In such cases, Condition 3 can be modified as Condition 3' as follows.

\begin{condition2}
    The kernel function $K$ satisfies: (i) it is a Lipschitz  function of order $m$; (ii) it attains its unique maximum at $x = 0$; (iii) it is twice continuously differentiable in a neighbourhood of $0$ and $(\log K)'' (0) < 0$; (iv) there exists a constant $\epsilon>0$ such that for $k=1,\cdots, m-1$, the tail bound satisfies
    \begin{align*}
        \max\left\{\left|\int_{x\geq R} x^k |K (x)| \mbox{d} x\right|, \ \left|\int_{x\leq -R} x^k |K (x)| \mbox{d} x \right| \right\}&=o \left( R^{-(m-k)/\epsilon} \right)
    \end{align*} 
    when $R\rightarrow \infty$.
\end{condition2}
Under the above condition, all the limits of integration in Theorems \ref{thm1} and \ref{thm3} should be changed from $[h,1-h]$ to $[h^{1-\epsilon},1-h^{1-\epsilon}]$.

\section{Proof of Theorems and Lemmas}\label{section:A}
\subsection{Proof of Theorem 1}\label{section:A1}
The proof of Theorem 1 is based on the following two
lemmas, whose proof will be provided later in Appendix B.
\begin{lemma}\label{lem1}
	For $a, b \in \left[ - \alpha, \alpha \right]$ and $f(x)$ satisfying Condition 1, we have
	\begin{equation*}
		\left( a - b \right)^2 \leq 4 \beta_{\alpha} \left( f(a) \log \frac{f(a)}{f(b)} - \left( f(a) - f(b) \right) \right).
	\end{equation*}
\end{lemma}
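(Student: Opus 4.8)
\textbf{Proof plan for Lemma \ref{lem1}.}

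The inequality is a quantitative convexity (Bregman-divergence lower bound) statement. Define, for the link $f$ satisfying Condition 1, the function
\[
  g(a,b) \;=\; f(a)\log\frac{f(a)}{f(b)} - \bigl(f(a)-f(b)\bigr),
\]
which is the Bregman divergence associated with the convex function $u\mapsto u\log u - u$ evaluated at $(f(a),f(b))$; in particular $g(a,b)\ge 0$ with equality iff $f(a)=f(b)$. The plan is to reduce the claim to a one-dimensional Taylor expansion in the variable $b$ with $a$ held fixed. Fixing $a$, set $\psi(b):=g(a,b)$. Then $\psi(a)=0$, and a direct computation gives $\psi'(b) = -f(a)f'(b)/f(b) + f'(b) = f'(b)\bigl(1 - f(a)/f(b)\bigr)$, so $\psi'(a)=0$ as well. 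By Taylor's theorem with Lagrange remainder, there is $\xi$ between $a$ and $b$ with $\psi(b) = \tfrac12\psi''(\xi)(a-b)^2$, so it suffices to show $\psi''(\xi)\ge \tfrac{1}{2\beta_\alpha}$ for a suitable finite constant $\beta_\alpha$ depending only on $\alpha$ (and $f$).

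Differentiating once more, $\psi''(b) = f''(b)\bigl(1-f(a)/f(b)\bigr) + f'(b)\cdot f(a)f'(b)/f(b)^2$. The second term equals $f(a)f'(b)^2/f(b)^2$, which is nonnegative. The first term vanishes at $b=a$ but need not have a sign in general; however, one only needs the bound at the Taylor point $\xi$, and more robustly one can avoid the awkward $f''$ term entirely by choosing a slightly different route: instead of Taylor-expanding $g(a,\cdot)$, Taylor-expand in $f$-space. Write $p=f(a)$, $\tilde q=f(b)$, both lying in the compact interval $I_\alpha := f([-\alpha,\alpha])$, which by Condition 1 (continuity of $f$, and $f,f'$ bounded away from $0$ on $[-M,M]\supseteq[-\alpha,\alpha]$ once we note the lemma is applied with $\alpha\le M$) is a compact subinterval of $(0,\infty)$ with $\inf I_\alpha =: c_\alpha>0$. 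The scalar Bregman identity $p\log(p/\tilde q)-(p-\tilde q) = \tfrac{p}{2\,\eta^2}(\tilde q-p)^2$ for some $\eta$ between $p$ and $\tilde q$ gives $g(a,b)\ge \tfrac{c_\alpha}{2(\sup I_\alpha)^2}(f(a)-f(b))^2$. Finally, since $f$ is $C^1$ with $\inf_{[-\alpha,\alpha]}|f'|=:d_\alpha>0$, the mean value theorem yields $|f(a)-f(b)|\ge d_\alpha|a-b|$, hence $g(a,b)\ge \tfrac{c_\alpha d_\alpha^2}{2(\sup I_\alpha)^2}(a-b)^2$. Setting $\beta_\alpha = 2(\sup I_\alpha)^2/(c_\alpha d_\alpha^2)$ (or any larger constant) gives the claimed $(a-b)^2\le 4\beta_\alpha\, g(a,b)$ with room to spare.

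The only genuine subtlety is bookkeeping the constant: one must confirm that $\beta_\alpha$ depends only on $\alpha$ and the fixed link $f$, and that all the infima/suprema invoked ($c_\alpha$, $d_\alpha$, $\sup I_\alpha$) are finite and strictly positive — this is exactly what Condition 1 guarantees on $[-M,M]$, provided the lemma is understood to be applied with $\alpha\le M$ (which is the case throughout, since parameters live in $\mathcal G$). I would also double-check the direction of the Bregman inequality so that the $f(a)$ prefactor is bounded below (giving $c_\alpha$, not an upper bound). No step requires more than the one-variable MVT/Taylor theorem and compactness, so there is no real obstacle beyond making the constant explicit and clean; the factor $4$ (rather than $2$) in the statement leaves slack to absorb any constant-chasing imprecision.
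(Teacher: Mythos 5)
Your final argument (the ``$f$-space'' route) is correct, and your instinct to abandon the direct Taylor expansion of $b\mapsto g(a,b)$ because of the signless $f''$ term is sound. The paper proves the same bound by a closely related but distinct elementary device: it writes $f(a)\log\frac{f(a)}{f(b)}-(f(a)-f(b)) = -2f(a)\log\sqrt{f(b)/f(a)}-(f(a)-f(b))$ and applies $-\log x\ge 1-x$, which collapses the divergence to the squared Hellinger distance $\left(\sqrt{f(a)}-\sqrt{f(b)}\right)^2$; representing this as $\left(\int_b^a f'(t)/(2\sqrt{f(t)})\,\mathrm{d}t\right)^2$ then yields $g(a,b)\ge (a-b)^2\inf_{|x|\le\alpha}f'(x)^2/(4f(x))$ in a single step. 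Your route --- the exact second-order Lagrange form of the Bregman divergence in its second argument, $g(a,b)=\tfrac{f(a)}{2\eta^2}(f(a)-f(b))^2$, followed by the mean value theorem for $f$ --- is equally rigorous and uses nothing beyond Condition 1. The only substantive difference is the constant: the paper's implicit $\beta_\alpha=\sup_{|x|\le\alpha}f(x)/f'(x)^2$ is somewhat sharper and matches the quantity $\sup_{|x|\le M}f'(x)^2/f(x)$ that reappears in the hypothesis of Theorem 2, whereas yours involves $\inf f$ and $\sup f$ separately. Since $\beta_\alpha$ is never pinned down in the lemma statement and is absorbed into a $\lesssim$ in the proof of Theorem 1, this discrepancy is immaterial. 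Your two caveats --- that the lemma is invoked with $\alpha\le M$ so that Condition 1 applies, and that continuity of $f'$ together with ``bounded away from zero'' gives a strictly positive $\inf|f'|$ for the MVT step --- are exactly the right points to verify, and both hold.
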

\begin{lemma}\label{lem2}
	Assume Condition 2 and 3, then there exists a positive constant $C_m$ that only depends on $m$, such that for any $t \in \left( 0, 1 \right)$, as long as $h\leq \min \left\{ t, 1-t \right\}$, we have
	\begin{equation*}
		\left\vert \frac{\int_{0}^{1} K_h (t-s) f \left( X_{ij} (s) \right) \mathrm{d} s}{\int_{0}^{1} K_h (t-s) \mathrm{d} s} - f ( X_{ij} (t)) \right\vert \leq C_m h^m.
	\end{equation*}
\end{lemma}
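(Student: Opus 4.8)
The plan is to recognize the left-hand side as the deterministic bias of a Nadaraya--Watson-type kernel smoother and to control it by a Taylor expansion, using the compact support of $K$ to eliminate boundary effects so that the normalizing denominator is \emph{exactly} $1$.

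First I would observe that, since $K$ is supported on $[-1,1]$ (Condition~\ref{cond:3}(i)), the map $s\mapsto K_h(t-s)$ is supported on $[t-h,t+h]$, and the hypothesis $h\le\min\{t,1-t\}$ forces $[t-h,t+h]\subseteq[0,1]$. Hence the substitution $u=(t-s)/h$ yields $\int_0^1 K_h(t-s)\,\mathrm{d}s=\int_{-1}^1 K(u)\,\mathrm{d}u=1$, so the denominator is identically $1$, and likewise $\int_0^1 K_h(t-s)f(X_{ij}(s))\,\mathrm{d}s=\int_{-1}^1 K(u)f(X_{ij}(t-hu))\,\mathrm{d}u$, with $t-hu\in[0,1]$ throughout. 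Using $\int_{-1}^1 K=1$ once more to write $f(X_{ij}(t))=\int_{-1}^1 K(u)f(X_{ij}(t))\,\mathrm{d}u$, the quantity to be bounded becomes $\bigl|\int_{-1}^1 K(u)\bigl(g_{ij}(t-hu)-g_{ij}(t)\bigr)\,\mathrm{d}u\bigr|$, where $g_{ij}:=f\circ X_{ij}$.

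Next I would establish that $g_{ij}\in C^m([0,1])$ with $\|g_{ij}^{(m)}\|_\infty$ bounded by a constant depending only on $m$ (and the fixed quantities $M$ and $f$): this follows from Condition~\ref{cond:1} ($f\in C^m$ on $[-M,M]$), Condition~\ref{cond:2} ($X_{ij}(\cdot)\in C^m([0,1])$ together with the constraint $\mathbf X^\ast(\cdot)\in\mathcal G$, which keeps $X_{ij}$ and its derivatives uniformly bounded), and the Fa\`a di Bruno formula. I would then Taylor expand about $t$ to order $m$ with Lagrange remainder,
\begin{equation*}
g_{ij}(t-hu)-g_{ij}(t)=\sum_{k=1}^{m-1}\frac{(-h)^k u^k}{k!}\,g_{ij}^{(k)}(t)+\frac{(-hu)^m}{m!}\,g_{ij}^{(m)}(\xi_u),\qquad \xi_u\in[0,1],
\end{equation*}
integrate against $K$, and invoke the symmetry $K(-x)=K(x)$ together with the order-$m$ property encoded in Condition~\ref{cond:3}(i) to conclude $\int_{-1}^1 u^k K(u)\,\mathrm{d}u=0$ for $1\le k\le m-1$, so every polynomial term vanishes. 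What remains is $\bigl|\frac{(-h)^m}{m!}\int_{-1}^1 u^m K(u)g_{ij}^{(m)}(\xi_u)\,\mathrm{d}u\bigr|\le\frac{h^m}{m!}\,\|g_{ij}^{(m)}\|_\infty\int_{-1}^1|u|^m K(u)\,\mathrm{d}u\le C_m h^m$, since $|u|^m\le 1$ on $[-1,1]$ and $\int_{-1}^1 K=1$.

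The argument is essentially routine; the only points needing care are (i) justifying that the normalization is \emph{exactly} $1$ rather than merely close to it — this is precisely where the compact support and the restriction $h\le\min\{t,1-t\}$ enter, and it is what keeps the bound free of boundary corrections — and (ii) securing the bound on $\|g_{ij}^{(m)}\|_\infty$ that is uniform in $i,j$, which is a bookkeeping exercise with Fa\`a di Bruno using the uniform bounds supplied by $\mathcal G$ and Condition~\ref{cond:1}. I do not anticipate a substantial obstacle beyond these.
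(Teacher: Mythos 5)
Your proposal is correct and follows essentially the same route as the paper's proof: exploit the compact support of $K$ and $h\le\min\{t,1-t\}$ to make the normalizer exactly $1$, Taylor-expand $f\circ X_{ij}$ to order $m$ about $t$, kill the terms of order $1,\ldots,m-1$ via the moment/symmetry properties of $K$, and bound the remainder by $C_m h^m$ using the uniform bound on the $m$-th derivative supplied by Conditions~\ref{cond:1} and~\ref{cond:2}. Your extra care with Fa\`a di Bruno to make the derivative bound uniform in $i,j$ is a reasonable (slightly more explicit) rendering of the step the paper asserts directly from Condition~\ref{cond:2}.
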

\begin{remark}
    If we modify Condition 3 by Condition 3', then Lemma 2 still holds for any $t\in [h^{1-\epsilon},1-h^{1-\epsilon}]$. The proof of Theorem \ref{thm1} and \ref{thm3} remains the same under Condition 3' when we change the limits of integration from $[h,1-h]$ to $[h^{1-\epsilon},1-h^{1-\epsilon}]$.
\end{remark}
\begin{proof}[Proof of Theorem 1]
    For notational simplicity, we treat $\widehat{\mathbf{X}}(\cdot)=\widehat{\mathbf{\Theta}}(\cdot)\widehat{\mathbf{A}}^{\mathrm{T}}$ as the obtained estimator, and denote
    $$G=\{\mathbf{X}\in\mathbb{R}^{N\times J}: \mathbf{X}=\mathbf{\Theta} \mathbf{A}^{\mathrm{T}}, \mathbf{\Theta}\in\mathbb{R}^{N\times r},\mathbf{A}\in\mathbb{R}^{J\times r},\|\mathbf{\Theta}\|_{2\rightarrow \infty}\leq M^{1/2},\|\mathbf{A}\|_{2\rightarrow \infty}\leq M^{1/2}\}.$$
    For any $t\in[0,1]$, it is easy to see that $\mathbf{X}^{\ast}(t)\in G$. We then prove the upper bound for the dependent case of Theorem 1. For given $t\in (0,1)$ and $\mathbf{X}\in G$, denote
    \begin{align*}
        \mathcal{L}_{t, h} (\mathbf{X})=\sum_{i=1}^{N} \sum_{j=1}^{J} \left( \frac{\int_{0}^{1} K_h (t-s) \mathrm{d} Y_{ij} (s)}{\int_{0}^{1} K_h (t-s) \mathrm{d} s} \log f \left( X_{ij} \right) - f \left( X_{ij}\right) \right).
    \end{align*}
    By Lemma \ref{lem1} and Lemma \ref{lem2}, we have
	\begin{align*}
		& \frac{1}{NJ} \int_{h}^{1-h}\big\| \widehat{\mathbf{X}} (t) - \mathbf{X}^{\ast} (t) \big\|_F^2 \mathrm{d}t\\
		=& \frac{1}{NJ} \sum_{i=1}^{N} \sum_{j=1}^{J} \int_{h}^{1-h}( \widehat{X}_{ij} (t) - X^{\ast}_{ij} (t) )^2 \mathrm{d}t\\
		\stackrel{(i)}{\leq} & \frac{4 \beta_{M}}{NJ} \sum_{i=1}^{N} \sum_{j=1}^{J} \int_{h}^{1-h}\Big( f ( X^{\ast}_{ij} (t) ) \log \frac{f ( X^{\ast}_{ij} (t) )}{f ( \widehat{X}_{ij} (t) )} - (f(X^{\ast}_{ij} (t)) - f (\widehat{X}_{ij} (t))) \Big) \mathrm{d}t\\
		\stackrel{(ii)}{\leq} & \frac{4 \beta_{M}}{NJ} \sum_{i=1}^{N} \sum_{j=1}^{J} \int_{h}^{1-h}\Bigg( \frac{\int_{0}^{1} K_h (t-s) f(X^{\ast}_{ij} (s)) \mathrm{d} s}{\int_{0}^{1} K_h (t-s) \mathrm{d} s} \log \frac{f ( X^{\ast}_{ij} (t) )}{f ( \widehat{X}_{ij} (t) )} - (f(X^{\ast}_{ij} (t)) - f (\widehat{X}_{ij} (t))) \Bigg)\mathrm{d}t \\
		& + \frac{4 \beta_{M}}{NJ} \sum_{i=1}^{N} \sum_{j=1}^{J} \int_{h}^{1-h}\left\vert \frac{\int_{0}^{1} K_h (t-s) f(X^{\ast}_{ij} (s)) \mathrm{d} s}{\int_{0}^{1} K_h (t-s) \mathrm{d} s} - f(X^{\ast}_{ij} (t)) \right\vert \left\vert \log \frac{f ( X^{\ast}_{ij} (t) )}{f ( \widehat{X}_{ij} (t) )} \right\vert \mathrm{d}t\\
		\stackrel{(iii)}{\leq} & \frac{4 \beta_{M}}{NJ} \int_{h}^{1-h}\big( \mathbb{E} \mathcal{L}_{t, h} (\mathbf{X}^{\ast} (t)) - \mathbb{E} \mathcal{L}_{t, h} (\widehat{\mathbf{X}} (t)) \big)\mathrm{d}t + \frac{4 \beta_{M}}{NJ} NJ \cdot C_m h^m \cdot 2 \sup_{ \left\vert x \right\vert \leq M } \left\vert \log f(x) \right\vert \\
		= & \frac{4 \beta_{M}}{NJ} \int_{h}^{1-h}\big( \mathbb{E} \mathcal{L}_{t, h} (\mathbf{X}^{\ast} (t)) - \mathbb{E} \mathcal{L}_{t, h} (\widehat{\mathbf{X}} (t)) \big) \mathrm{d}t+ 8 \beta_{M} C_m h^m \sup_{ \left\vert x \right\vert \leq M } \left\vert \log f(x) \right\vert,
    \end{align*}
    where $(i)$ follows from Lemma 1, $(ii)$ follows from triangle inequality, and $(iii)$ follows from the definition of $\mathcal{L}_{t, h}$ and Lemma 2. Note that here the expectation is taken only over the randomness of $\{ Y_{ij} (t) \}_{i \in [N], j \in [J]}$.
    By the definition of $\widehat{\mathbf{X}} (\cdot)$, we have $$\int_{h}^{1-h}\mathcal{L}_{t, h} (\widehat{\mathbf{X}} (t)) \mathrm{d}t = \mathcal{L}_{h}(\widehat{\mathbf{X}}) \geq \mathcal{L}_{h}(\mathbf{X}^*) = \int_{h}^{1-h}\mathcal{L}_{t, h} (\mathbf{X}^{\ast} (t))\mathrm{d}t.$$ 
    Then we have
    \begin{align*}
	   &\int_{h}^{1-h}\big(\mathbb{E} \mathcal{L}_{t, h} (\mathbf{X}^{\ast} (t)) - \mathbb{E} \mathcal{L}_{t, h} (\widehat{\mathbf{X}} (t))\big)\mathrm{d}t\\
        \leq & \int_{h}^{1-h}\big(\mathbb{E} \mathcal{L}_{t, h} (\mathbf{X}^{\ast} (t)) - \mathbb{E} \mathcal{L}_{t, h} (\widehat{\mathbf{X}} (t)) + \mathcal{L}_{t, h} (\widehat{\mathbf{X}} (t)) - \mathcal{L}_{t, h} (\mathbf{X}^{\ast} (t))\big)\mathrm{d}t\\
		\leq & 2 \int_{h}^{1-h}\sup_{\mathbf{X} \in G} \left\vert \mathcal{L}_{t, h} (\mathbf{X}) - \mathbb{E} \mathcal{L}_{t, h} (\mathbf{X}) \right\vert \mathrm{d}t.
    \end{align*}
    Therefore we have	\begin{align}\label{eq.thm3.12}
		\frac{1}{NJ} \int_{h}^{1-h}\big\| \widehat{\mathbf{X}} (t) - \mathbf{X}^{\ast} (t) \big\|_F^2 \mathrm{d}t\lesssim \frac{1}{NJ} \int_{h}^{1-h}\sup_{\mathbf{X} \in G} \left\vert \mathcal{L}_{t, h} (\mathbf{X}) - \mathbb{E} \mathcal{L}_{t, h} (\mathbf{X}) \right\vert \mathrm{d}t+ h^m .
    \end{align}
    Now we partition a $\Delta$-net on $[0,1]$ with $\Delta=1/(4L_Mh^{m+4})$ and $t_k=k\Delta$, $k=1,\ldots,\tilde{K}$, where $\tilde{K}=[1/\Delta]$. Then we get that
    \begin{align}\label{eq.thm3.1}
		&\mathbb{P} \left( \frac{1}{NJ} \int_{h}^{1-h}\sup_{\mathbf{X} \in G} \left\vert \mathcal{L}_{t, h} (\mathbf{X}) - \mathbb{E} \mathcal{L}_{t, h} (\mathbf{X}) \right\vert \mathrm{d}t\geq 2h^m \right) \notag\\
        \leq &\sum_{k=1,\ldots,\tilde{K}:t_k\in[h,1-h]}\mathbb{P} \left( \sup_{\mathbf{X} \in G} \left\vert \mathcal{L}_{t_k, h} (\mathbf{X}) - \mathbb{E} \mathcal{L}_{t_k, h} (\mathbf{X}) \right\vert \geq NJ h^m \right)\notag\\
        &+\mathbb{P} \left( \sup_{|t-t^{\prime}|\leq \Delta}\left[\sup_{\mathbf{X} \in G} \left\vert \mathcal{L}_{t, h} (\mathbf{X}) - \mathbb{E} \mathcal{L}_{t, h} (\mathbf{X}) \right\vert-\sup_{\mathbf{X} \in G} \left\vert \mathcal{L}_{t^{\prime}, h} (\mathbf{X}) - \mathbb{E} \mathcal{L}_{t^{\prime}, h} (\mathbf{X}) \right\vert\right] \geq NJ h^m \right)\notag\\
        \triangleq& I_1+I_2.
    \end{align}
    We will then bound $I_1$ and $I_2$ separately in Step 1 and Step 2.\\[3mm]
    \textbf{Step 1: }Bound $I_1$.\\[3mm]
    \noindent For matrix $\mathbf{X}=(X_{ij})\in \mathbb{R}^{N\times J}$, define max norm $\|\cdot\|_{\text{max}}$ as $\|\mathbf{X}\|_{\text{max}}\triangleq \max_{i,j}|X_{ij}|$. Let $G'$ be an $h^{m+1} - \text{covering}$ of $G$ with respect to the max norm, s.t. $\vert G' \vert = N \left( h^{m+1}, G, \lVert \cdot \rVert_{\text{max}} \right)$. We know that $\forall \mathbf{X} \in G$, $\exists \mathbf{X}' \in G'$ such that $\Vert \mathbf{X} - \mathbf{X}' \Vert_{\text{max}} \leq h^{m+1}$. For any fixed $t\in[h,1-h]$, we have
    \begin{align*}
        & \left\vert \mathcal{L}_{t, h} (\mathbf{X}) - \mathbb{E} \mathcal{L}_{t, h} (\mathbf{X}) - \mathcal{L}_{t, h} (\mathbf{X}') + \mathbb{E} \mathcal{L}_{t, h} (\mathbf{X}') \right\vert \\
        = & \left\vert \sum_{i=1}^{N} \sum_{j=1}^{J} \frac{\int_{0}^{1} K_h (t-s) (\mathrm{d} Y_{ij} (s) - f(X_{ij} (s)) \mathrm{d} s)}{\int_{0}^{1} K_h (t-s) \mathrm{d} s} \left( \log f(X_{ij}) - \log f(X_{ij}') \right) \right\vert \\
        \leq & L_{M} h^{m+1} \sum_{i=1}^{N} \sum_{j=1}^{J} \left\vert \frac{\int_{0}^{1} K_h (t-s) (\mathrm{d} Y_{ij} (s) - f(X_{ij} (s)) \mathrm{d} s)}{\int_{0}^{1} K_h (t-s) \mathrm{d} s} \right\vert.
	\end{align*}
	Here we use an simple fact that $\vert \log f(a) - \log f(b) \vert \leq L_M \vert a - b \vert$, if $a, b \in [-M, M]$. Therefore, if
    \begin{equation*}
		\sum_{i=1}^{N} \sum_{j=1}^{J} \left\vert \frac{\int_{0}^{1} K_h (t-s) (\mathrm{d} Y_{ij} (s) - f(X_{ij} (s)) \mathrm{d} s)}{\int_{0}^{1} K_h (t-s) \mathrm{d} s} \right\vert \leq \frac{NJ}{2 L_M h}
    \end{equation*}
    and
    \begin{equation*}
		\sup_{\mathbf{X} \in G} \left\vert \mathcal{L}_{t, h} (\mathbf{X}) - \mathbb{E} \mathcal{L}_{t, h} (\mathbf{X}) \right\vert \geq NJ h^m,
    \end{equation*}
    then we have
    \begin{equation*}
		\sup_{\mathbf{X}' \in G'} \left\vert \mathcal{L}_{t, h} (\mathbf{X}') - \mathbb{E} \mathcal{L}_{t, h} (\mathbf{X}') \right\vert \geq \frac{NJ h^m}{2}.
    \end{equation*}
    Hence we get
	\begin{align}\label{eq.thm3.2}
		& \mathbb{P} \left( \sup_{\mathbf{X} \in G} \left\vert \mathcal{L}_{t, h} (\mathbf{X}) - \mathbb{E} \mathcal{L}_{t, h} (\mathbf{X}) \right\vert \geq NJ h^m \right) \notag\\
		\leq & \mathbb{P} \left( \sum_{i=1}^{N} \sum_{j=1}^{J} \left\vert \frac{\int_{0}^{1} K_h (t-s) (\mathrm{d} Y_{ij} (s) - f(X_{ij} (s)) \mathrm{d} s)}{\int_{0}^{1} K_h (t-s) \mathrm{d} s} \right\vert \geq \frac{NJ}{2 L_M h} \right) \notag\\
		& + \mathbb{P} \left( \sup_{\mathbf{X}' \in G'} \left\vert \mathcal{L}_{t, h} (\mathbf{X}') - \mathbb{E} \mathcal{L}_{t, h} (\mathbf{X}') \right\vert \geq \frac{NJ h^m}{2} \right) \notag\\
		\leq & \mathbb{P} \left( \sum_{i=1}^{N} \sum_{j=1}^{J} \left\vert \frac{\int_{0}^{1} K_h (t-s) (\mathrm{d} Y_{ij} (s) - f(X_{ij} (s)) \mathrm{d} s)}{\int_{0}^{1} K_h (t-s) \mathrm{d} s} \right\vert \geq \frac{NJ}{2 L_M h} \right)\notag \\
		& + N \left( h^{m+1}, G, \lVert \cdot \rVert_{\text{max}} \right) \sup_{\mathbf{X}' \in G'} \mathbb{P} \left( \left\vert \mathcal{L}_{t, h} (\mathbf{X}') - \mathbb{E} \mathcal{L}_{t, h} (\mathbf{X}') \right\vert \geq \frac{NJ h^m}{2} \right) \notag\\
		\leq & \mathbb{P} \left( \sum_{i=1}^{N} \sum_{j=1}^{J} \left\vert \frac{\int_{0}^{1} K_h (t-s) (\mathrm{d} Y_{ij} (s) - f(X_{ij} (s)) \mathrm{d} s)}{\int_{0}^{1} K_h (t-s) \mathrm{d} s} \right\vert \geq \frac{NJ}{2 L_M h} \right) \notag\\
		& + N \left( h^{m+1}, G, \lVert \cdot \rVert_{\text{max}} \right) \sup_{\mathbf{X} \in G} \mathbb{P} \left( \left\vert \mathcal{L}_{t, h} (\mathbf{X}) - \mathbb{E} \mathcal{L}_{t, h} (\mathbf{X}) \right\vert \geq \frac{NJ h^m}{2} \right).
	\end{align}
    We then bound the two terms in Eq.~\eqref{eq.thm3.2} in Step 1.1 and Step 1.2, respectively.\\[3mm]
    \textbf{Step 1.1: }Bound the first term in (\ref{eq.thm3.2}).\\[3mm]
    For sufficiently small $x > 0$ (determined later), we have
	\begin{align}\label{eq.thm3.3}
		& \mathbb{P} \left( \sum_{i=1}^{N} \sum_{j=1}^{J} \left\vert \frac{\int_{0}^{1} K_h (t-s) (\mathrm{d} Y_{ij} (s) - f(X_{ij} (s)) \mathrm{d} s)}{\int_{0}^{1} K_h (t-s) \mathrm{d} s} \right\vert \geq \frac{NJ}{2 L_M h} \right)\notag \\
	    \leq & \exp \left( - \frac{x NJ}{2 L_M h}  \right) \mathbb{E} \left[ \exp \left( x \sum_{i=1}^{N} \sum_{j=1}^{J} \left\vert \frac{\int_{0}^{1} K_h (t-s) (\mathrm{d} Y_{ij} (s) - f(X_{ij} (s)) \mathrm{d} s)}{\int_{0}^{1} K_h (t-s) \mathrm{d} s} \right\vert \right) \right] \notag\\
	    = & \exp \left( - \frac{x NJ}{2 L_M h}  \right) \prod_{i=1}^{N} \mathbb{E} \left[ \exp \left( x  \sum_{j=1}^{J} \left\vert \frac{\int_{0}^{1} K_h (t-s) (\mathrm{d} Y_{ij} (s) - f(X_{ij} (s)) \mathrm{d} s)}{\int_{0}^{1} K_h (t-s) \mathrm{d} s} \right\vert \right) \right].
    \end{align}
    For any $1 \leq i \leq N$, we have the following estimation:
    \begin{align}\label{eq.thm3.4}
		& \mathbb{E} \left[ \exp \left( x  \sum_{j=1}^{J} \left\vert \frac{\int_{0}^{1} K_h (t-s) (\mathrm{d} Y_{ij} (s) - f(X_{ij} (s)) \mathrm{d} s)}{\int_{0}^{1} K_h (t-s) \mathrm{d} s} \right\vert \right) \right] \notag\\
		\leq & \sum_{m=0}^{\infty} \frac{x^m}{m !} \mathbb{E} \left[ \left( \sum_{j=1}^{J} \left\vert \frac{\int_{0}^{1} K_h (t-s) (\mathrm{d} Y_{ij} (s) - f(X_{ij} (s)) \mathrm{d} s)}{\int_{0}^{1} K_h (t-s) \mathrm{d} s} \right\vert \right)^m \right]\notag \\
		\leq & \sum_{m=0}^{\infty} \frac{J^{m-1} x^m}{m !} \sum_{j=1}^{J} \mathbb{E} \left[ \left\vert \frac{\int_{0}^{1} K_h (t-s) (\mathrm{d} Y_{ij} (s) - f(X_{ij} (s)) \mathrm{d} s)}{\int_{0}^{1} K_h (t-s) \mathrm{d} s} \right\vert ^m \right] \notag\\
		\leq & \sum_{m=0}^{\infty} \frac{2^m J^{m-1} x^m}{m !} \sum_{j=1}^{J} \mathbb{E} \left[ \left\vert \frac{\int_{0}^{1} K_h (t-s) \mathrm{d} Y_{ij} (s)}{\int_{0}^{1} K_h (t-s) \mathrm{d} s} \right\vert ^m \right].
    \end{align}
    In the last line we use the inequality $\mathbb{E} [\vert X - \mathbb{E} X \vert^m] \leq 2^m \mathbb{E} [\vert X \vert^m]$. Since for any $t\in [h,1-h]$, $\int_{0}^{1} K_h (t-s) \mathrm{d} s=\int_{-1}^1 K(s)\mathrm{d}s=1$ by Condition 3, we have	\begin{equation}\label{eq.thm3.5}
		\mathbb{E} \left[ \left\vert \frac{\int_{0}^{1} K_h (t-s) \mathrm{d} Y_{ij} (s)}{\int_{0}^{1} K_h (t-s) \mathrm{d} s} \right\vert ^m \right] \leq \mathbb{E} \left[ \left( \int_{0}^{1} \left\vert K_h (t-s) \right\vert \mathrm{d} Y_{ij} (s) \right)^m \right].
    \end{equation}
    Now we derive moment bounds for $\int_{0}^{1} \left\vert K_h (t-s) \right\vert \mathrm{d} Y_{ij} (s)$. Denote $(A_1, \cdots, A_k)$ a partition of $[m] = \{1, \cdots, m\}$ into $k$ non-distinct subsets, where each $A_i, 1 \leq i \leq k$ is nonempty. Let $S(m, k)$ denote the number of such partitions, i.e., Stirling number of the second hand. It is known that
    \begin{equation*}
        S (m, k) \leq \frac{1}{2} \binom{m}{k} k^{m-k} \leq \frac{1}{2} \binom{m}{k} m^{m-k}.
    \end{equation*}
    Denote $S = \{ (s_1, \cdots, s_k) \in [0, 1]^k : s_i \neq s_j, \forall i, j \}$. Using Condition 4, we can obtain
	\begin{align*}
		\mathbb{E} \left[ \left( \int_{0}^{1} \left\vert K_h (t-s) \right\vert \mathrm{d} Y_{ij} (s) \right)^m \right]=& \sum_{k=1}^{m} \sum_{(A_1, \cdots, A_k)} \int_{S} \prod_{l=1}^{k} \vert K_h (t-s_l) \vert^{\vert A_l \vert} \mathbb{E} \left[ \prod_{l=1}^{k} \mathrm{d} Y_{ij} (s_l) \right] \\
		\leq & \sum_{k=1}^{m} \lambda^k \sum_{(A_1, \cdots, A_k)} \prod_{l=1}^{k} \int_{0}^{1} \vert K_h (t-s_l) \vert^{\vert A_l \vert} \mathrm{d} s_l\\
		=& \sum_{k=1}^{m} \lambda^k h^{-(m - k)} \sum_{(A_1, \cdots, A_k)} \prod_{l=1}^{k} \int_{\mathbb{R}} \vert K (u) \vert^{\vert A_l \vert} \mathrm{d} u.
    \end{align*}
    With the aid of Condition 3 and Laplace's method, we easily obtain that there is a constant $L' > 0$ such that
    \begin{equation*}
	   \int_{\mathbb{R}} \vert K (u) \vert^{\vert A_l \vert} \mathrm{d} u \leq L' \frac{\vert K(0) \vert^{\vert A_l \vert}}{\sqrt{\vert A_l \vert}},
    \end{equation*}
    thus leading to the following estimation:
    \begin{align*}
		\mathbb{E} \left[ \left( \int_{0}^{1} \left\vert K_h (t-s) \right\vert \mathrm{d} Y_{ij} (s) \right)^m \right] \leq& \sum_{k=1}^{m} \lambda^k h^{-(m - k)} \sum_{(A_1, \cdots, A_k)} \prod_{l=1}^{k} \int_{\mathbb{R}} \vert K (u) \vert^{\vert A_l \vert} \mathrm{d} u \\
		\leq & \sum_{k=1}^{m} \lambda^k L'^k \vert K(0) \vert^m h^{-(m - k)} \sum_{(A_1, \cdots, A_k)} \prod_{l=1}^{k} \frac{1}{\sqrt{\vert A_l \vert}}\\
		\leq& \sum_{k=1}^{m} \lambda^k L'^k \vert K(0) \vert^m h^{-(m - k)} S(m, k) \\
		\leq & \frac{1}{2} \vert K(0) \vert^m \sum_{k=1}^{m} \binom{m}{k} \lambda^k L'^k \left( \frac{m}{h} \right)^{m-k}\\
		=& \frac{1}{2} \vert K(0) \vert^m \left[\left( \lambda L' + \frac{m}{h} \right)^m-\left(\frac{m}{h} \right)^m\right]\\
		\leq&\frac{1}{2} \vert K(0) \vert^m\left(\frac{m}{h} \right)^m\left(\exp\left(\lambda L^{\prime}h\right)-1\right).
    \end{align*}
    Since $m^m \leq m! e^m$, by (\ref{eq.thm3.5}) there exists an absolute constant $L > 0$ such that
    \begin{equation*}
		\mathbb{E} \left[ \left\vert \frac{\int_{0}^{1} K_h (t-s) \mathrm{d} Y_{ij} (s)}{\int_{0}^{1} K_h (t-s) \mathrm{d} s} \right\vert ^m \right] \leq \frac{m!}{2^{m+1}} \left( \frac{L}{h} \right)^mh.
    \end{equation*}
    Hence by (\ref{eq.thm3.4}), for any $0<x<h/JL$, we have
    \begin{align*}
		\mathbb{E} \left[ \exp \left( x  \sum_{j=1}^{J} \left\vert \frac{\int_{0}^{1} K_h (t-s) (\mathrm{d} Y_{ij} (s) - f(X_{ij} (s)) \mathrm{d} s)}{\int_{0}^{1} K_h (t-s) \mathrm{d} s} \right\vert \right) \right] \leq & 1+\sum_{m=1}^{\infty} \frac{2^m J^{m-1} x^m}{m !} \sum_{j=1}^{J} \frac{m!h}{2^{m+1}} \left( \frac{L}{h} \right)^m\\
		\leq& \frac{1}{2} \left( 1 - \frac{JLx}{h} \right)^{-1}.
    \end{align*}
    By (\ref{eq.thm3.3}), this indicates that:
    \begin{align*}
		\mathbb{P} \left( \sum_{i=1}^{N} \sum_{j=1}^{J} \left\vert \frac{\int_{0}^{1} K_h (t-s) (\mathrm{d} Y_{ij} (s) - f(X_{ij} (s)) \mathrm{d} s)}{\int_{0}^{1} K_h (t-s) \mathrm{d} s} \right\vert \geq \frac{NJ}{2 L_M h} \right)\leq & \exp \left( - \frac{x NJ}{2 L_M h}  \right) \frac{1}{2^N} \left( 1 - \frac{JLx}{h} \right)^{-N}.
    \end{align*}
    Hence by choosing $x = h / (2JL)$, we have	\begin{equation}\label{eq.thm3.6}
		\mathbb{P} \left( \sum_{i=1}^{N} \sum_{j=1}^{J} \left\vert \frac{\int_{0}^{1} K_h (t-s) (\mathrm{d} Y_{ij} (s) - f(X_{ij} (s)) \mathrm{d} s)}{\int_{0}^{1} K_h (t-s) \mathrm{d} s} \right\vert \geq \frac{NJ}{2 L_M h} \right) \leq \exp \left( - \frac{N}{4 L L_M} \right).
    \end{equation}
    \textbf{Step 1.2: }Bound the second term in (\ref{eq.thm3.2}).\\[3mm]
	For any $\mathbf{X} \in G$ and $x>0$ small enough (determined later), we have
    \begin{align}\label{eq.thm3.7}
		& \mathbb{P} \left( \left\vert \mathcal{L}_{t, h} (\mathbf{X}) - \mathbb{E} \mathcal{L}_{t, h} (\mathbf{X}) \right\vert \geq \frac{NJ h^m}{2} \right) \notag\\
		= & \mathbb{P} \left( \mathcal{L}_{t, h} (\mathbf{X}) - \mathbb{E} \mathcal{L}_{t, h} (\mathbf{X}) \geq \frac{NJ h^m}{2} \right) + \mathbb{P} \left( - \mathcal{L}_{t, h} (\mathbf{X}) + \mathbb{E} \mathcal{L}_{t, h} (\mathbf{X}) \geq \frac{NJ h^m}{2} \right) \notag\\
		\leq & \exp \left( - \frac{x NJ h^m}{2} \right) \mathbb{E} \left[ \exp \left( x \sum_{i=1}^{N} \sum_{j=1}^{J} \frac{\int_{0}^{1} K_h (t-s) (\mathrm{d} Y_{ij} (s) - f(X_{ij} (s)) \mathrm{d} s)}{\int_{0}^{1} K_h (t-s) \mathrm{d} s} \log f(X_{ij}) \right) \right] \notag\\
		& + \exp \left( - \frac{x NJ h^m}{2} \right) \mathbb{E} \left[ \exp \left( - x \sum_{i=1}^{N} \sum_{j=1}^{J} \frac{\int_{0}^{1} K_h (t-s) (\mathrm{d} Y_{ij} (s) - f(X_{ij} (s)) \mathrm{d} s)}{\int_{0}^{1} K_h (t-s) \mathrm{d} s} \log f(X_{ij}) \right) \right]\notag\\
		\leq & \exp \left( - \frac{x NJ h^m}{2} \right) \prod_{i=1}^N \mathbb{E} \left[ \exp \left( x \sum_{j=1}^{J} \frac{\int_{0}^{1} K_h (t-s) (\mathrm{d} Y_{ij} (s) - f(X_{ij} (s)) \mathrm{d} s)}{\int_{0}^{1} K_h (t-s) \mathrm{d} s} \log f(X_{ij}) \right) \right] \notag\\
		& + \exp \left( - \frac{x NJ h^m}{2} \right) \prod_{i=1}^N\mathbb{E} \left[ \exp \left( - x  \sum_{j=1}^{J} \frac{\int_{0}^{1} K_h (t-s) (\mathrm{d} Y_{ij} (s) - f(X_{ij} (s)) \mathrm{d} s)}{\int_{0}^{1} K_h (t-s) \mathrm{d} s} \log f(X_{ij}) \right) \right].
    \end{align}
    By Condition 4(iii) we have
	\begin{align*}
		&\mathbb{E} \left[ \exp \left( x \sum_{j=1}^{J} \frac{\int_{0}^{1} K_h (t-s) (\mathrm{d} Y_{ij} (s) - f(X_{ij} (s)) \mathrm{d} s)}{\int_{0}^{1} K_h (t-s) \mathrm{d} s} \log f(X_{ij}) \right) \right]\\
		=&\prod_{k=1}^{W_i}\mathbb{E} \left[ \exp \left( x \sum_{j\in B_{i,k}} \frac{\int_{0}^{1} K_h (t-s) (\mathrm{d} Y_{ij} (s) - f(X_{ij} (s)) \mathrm{d} s)}{\int_{0}^{1} K_h (t-s) \mathrm{d} s} \log f(X_{ij}) \right) \right]\\
		\leq&\prod_{k=1}^{W_i}\left[\frac{1}{|B_{i,k}|}\sum_{j\in B_{i,k}}\left[\mathbb{E}  \exp \left( x|B_{i,k}|\log f(X_{ij})  \frac{\int_{0}^{1} K_h (t-s) (\mathrm{d} Y_{ij} (s) - f(X_{ij} (s)) \mathrm{d} s)}{\int_{0}^{1} K_h (t-s) \mathrm{d} s}  \right) \right]\right]\\
		\leq&\prod_{k=1}^{W_i}\left[\frac{1}{|B_{i,k}|}\sum_{j\in B_{i,k}}\left[1+\sum_{m=2}^{\infty}\frac{1}{m!}x^m|B_{i,k}|^m\log^m f(X_{ij})\mathbb{E}  \left|\frac{\int_{0}^{1} K_h (t-s) (\mathrm{d} Y_{ij} (s) - f(X_{ij} (s)) \mathrm{d} s)}{\int_{0}^{1} K_h (t-s) \mathrm{d} s}   \right|^m\right]\right]\\
		\leq&\prod_{k=1}^{W_i}\left[\frac{1}{|B_{i,k}|}\sum_{j\in B_{i,k}}\left[1+\sum_{m=2}^{\infty}\frac{1}{m!}x^m|B_{i,k}|^m|\log^m f(X_{ij})|2^m\mathbb{E}  \left|\frac{\int_{0}^{1} K_h (t-s) \mathrm{d} Y_{ij} (s) }{\int_{0}^{1} K_h (t-s) \mathrm{d} s}   \right|^m\right]\right]\\
		\leq&\prod_{k=1}^{W_i}\left[\frac{1}{|B_{i,k}|}\sum_{j\in B_{i,k}}\left[1+\sum_{m=2}^{\infty}\frac{1}{2}x^m|B_{i,k}|^m|\log^m f(X_{ij})|\left(\frac{L}{h}\right)^mh\right]\right]\\
		\leq&\prod_{k=1}^{W_i}\left(1+C\frac{x^2|B_{i,k}|^2}{h}\right) \ \text{when $0\leq |x| |B_{i,k}| / h<1/C$ }\\
		\leq&\exp\left(\frac{Cx^2}{h}\sum_{k=1}^{W_i}|B_{i,k}|^2\right)\\
		\leq& \exp\left(CJ\phi(J)\frac{x^2}{h}\right).
    \end{align*}
    Similarly, for $x>0$ small enough, we can get
    \begin{align*}
        \mathbb{E} \left[ \exp \left(- x \sum_{j=1}^{J} \frac{\int_{0}^{1} K_h (t-s) (\mathrm{d} Y_{ij} (s) - f(X_{ij} (s)) \mathrm{d} s)}{\int_{0}^{1} K_h (t-s) \mathrm{d} s} \log f(X_{ij}) \right) \right]\leq \exp\left(CJ\phi(J)\frac{x^2}{h}\right).
    \end{align*}
    Take $x=h^{m+1}/(4C\phi(J))$, then $|x| |B_{i,k}|/h \leq h^m/(4C)\ll 1$. Then by (\ref{eq.thm3.7}) we have
    \begin{align}\label{eq.thm3.8}
		\mathbb{P} \left( \left\vert \mathcal{L}_{t, h} (\mathbf{X}) - \mathbb{E} \mathcal{L}_{t, h} (\mathbf{X}) \right\vert \geq \frac{NJ h^m}{2} \right)&\leq 2\exp\left(- \frac{x NJ h^m}{2}+CJ\phi(J)\frac{x^2}{h}\right)\notag\\
        &\leq2\exp\left(-\frac{NJh^{2m+1}}{16C\phi(J)}\right).
    \end{align}
    Note that by the low rank assumption on $G$, we have 
    \begin{align}\label{eq.thm3.9}
        N \left( h^{m+1}, G, \lVert \cdot \rVert_{\text{max}} \right) \leq \left( \frac{C}{h^{m+1}} \right)^{r(N+J)} = \exp \left( r(N+J) \left( \log C + (m+1) \log \left( \frac{1}{h} \right) \right) \right).
    \end{align}
    Hence by (\ref{eq.thm3.2}), (\ref{eq.thm3.6}), (\ref{eq.thm3.8}) and (\ref{eq.thm3.9}), for any fixed $t\in[h,1-h]$ we have
    \begin{align*}
        &\mathbb{P} \left( \sup_{\mathbf{X} \in G} \left\vert \mathcal{L}_{t, h} (\mathbf{X}) - \mathbb{E} \mathcal{L}_{t, h} (\mathbf{X}) \right\vert \geq NJ h^m \right)\\
        \leq& 2 N \left( h^{m+1}, G, \lVert \cdot \rVert_{\text{max}} \right) \exp\left(-\frac{NJh^{2m+1}}{16C\phi(J)}\right) + \exp \left( - \frac{N}{4LL_{M}} \right)\\
        \leq&2\exp \left( r(N+J) \left( \log C + (m+1) \log \left( \frac{1}{h} \right) \right)  -\frac{NJh^{2m+1}}{16C\phi(J)}\right)+ \exp \left( - \frac{N}{4LL_{M}} \right)\\
        =&2\exp \left( r(N+J) \left( \log C + \frac{(m+1)(1-\delta)}{2m+1}\log \left(\frac{J}{\phi(J)}\right)  \right) \right.\\&\left.- \frac{N}{16C}\left(\frac{J}{\phi(J)}\right)^{(2m+1)\delta/m}\right)+ \exp \left( - \frac{N}{4LL_{M}} \right)\\
        \lesssim& \exp \left( \tilde{C}N \log \left(\frac{J}{\phi(J)}\right) - \frac{N}{16C}\left(\frac{J}{\phi(J)}\right)^{(2m+1)\delta/m}\right)+ \exp \left( - \frac{N}{4LL_{M}} \right)\\
        \lesssim& \exp \left( - \frac{N}{4LL_{M}} \right).
    \end{align*}
    This implies that
    \begin{align}\label{eq.thm3.10}
		I_1=&\sum_{k=1,\ldots,\tilde{K}:t_k\in[h,1-h]}\mathbb{P} \left( \sup_{\mathbf{X} \in G} \left\vert \mathcal{L}_{\mathbf{Y} (t_k), h} (\mathbf{X}) - \mathbb{E} \mathcal{L}_{\mathbf{Y} (t_k), h} (\mathbf{X}) \right\vert \geq NJ h^m \right)\notag\\
		\lesssim& \frac{1}{\Delta}\exp \left( - \frac{N}{4LL_{M}} \right)\notag\\
		\lesssim&h^{-(m+4)} \exp \left( - \frac{N}{4LL_{M}} \right).
    \end{align}
    \textbf{Step 2: }Bound $I_2$.\\[3mm]
    Since $K(x)$ is Lipschitz, we assume the corresponding Lipschitz constant is $L_K$. Then $K^{\prime}(x)$ exists almost everywhere and $|K^{\prime}(x)|\leq L_K$, then for any $t\in[h,1-h]$ and any $\mathbf{X}\in G$ we have
    \begin{align*}
		&\left|\frac{\mathrm{d}}{\mathrm{d}t}\left[\mathcal{L}_{t, h} \big({\mathbf{X}}\big)-\mathbb{E}\mathcal{L}_{t, h} \big( {\mathbf{X}} \big)\right]\right|\\
        \leq&\sum_{i=1}^N\sum_{j=1}^JL_K\sup_{ \left\vert x \right\vert \leq M } \left\vert \log f(x) \right\vert\frac{\int_{0}^{1}   \mathrm{d} Y_{ij}(s)}{h^2\left|\int_{0}^{1} K_h (t-s) \mathrm{d} s\right|}+\sum_{i=1}^N\sum_{j=1}^JL_K\sup_{ \left\vert x \right\vert \leq M } \left\vert \log f(x) \right\vert\frac{\int_{0}^{1}   f(X^{\ast}_{ij}(s))\mathrm{d} s}{h^2\left|\int_{0}^{1} K_h (t-s) \mathrm{d} s\right|}\\
	   \lesssim&\frac{1}{h^2}\left[\sum_{i=1}^N\sum_{j=1}^J\left(\int_{0}^{1}   \mathrm{d} Y_{ij}(s)+1\right)\right],
    \end{align*}
    which is independent of $t$. This implies that $\sup_{\mathbf{X}\in G}\big|\mathcal{L}_{t, h} \big({\mathbf{X}}\big)-\mathbb{E}\mathcal{L}_{t, h} \big( {\mathbf{X}} \big)\big|$ is a Lipschitz function with Lipschitz constant $C\left[\sum_{i=1}^N\sum_{j=1}^J\left(\int_{0}^{1}   \mathrm{d} Y_{ij}(s)+1\right)\right]/h^2$ on $[h,1-h]$. We then derive the tail probability of $\sum_{i=1}^N\sum_{j=1}^J\int_{0}^{1}   \mathrm{d} Y_{ij}(s)$. For sufficiently small $x > 0$ (determined later), we have
    \begin{align*}
        \mathbb{P}\left(\sum_{i=1}^N\sum_{j=1}^J\int_{0}^{1}   \mathrm{d} Y_{ij}(s)\geq\frac{ NJ}{4 L_M h} \right)\leq & \exp \left( - \frac{x NJ}{4 L_M h}  \right) \mathbb{E} \left[ \exp \left( x \sum_{i=1}^{N} \sum_{j=1}^{J}  \int_{0}^{1}\mathrm{d}Y_{ij}(t) \right) \right] \\
        = & \exp \left( - \frac{x NJ}{4 L_M h}  \right) \prod_{i=1}^{N} \mathbb{E} \left[ \exp \left( x  \sum_{j=1}^{J}  \int_{0}^{1}\mathrm{d}Y_{ij}(t)\right) \right].
    \end{align*}
    For any $1 \leq i \leq N$, we have the following estimation:
    \begin{align*}
        \mathbb{E} \left[ \exp \left( x  \sum_{j=1}^{J} \int_{0}^{1}\mathrm{d}Y_{ij}(t) \right) \right]\leq & \sum_{m=0}^{\infty} \frac{x^m}{m !} \mathbb{E} \left[ \left( \sum_{j=1}^{J}  \int_{0}^{1}\mathrm{d}Y_{ij}(t) \right)^m \right] \\
        \leq & \sum_{m=0}^{\infty} \frac{J^{m-1} x^m}{m !} \sum_{j=1}^{J} \mathbb{E} \left[ \left( \int_{0}^{1}\mathrm{d}Y_{ij}(t)\right) ^m \right] .
    \end{align*}
    By similar method as in Step 1.1, there exists an absolute constant $L>0$ (for simplicity we assume the constant $L$ is the same as in Step 1.1) such that
    \begin{equation*}
		\mathbb{E} \left[ \left( \int_{0}^{1}\mathrm{d} Y_{ij} (t)\right)^m \right] \leq \frac{m!}{2}\left[e(1+\lambda)\right]^m\triangleq \frac{m!}{2}L^m.
    \end{equation*}
    Hence we get that
    \begin{align*}
	   \mathbb{E} \left[ \exp \left( x  \sum_{j=1}^{J} \int_{0}^{1}\mathrm{d}Y_{ij}(t) \right) \right]\leq & \sum_{m=0}^{\infty} \frac{ J^{m-1} x^m}{m !} \sum_{j=1}^{J} \frac{m!}{2} L^m = \frac{1}{2} \left( 1 - {JLx} \right)^{-1}.
    \end{align*}
    So for any $0<x<1/JL$, we have the following estimation:
    \begin{align*}
	   \mathbb{P}\left(\sum_{i=1}^N\sum_{j=1}^J\int_{0}^{1}   \mathrm{d} Y_{ij}(s)\geq\frac{ NJ}{4 L_M h} \right)\leq & \exp \left( - \frac{x NJ}{4 L_M h}  \right) \frac{1}{2} \left( 1 - JLx \right)^{-1}.
    \end{align*}
    Choose $x = 1 / (2JL)$, we obtain that
    \begin{equation*}
		\mathbb{P}\left(\sum_{i=1}^N\sum_{j=1}^J\int_{0}^{1}   \mathrm{d} Y_{ij}(s)\geq\frac{NJ}{4 L_M h} \right) \leq \exp \left( - \frac{N}{8 L L_M h} \right).
    \end{equation*}
    Since $\Delta=1/(4L_Mh^{m+4})$, we have
    \begin{align}\label{eq.thm3.11}
        I_2=&\mathbb{P} \left( \sup_{|t-t^{\prime}|\leq \Delta,t,t^{\prime}}\left[\sup_{\mathbf{X} \in G} \left\vert \mathcal{L}_{t, h} (\mathbf{X}) - \mathbb{E} \mathcal{L}_{t, h} (\mathbf{X}) \right\vert-\sup_{\mathbf{X} \in G} \left\vert \mathcal{L}_{\mathbf{Y} (t^{\prime}), h} (\mathbf{X}) - \mathbb{E} \mathcal{L}_{\mathbf{Y} (t^{\prime}), h} (\mathbf{X}) \right\vert\right] \geq NJ h^m \right)\notag\\
		\leq&\mathbb{P}\left(\frac{C}{h^2}\left[\sum_{i=1}^N\sum_{j=1}^J\left(\int_{0}^{1}   \mathrm{d} Y_{ij}(s)+1\right)\right]\geq\frac{NJ}{4 L_M h^4} \right)\notag\\
		\lesssim&\mathbb{P}\left(\sum_{i=1}^N\sum_{j=1}^J\int_{0}^{1}   \mathrm{d} Y_{ij}(s)\geq\frac{NJ}{4 L_M h} \right)\notag\\
		\leq&\exp \left( - \frac{N}{8 L L_M h} \right).
    \end{align}
    \textbf{Step 3: }Obtain the final bound in (\ref{eq.thm3.1}).\\[3mm]
    By (\ref{eq.thm3.10}) and (\ref{eq.thm3.11}) we have
    \begin{align*}
        \lim_{N, J \to \infty} \mathbb{P} \left( \frac{1}{NJ} \int_{h}^{1-h}\sup_{\mathbf{X} \in G} \left\vert \mathcal{L}_{t, h} (\mathbf{X}) - \mathbb{E} \mathcal{L}_{t, h} (\mathbf{X}) \right\vert \mathrm{d}t\geq 2h^m \right) = 0.
    \end{align*}
    Finally by (\ref{eq.thm3.12}) and (\ref{eq.thm3.1}), we have proved that
    \begin{align*}
        \frac{1}{NJ} \sup_{t\in[h,1-h]}\big\| \widehat{\mathbf{X}} (t) - \mathbf{X}^{\ast} (t) \big\|_F^2=O_p(h^m)=O_p\left(\left( J/\phi(J) \right)^{-m/(2m+1)+\delta}\right).
    \end{align*}
    Now we prove the second part of Theorem 1. We first give a polynomial bound by a similar method as in the proof of the first part in Step 1. Then we derive a sharper bound from the previous bound in Step 2 iteratively.\\[3mm]
    \textbf{Step 1: }Obtain a polynomial bound.\\[3mm]
    We still partition a $\Delta$-net on $[0,1]$ with $\Delta=1/(4L_Mh^{m+4})$ and $t_k=k\Delta$, $k=1,\ldots,\tilde{K}$, where $\tilde{K}=[1/\Delta]$. Then by similar method as in the proof of theorem 1, we have
    \begin{align*}
        & \mathbb{P} \left( \frac{1}{NJ} \int_{h}^{1-h}\big\| \widehat{\mathbf{X}} (t) - \mathbf{X}^{\ast} (t) \big\|_F^2 \mathrm{d}t\geq 8 \beta_{M} \Big( C_m \sup_{ \left\vert x \right\vert \leq M } \left\vert \log f(x) \right\vert + 2 \Big) h^m \right) \\
        \leq &\sum_{k=1,\ldots,\tilde{K}:t_k\in[h,1-h]}\mathbb{P} \left( \sup_{\mathbf{X} \in G} \left\vert \mathcal{L}_{t_k, h} (\mathbf{X}) - \mathbb{E} \mathcal{L}_{t_k, h} (\mathbf{X}) \right\vert \geq NJ h^m \right)\\
        &+\mathbb{P} \left( \sup_{|t-t^{\prime}|\leq \Delta,t,t^{\prime}}\left[\sup_{\mathbf{X} \in G} \left\vert \mathcal{L}_{t, h} (\mathbf{X}) - \mathbb{E} \mathcal{L}_{t, h} (\mathbf{X}) \right\vert-\sup_{\mathbf{X} \in G} \left\vert \mathcal{L}_{t^{\prime}, h} (\mathbf{X}) - \mathbb{E} \mathcal{L}_{t^{\prime}, h} (\mathbf{X}) \right\vert\right] \geq NJ h^m \right)\\
        \triangleq& I_1+I_2.
    \end{align*}
    By the same method as in the proof of the first part, we can get (note that $\phi(J)=1$ under assumption):
    \begin{align*}
        &\mathbb{P} \left( \sup_{\mathbf{X} \in G} \left\vert \mathcal{L}_{t, h} (\mathbf{X}) - \mathbb{E} \mathcal{L}_{t, h} (\mathbf{X}) \right\vert \geq NJ h^m \right)\\
        \lesssim& 2 N \left( h^{m+1}, G, \lVert \cdot \rVert_{\text{max}} \right) \exp \left( - \frac{NJh^{2m+1}}{16C} \right) + \exp \left( - \frac{N}{4LL_{M}} \right)\\
        \leq&\exp \left( r(N+J) \left( \log C + (m+1) \log \left( \frac{1}{h} \right) \right) - \frac{NJh^{2m+1}}{16C}\right)+ \exp \left( - \frac{N}{4LL_{M}} \right)\\
        \leq&\exp \left( r(N+J) \left( \log C + \frac{m+1}{2m+1}\log (N\land J)  \right) - \frac{NJr\log^2(N\land J)}{16C(N\land J)}\right)+\exp \left( - \frac{N}{4LL_{M}} \right)\\
        \lesssim& \exp \left( \frac{m+1}{2m+1}r(N+J) \log (N\land J) - \frac{NJ}{16C}\frac{r\log^2(N\land J)}{N\land J}\right) + \exp \left( - \frac{N}{4LL_{M}} \right)\\
        \lesssim& \exp \left( -\frac{NJr\log^2(N\land J)}{32C(N\land J)} \right)+\exp \left( - \frac{N}{4LL_{M}} \right)\\
        =&\exp \left( -\frac{(N\vee J)r\log^2(N\land J)}{32C} \right)+\exp \left( - \frac{N}{4LL_{M}} \right)\\
        \lesssim &\exp \left( - \frac{N}{4LL_{M}} \right),
    \end{align*}
    which implies that $I_1\lesssim h^{-(m+4)}\exp \left( - \frac{N}{4LL_{M}} \right)$. By the same method as in the first part, we also have $I_2\leq\exp \left( - \frac{N}{8 L L_M h} \right)$. So we have
    \begin{align*}
        &\mathbb{P} \left( \frac{1}{NJ} \sup_{t\in[h,1-h]}\left\lVert \widehat{\mathbf{X}} (t) - \mathbf{X}^{\ast} (t) \right\rVert_F^2 \geq 8 \beta_{M} \Big( C_m \sup_{ \left\vert x \right\vert \leq M } \left\vert \log f(x) \right\vert + 2 \Big) h^m \right)\\
        \lesssim& h^{-(m+4)}\exp \left( - \frac{N}{4LL_{M}} \right)+\exp \left( - \frac{N}{8 L L_M h} \right)\\
        \lesssim&\left(N\land J\right)^{\frac{m+4}{2m+1}}\exp \left( - \frac{N\land J}{4LL_{M}} \right)\\
        \rightarrow&0.
    \end{align*}
    Hence we have proved that when $h\asymp (N\land J)/\log^2(N\land J))^{-1/(2m+1)}$, for any $\epsilon>0$ we have
    \begin{align*}
        \frac{1}{NJ} \int_{h}^{1-h}\big\| \widehat{\mathbf{X}} (t) - \mathbf{X}^{\ast} (t) \big\|_F^2 \mathrm{d}t=O_p(h^m)=o_p\left(N\land J\right)^{-\frac{m}{2m+1}+\epsilon}.
    \end{align*}
    \textbf{Step 2: }From a presumed upper bound to a sharper one.\\[3mm]
    \noindent Assume that for some $\alpha < 0$, we have proved that
    \begin{equation*}
	   \frac{1}{NJ} \int_{h}^{1-h}\big\| \widehat{\mathbf{X}} (t) - \mathbf{X}^{\ast} (t) \big\|_F^2 \mathrm{d}t = o_p \left( \left( N\land J\right)^\alpha \right).
    \end{equation*}
    For future convenience we denote $Z_{ij} (t) = \log f(X_{ij} (t))$, $\mathcal{L}_{t, h} (\mathbf{Z}) = \mathcal{L}_{t, h} (\mathbf{X})$, and so on. Then we also have
    \begin{equation*}
		\frac{1}{NJ} \int_{h}^{1-h}\big\| \widehat{\mathbf{Z}} (t) - \mathbf{Z}^{\ast} (t) \big\|_F^2 \mathrm{d}t \leq \frac{L_M^2}{NJ} \int_{h}^{1-h}\big\| \widehat{\mathbf{X}} (t) - \mathbf{X}^{\ast} (t) \big\|_F^2 \mathrm{d}t= o_p \left( \left(N\land J \right)^\alpha \right).
    \end{equation*}
    Using Taylor's expansion, we have
    \begin{align*}
		 0 \leq& \int_{h}^{1-h}\big(\mathcal{L}_{t, h} ( \widehat{\mathbf{Z}} (t) ) - \mathcal{L}_{t, h} ( \mathbf{Z}^{\ast} (t) ) \big)\mathrm{d}t\\
		= & \sum_{i=1}^{N} \sum_{j=1}^{J} \int_{h}^{1-h}\Bigg( \frac{\int_{0}^{1} K_h (t-s) \mathrm{d} Y_{ij} (s) }{\int_{0}^{1} K_h (t-s) \mathrm{d} s}( \widehat{Z}_{ij} (t) - Z^{\ast}_{ij} (t)) - ( e^{\widehat{Z}_{ij} (t)} - e^{Z_{ij}^{\ast} (t)}) \Bigg) \mathrm{d}t\\
		= & \sum_{i=1}^{N} \sum_{j=1}^{J} \int_{h}^{1-h}\left( \frac{\int_{0}^{1} K_h (t-s) \mathrm{d} Y_{ij} (s) }{\int_{0}^{1} K_h (t-s) \mathrm{d} s} - e^{Z_{ij}^{\ast} (t)} \right) ( \widehat{Z}_{ij} (t) - Z_{ij}^{\ast} (t) )\mathrm{d}t\\
        &-\sum_{i=1}^{N} \sum_{j=1}^{J} \int_{h}^{1-h}\frac{e^{\tilde{Z}_{ij} (t) }}{2} ( \widehat{Z}_{ij} (t) - Z_{ij}^{\ast} (t) )^2 \mathrm{d}t,
    \end{align*}
    where $\tilde{Z}_{ij} (t)$ is some real number between $Z_{ij}^{\ast} (t)$ and $\widehat{Z}_{ij} (t)$. Now since $e^{\tilde{Z}_{ij} (t)} \geq \min (f(X_{ij}^{\ast} (t)), f(\widehat{X}_{ij} (t))) \geq \inf_{\left\vert x \right\vert \leq M} f(x)$, it follows that
    \begin{equation}\label{eq.thm1.1}
		\int_{h}^{1-h}\big\| \widehat{\mathbf{Z}} (t) - \mathbf{Z}^{\ast} (t) \big\|_F^2 \mathrm{d}t\lesssim\sum_{i=1}^{N} \sum_{j=1}^{J} \int_{h}^{1-h}\left( \frac{\int_{0}^{1} K_h (t-s) \mathrm{d} Y_{ij} (s) }{\int_{0}^{1} K_h (t-s) \mathrm{d} s} - f (X_{ij}^{\ast} (t)) \right)( \widehat{Z}_{ij} (t) - Z_{ij}^{\ast} (t))\mathrm{d}t.
	\end{equation}
	Denote $A_{ij}=\{ \int_{0}^{1} \mathrm{d} Y_{ij} (s)\leq \sqrt{h(N\land J)}/2K(0)\}$. By a similar proof as in the first part of Theorem 1, we can show that there exists a constant $L>0$ such that when $0<x<1/L$, we have
    \begin{align*}
	   \mathbb{E} \left[ \exp \left( x  \sum_{j=1}^{J} \int_{0}^{1}\mathrm{d}Y_{ij}(s)\mathrm{d}s \right) \right]\leq   \frac{1}{2} \left( 1 - {Lx} \right)^{-1}.
    \end{align*}
    Take $x=1/2L$, we have
    \begin{align*}
        \mathbb{P}(A_{ij}^c)=&\mathbb{P}\left(\int_{0}^{1} \mathrm{d} Y_{ij} (s)\geq \frac{\sqrt{h(N\land J)}}{2K(0)}\right)\\
		\lesssim&\exp\left(-\frac{x}{2K(0)}\sqrt{h(N\land J)}\right)\left( 1 - {Lx} \right)^{-1}\\
		\lesssim& \exp \left( -\tilde{C}h^{-m}\log(N\land J)\right).
    \end{align*}
    So we have
    \begin{align*}
		\mathbb{P}\left(\left(\cup_{i=1}^N\cup_{j=1}^J A_{ij}\right)^c\right)\lesssim NJ\exp \left( -\tilde{C}h^{-m}\log (N\land J)\right)\rightarrow 0.
    \end{align*}
    Now we set
	\begin{equation*}
		M_{ij} (t) = \frac{\int_{0}^{1} K_h (t-s) \mathrm{d} Y_{ij} (s)}{\int_{0}^{1} K_h (t-s) \mathrm{d} s} I_{A_{ij}} - \mathbb{E} \left[ \frac{\int_{0}^{1} K_h (t-s) \mathrm{d} Y_{ij} (s)}{\int_{0}^{1} K_h (t-s) \mathrm{d} s} I_{A_{ij}} \right].
	\end{equation*}
    Note that under $A_{ij}$, for any $t\in[h,1-h]$ we have
    \begin{align*}
        \left|\frac{\int_{0}^{1} K_h (t-s) \mathrm{d} Y_{ij} (s)}{\int_{0}^{1} K_h (t-s) \mathrm{d} s}\right|\leq \frac{K(0)}{2h}\int_{0}^{1} \mathrm{d} Y_{ij} (s)\leq \frac{1}{2}\sqrt{\frac{N\land J}{h}}.
    \end{align*}
    This implies that $|M_{ij}(t)|$ has a uniform bound $\sqrt{(N\land J)/h}$ for any $t\in[h,1-h]$ and any $i,j$ with probability 1. Then with probability tending to $1$, by (\ref{eq.thm1.1}) we have
    \begin{align*}
		&\int_{h}^{1-h}\big\| \widehat{\mathbf{Z}} (t) - \mathbf{Z}^{\ast} (t) \big\|_F^2 \mathrm{d}t\\
        \lesssim & \sum_{i=1}^{N} \sum_{j=1}^{J} \int_{h}^{1-h}M_{ij} (t) ( \widehat{Z}_{ij} (t) - Z^{\ast}_{ij} (t)) \mathrm{d}t\\
		& + \sum_{i=1}^{N} \sum_{j=1}^{J} \int_{h}^{1-h}\left( \mathbb{E} \left[ \frac{\int_{0}^{1} K_h (t-s) \mathrm{d} Y_{ij} (s)}{\int_{0}^{1} K_h (t-s) \mathrm{d} s} I_{A_{ij}} \right] - f(X_{ij}^{\ast} (t)) \right)( \widehat{Z}_{ij} (t) - Z_{ij}^{\ast} (t))\mathrm{d}t.
    \end{align*}
    By Lemma \ref{lem2}, for $t\in(h,1-h)$ we have the following uniform bound:
    \begin{align*}
		& \left\vert \mathbb{E} \left[ \frac{\int_{0}^{1} K_h (t-s) \mathrm{d} Y_{ij} (s)}{\int_{0}^{1} K_h (t-s) \mathrm{d} s} I_{A_{ij}} \right] - f(X_{ij}^{\ast} (t)) \right\vert \\
		\leq& \left|\mathbb{E} \left[ \frac{\int_{0}^{1} K_h (t-s) \mathrm{d} Y_{ij} (s)}{\int_{0}^{1} K_h (t-s) \mathrm{d} s} I_{A_{ij}^c} \right]\right| + \left|\mathbb{E} \left[ \frac{\int_{0}^{1} K_h (t-s) \mathrm{d} Y_{ij} (s)}{\int_{0}^{1} K_h (t-s) \mathrm{d} s}  \right]- f(X_{ij}^{\ast} (t))\right| \\
		\leq & \mathbb{E}I^2_{A_{ij}^c} \mathbb{E} \left[  \frac{\int_{0}^{1} K_h (t-s) \mathrm{d} Y_{ij} (s) }{\int_{0}^{1} K_h (t-s) \mathrm{d} s} \right]^2 + C_m h^m\\
        \leq& \mathbb{P}(A_{ij}^c)\mathbb{E} \left[  \frac{\int_{0}^{1} K_h (t-s) \mathrm{d} Y_{ij} (s) }{\int_{0}^{1} K_h (t-s) \mathrm{d} s} \right]^2+C_mh^m\\
        \lesssim& \exp \left(-\tilde{C}h^{-m}\log (N\land J)\right)h^{-1}+h^m\\
	\lesssim& h^m.
    \end{align*}
    Therefore we obtain that
    \begin{align*}
		& \frac{1}{NJ}\int_{h}^{1-h}\big\| \widehat{\mathbf{Z}} (t) - \mathbf{Z}^{\ast} (t) \big\|_F^2 \mathrm{d}t\\
        \lesssim_p& \frac{1}{NJ} \sum_{i=1}^{N} \sum_{j=1}^{J} \int_{h}^{1-h}\left( M_{ij} (t)( \widehat{Z}_{ij} (t) - Z_{ij}^{\ast} (t) ) + h^m \left\vert \widehat{Z}_{ij} (t) - Z_{ij}^{\ast} (t) \right\vert \right)\mathrm{d}t \\
		\lesssim_p & \frac{1}{NJ}  \sum_{i=1}^{N}\sum_{j=1}^{J}\int_{h}^{1-h} M_{ij} (t) \left( \log \frac{f(\widehat{X}_{ij} (t) )}{f(0)} - \log \frac{f(X_{ij}^{\ast} (t) )}{f(0)} \right)\mathrm{d}t+ \frac{h^m}{\sqrt{NJ}} \int_{h}^{1-h}\big\| \widehat{\mathbf{Z}} (t) - \mathbf{Z}^{\ast} (t) \big\|_F \mathrm{d}t.
    \end{align*}
    Denote
    \begin{align*}
		\mathcal{G}_2^\alpha = \left\{ (\mathbf{X}, \mathbf{X}') \in \mathcal{G} \times \mathcal{G}: \frac{1}{NJ} \int_{h}^{1-h}\left\Vert \mathbf{X}'(t) - \mathbf{X}(t) \right\Vert_F^2 \mathrm{d}t\leq \left( N\land J\right)^\alpha \right\}.
    \end{align*}
    Since $\int_{h}^{1-h}\left\Vert \mathbf{X}'(t) - \mathbf{X}(t) \right\Vert_F^2 \mathrm{d}t/NJ=o_p \left( \left( N\land J\right)^\alpha \right)$, we have
    \begin{align}\label{eq.thm1.2}
		&\frac{1}{NJ}\int_{h}^{1-h}\left\Vert \widehat{\mathbf{Z}} (t) - \mathbf{Z}^{\ast} (t) \right\Vert_F^2 \mathrm{d}t\notag\\
        \lesssim_p & \frac{1}{NJ} \sup_{(\mathbf{X}, \mathbf{X}') \in \mathcal{G}_2^\alpha} \left\vert \sum_{i=1}^{N} \sum_{j=1}^{J} \int_{h}^{1-h}M_{ij} (t) \left( \log \frac{f(X_{ij}'(t))}{f(0)} - \log \frac{f(X_{ij}(t))}{f(0)} \right) \mathrm{d}t\right\vert +  h^m \left( N\land J\right)^{\alpha/2} .
    \end{align}
    Now we fix an arbitrary constant $\nu>0$ which is small enough. Denote $I_0=[0,\left( N\land J\right)^{\alpha/2}]$ and $I_n=(\left( N\land J\right)^{\alpha/2 + (n-1)\nu},\left( N\land J\right)^{\alpha/2 + n\nu}]$ for any $n\in \mathbb{N}$. Furthermore, we denote $G_{2,n} = \left\{ (\mathbf{X}, \mathbf{X}') \in G \times G: \|\mathbf{X}-\mathbf{X}'\|_F\in I_n\right\}$. Then for the first term in (\ref{eq.thm1.2}), we have
    \begin{align}\label{eq.thm1.3}
        &\frac{1}{NJ}\sup_{(\mathbf{X}, \mathbf{X}') \in \mathcal{G}_2^\alpha} \left\vert \sum_{i=1}^{N} \sum_{j=1}^{J} \int_{h}^{1-h}M_{ij} (t) \left( \log \frac{f(X_{ij}'(t))}{f(0)} - \log \frac{f(X_{ij}(t))}{f(0)} \right) \mathrm{d}t\right\vert\notag\\
        \leq &\frac{1}{NJ}\sup_{(\mathbf{X}, \mathbf{X}') \in \mathcal{G}_2^\alpha}\left\vert\sum_{n=0}^{\infty} \sum_{i=1}^{N} \sum_{j=1}^{J} \int_{h}^{1-h}\mathbf{1}_{(\mathbf{X}(t),\mathbf{X}^{\prime}(t))\in G_{2,n}}M_{ij} (t) \left( \log \frac{f(X_{ij}'(t))}{f(0)} - \log \frac{f(X_{ij}(t))}{f(0)} \right) \mathrm{d}t\right\vert\notag\\
        \leq &\frac{1}{NJ}\sup_{(\mathbf{X}, \mathbf{X}') \in \mathcal{G}_2^\alpha}\sum_{n=0}^{\infty}\Bigg[\int_{h}^{1-h}\mathbf{1}_{(\mathbf{X}(t),\mathbf{X}^{\prime}(t))\in G_{2,n}}\mathrm{d}t\notag\\
        &\quad\quad\quad\quad\quad\quad\quad \times \sup_{t\in[h,1-h]}\sup_{(\tilde{\mathbf{X}},\tilde{\mathbf{X}}^{\prime})\in G_{2,n}}\left|\sum_{i=1}^{N} \sum_{j=1}^{J}M_{ij} (t) \left( \log \frac{f(\tilde{X}_{ij}'(t))}{f(0)} - \log \frac{f(\tilde{X}_{ij}(t))}{f(0)} \right)\right|\Bigg].
    \end{align}
    Note that for any $(\mathbf{X}, \mathbf{X}') \in \mathcal{G}_2^\alpha$ and any $n\in\mathbb{N}$ we have
    \begin{align}\label{eq.thm1.4}
        \int_{h}^{1-h}\mathbf{1}_{(\mathbf{X}(t),\mathbf{X}^{\prime}(t))\in G_{2,n}}\mathrm{d}t \leq \frac{\int_{h}^{1-h}\left\Vert \mathbf{X}'(t) - \mathbf{X}(t) \right\Vert_F^2 \mathrm{d}t}{\left( N\land J\right)^{\alpha+2(n-1)\nu}}\leq\left( N\land J\right)^{-2(n-1)\nu}.
    \end{align}
    So by (\ref{eq.thm1.3}) and (\ref{eq.thm1.4}) we have
    \begin{align*}
        &\frac{1}{NJ}\sup_{(\mathbf{X}, \mathbf{X}') \in \mathcal{G}_2^\alpha} \left\vert \sum_{i=1}^{N} \sum_{j=1}^{J} \int_{h}^{1-h}M_{ij} (t) \left( \log \frac{f(X_{ij}')}{f(0)} - \log \frac{f(X_{ij})}{f(0)} \right) \mathrm{d}t\right\vert\\
        \leq&\sum_{n=0}^{\infty}\frac{\left( N\land J\right)^{-2(n-1)\nu}}{NJ}\sup_{t\in[h,1-h]}\sup_{(\tilde{\mathbf{X}},\tilde{\mathbf{X}}^{\prime})\in G_{2,n}}\left|\sum_{i=1}^{N} \sum_{j=1}^{J}M_{ij} (t) \left( \log \frac{f(\tilde{X}_{ij}')}{f(0)} - \log \frac{f(\tilde{X}_{ij})}{f(0)} \right)\right|.
    \end{align*}
    Now we partition a $\Delta$-net on $[0,1]$ with $\Delta=h^{m+3/2}/(NJ\sqrt{N\land J})$ and $t_k=k\Delta$, $k=1,\ldots,\tilde{K}$, where $\tilde{K}=[1/\Delta]$. Then we get that
    \begin{align}\label{eq.thm1.5}
        &\frac{1}{NJ}\sup_{(\mathbf{X}, \mathbf{X}') \in \mathcal{G}_2^\alpha} \left\vert \sum_{i=1}^{N} \sum_{j=1}^{J} \int_{h}^{1-h}M_{ij} (t) \left( \log \frac{f(X_{ij}')}{f(0)} - \log \frac{f(X_{ij})}{f(0)} \right) \mathrm{d}t\right\vert\notag\\
        \leq&\sum_{n=0}^{\infty}\frac{\left( N\land J\right)^{-2(n-1)\nu}}{NJ}\sup_{t\in[h,1-h]}\sup_{(\tilde{\mathbf{X}},\tilde{\mathbf{X}}^{\prime})\in G_{2,n}}\left|\sum_{i=1}^{N} \sum_{j=1}^{J}M_{ij} (t) \left( \log \frac{f(\tilde{X}_{ij}')}{f(0)} - \log \frac{f(\tilde{X}_{ij})}{f(0)} \right)\right|\notag\\
        \leq& \sum_{n=0}^{\infty}\left( N\land J\right)^{-2(n-1)\nu}\left[\frac{1}{NJ}\sup_{k=1,\ldots,\tilde{K}:t_k\in[h,1-h]}\sup_{(\tilde{\mathbf{X}}, \tilde{\mathbf{X}}') \in G_{2,n}} \left\vert \sum_{i=1}^{N} \sum_{j=1}^{J} M_{ij} (t_k) \left( \log \frac{f(\tilde{X}_{ij}')}{f(0)} - \log \frac{f(\tilde{X}_{ij})}{f(0)} \right) \right\vert\right]\notag\\
        &+\sum_{n=0}^{\infty}\left( N\land J\right)^{-2(n-1)\nu}\Bigg[\frac{1}{NJ}\sup_{|t-t^{\prime}|\leq \Delta,t,t^{\prime}\in[h,1-h] }\sup_{(\tilde{\mathbf{X}}, \tilde{\mathbf{X}}') \in G_{2,n}}\notag\\
        &\quad\quad\quad\quad\quad\quad\quad\quad\quad\quad\quad\quad\left\vert \sum_{i=1}^{N} \sum_{j=1}^{J} (M_{ij} (t)-M_{ij} (t^{\prime})) \left( \log \frac{f(\tilde{X}_{ij}')}{f(0)} - \log \frac{f(\tilde{X}_{ij})}{f(0)} \right) \right\vert\Bigg]\notag\\
        \triangleq& \sum_{n=0}^{\infty}\left( N\land J\right)^{-2(n-1)\nu}I_{1,n}+\sum_{n=0}^{\infty}\left( N\land J\right)^{-2(n-1)\nu}I_{2,n}.
    \end{align}
    We then bound the first term of (\ref{eq.thm1.5}) in Step 2.1 and bound the second term of (\ref{eq.thm1.5}) in Step 2.2.\\[3mm]
    \textbf{Step 2.1: }Bound $I_{1,n}$ for any fixed $n$.\\[3mm]
	We fix arbitrary $t\in [h,1-h]$. For any $p \in \mathbb{N}$, with the aid of Lemma 6.3 in \citet{ledoux2013probability}, we get
    \begin{align*}
		& \mathbb{E} \left[ \sup_{(\mathbf{X}, \mathbf{X}') \in G_{2,n}} \left\vert \sum_{i=1}^{N} \sum_{j=1}^{J} M_{ij} (t) \left( \log \frac{f(X_{ij}')}{f(0)} - \log \frac{f(X_{ij})}{f(0)} \right) \right\vert^{2p} \right] \\
		\leq & 2^{2p} \mathbb{E} \left[ \sup_{(\mathbf{X}, \mathbf{X}') \in G_{2,n}} \left\vert \sum_{i=1}^{N} \sum_{j=1}^{J} \epsilon_{ij} M_{ij} (t) \left( \log \frac{f(X_{ij}')}{f(0)} - \log \frac{f(X_{ij})}{f(0)} \right) \right\vert^{2p} \right],
    \end{align*}
    where $\epsilon_{ij}$'s are i.i.d. Rademacher random variables that are independent of the $Y_{ij} (t)$'s. Now since $x \mapsto \frac{1}{L_M} \log \frac{f(x)}{f(0)}$ is a contraction on $[-M, M]$ that vanishes at $x=0$, using Theorem 4.12 in \citet{ledoux2013probability} gives
	\begin{align*}
		& 2^{2p} \mathbb{E} \left[ \sup_{(\mathbf{X}, \mathbf{X}') \in G_{2,n}} \left\vert \sum_{i=1}^{N} \sum_{j=1}^{J} \epsilon_{ij} M_{ij} (t) \left( \log \frac{f(X_{ij}')}{f(0)} - \log \frac{f(X_{ij})}{f(0)} \right) \right\vert^{2p} \right] \\
		\leq & 2^{2p} (2 L_M)^{2p} \mathbb{E} \left[ \sup_{(\mathbf{X}, \mathbf{X}') \in G_{2,n}} \left\vert \sum_{i=1}^{N} \sum_{j=1}^{J} \epsilon_{ij} M_{ij} (t) \left( X_{ij}' - X_{ij} \right) \right\vert^{2p} \right] \\
		\leq & 4^{2p} L_M^{2p} \mathbb{E} \left[ \sup_{(\mathbf{X}, \mathbf{X}') \in G_{2,n}} \left\Vert \mathbf{E} \circ \mathbf{M} (t) \right\Vert_{\mathcal{S}_{2p}}^{2p} \left\Vert \mathbf{X}' - \mathbf{X} \right\Vert_{\mathcal{S}_{2q}}^{2p} \right],
    \end{align*}
    where $1/2p+1/2q=1$. Here $\mathbf{E} = (\epsilon_{ij})_{i, j = 1}^{N, J}$, $\Vert \cdot \Vert_{\mathcal{S}_{k}}$ denote the Schatten-$k$ norm and $\circ$ denote Hadamard product between two matrices. For $(\mathbf{X}, \mathbf{X}') \in G_{2,n}$, we have
    \begin{equation*}
	 	\left\Vert \mathbf{X}' - \mathbf{X} \right\Vert_{\mathcal{S}_{2q}} \leq (2r)^{\frac{1}{2q} - \frac{1}{2}} \left\Vert \mathbf{X}' - \mathbf{X} \right\Vert_F \lesssim (2r)^{\frac{1}{2q} - \frac{1}{2}} \sqrt{NJ} \left( N\land J\right)^{\alpha/2+n\nu}.
	\end{equation*}
	So we have
	\begin{align*}
	 	& \mathbb{E} \left[ \sup_{(\mathbf{X}, \mathbf{X}') \in G_2^\alpha} \left\vert \sum_{i=1}^{N} \sum_{j=1}^{J} M_{ij} (t) \left( \log \frac{f(X_{ij}')}{f(0)} - \log \frac{f(X_{ij})}{f(0)} \right) \right\vert^{2p} \right] \\
	 	\lesssim & 4^{2p} L_M^{2p} (2r)^{p-1} (NJ)^p \left( N\land J\right)^{\alpha p + 2np\nu} \mathbb{E} \left[ \left\Vert \mathbf{E} \circ \mathbf{M} (t) \right\Vert_{\mathcal{S}_{2p}}^{2p} \right].
    \end{align*}
    Finally, our task is to give an upper bound on $\mathbb{E} [ \left\Vert \mathbf{E} \circ \mathbf{M} (t) \right\Vert_{\mathcal{S}_{2p}}^{2p} ]$. By direct calculation, we have
    \begin{align*}
		\mathbb{E} \left[ \epsilon^2_{ij}M_{ij} (t)^2 \right]\leq \mathbb{E} \left[ \frac{\int_{0}^{1} K_h (t-s) \mathrm{d} Y_{ij} (s)}{\int_{0}^{1} K_h (t-s) \mathrm{d} s} \right]^2=O(h).
    \end{align*}
    Under the context of Theorem 4.9 in \citet{latala2018dimension}, we have
	\begin{align*}
	 	\sigma_{p, 1} = & \left( \sum_{i=1}^{N} \left( \sum_{j=1}^{J} \mathbb{E} \left[ \epsilon^2_{ij}M_{ij} (t)^2 \right] \right)^p \right)^{1/2p} = O \left( \frac{N^{1/2p} J^{1/2}}{h^{1/2}} \right),\\
		 \sigma_{p, 1} = & \left( \sum_{j=1}^{J} \left( \sum_{i=1}^{N} \mathbb{E} \left[ \epsilon^2_{ij}M_{ij} (t)^2 \right] \right)^p \right)^{1/2p} = O \left( \frac{N^{1/2} J^{1/2p}}{h^{1/2}} \right),\\
	 	\sigma_p^* = & \left( \sum_{i=1}^{N} \sum_{j=1}^{J} \left\Vert \epsilon_{ij}M_{ij} (t) \right\Vert_{\text{max}}^{2p} \right)^{1/2p} = O \left( N^{1/2p} J^{1/2p} \sqrt{\frac{N\land J}{h}} \right),
    \end{align*}
    and $\mathbb{E} \left[ \left\Vert \mathbf{E} \circ \mathbf{M} (t) \right\Vert_{\mathcal{S}_{2p}}^{2p} \right] \leq (\sigma_{p, 1} + \sigma_{p, 2} + C \sqrt{p} \sigma_p^*)^{2p}$, where $C>0$ is a universal constant. Hence for any $C^{\prime}>0$ (determined later), by Markov's inequality we have
    \begin{align*}
	 	& \mathbb{P} \left( \frac{1}{NJ} \sup_{(\mathbf{X}, \mathbf{X}') \in G_{2,n}} \left\vert \sum_{i=1}^{N} \sum_{j=1}^{J} M_{ij} (t) \left( \log \frac{f(X_{ij}')}{f(0)} - \log \frac{f(X_{ij})}{f(0)} \right) \right\vert > C' p h^m \left( N\land J\right)^{\alpha/2+3n\nu/2} \right) \\
	 	\leq & (C' p NJ h^{m})^{-2p} \left( N\land J\right)^{- \alpha p-3np\nu} \mathbb{E} \left[ \sup_{(\mathbf{X}, \mathbf{X}') \in G_2^\alpha} \left\vert \sum_{i=1}^{N} \sum_{j=1}^{J} M_{ij} (t) \left( \log \frac{f(X_{ij}')}{f(0)} - \log \frac{f(X_{ij})}{f(0)} \right) \right\vert^{2p} \right] \\
	 	\lesssim & \left( \frac{4\sqrt{2} L_M}{C' p} \right)^{2p} r^{p-1} (NJ)^{-p} h^{-2mp} \left( \sigma_{p, 1} + \sigma_{p, 2} + C \sqrt{p} \sigma_p^* \right)^{2p} \left( N\land J\right)^{-np\nu}\\
	 	\lesssim & \left( \frac{12\sqrt{2} L_M}{C' p} \right)^{2p} r^{p} (NJ)^{-p} h^{-2mp} \left( \frac{N J^p}{h^p} + \frac{N^p J}{h^p} + C^{2p} p^p NJ \left( \frac{N\land J}{h} \right)^p \right)\left( N\land J\right)^{-np\nu}.
    \end{align*}
    Now we denote $\tilde{C}=C^{\prime}/12\sqrt{2} L_M\sqrt{r}$ and choose $h \asymp ( \left( N\land J\right)/\log^2\left( N\land J\right))^{-1/(2m+1)}$, $p=h^{-\nu}$ (we also assume without loss of generality that $p\geq 1$), then
    \begin{align*}
        &\left( \frac{12\sqrt{2} L_M}{C' p} \right)^{2p} r^{p} (NJ)^{-p} h^{-2mp} \left( \frac{N J^p}{h^p} + \frac{N^p J}{h^p} + C^{2p} p^p NJ \left( \frac{N\land J}{h} \right)^p \right)\\
	 	\leq&\left( \frac{1}{\tilde{C} p} \right)^{2p} \left( N \left( \frac{N\land J}{N\log^2\left( N\land J\right)} \right)^p + J \left( \frac{N\land J}{J\log^2\left( N\land J\right)} \right)^p + C^{2p} p^p NJ \left( \frac{\left( N\land J\right)^2}{NJ\log^2\left( N\land J\right)} \right)^p \right) \\
        \leq & \left( \frac{1}{\tilde{C} p} \right)^{2p} \left(N+J+ C^{2p} p^p NJ  \right)\left(\log\left( N\land J\right)\right)^{-2p} \\
		\lesssim &\left( \frac{1}{\tilde{C} p} \right)^{2p} C^{2p} p^p NJ\\  
	 	\lesssim & \exp\left(-h^{-\nu}\left[2\log \tilde{C}-2\log C+\nu \log (1/h)\right]+\log N+\log J\right).
    \end{align*}
    Choose $C^{\prime}=12\sqrt{2r}eL_M$, then we have $2\log \tilde{C}-2\log C=2$. Since $h^{-\nu}\gg \log N\vee\log J$ and $\nu\log(1/h)>0$ for $h$ small enough, we have
    \begin{align*}
        & \mathbb{P} \left( \frac{1}{NJ} \sup_{(\mathbf{X}, \mathbf{X}') \in G_{2,n}} \left\vert \sum_{i=1}^{N} \sum_{j=1}^{J} M_{ij} (t) \left( \log \frac{f(X_{ij}')}{f(0)} - \log \frac{f(X_{ij})}{f(0)} \right) \right\vert > C' h^{m-\nu} \left( N\land J\right)^{\alpha/2+3n\nu/2} \right)\\
        =&\mathbb{P} \left( \frac{1}{NJ} \sup_{(\mathbf{X}, \mathbf{X}') \in G_{2,n}} \left\vert \sum_{i=1}^{N} \sum_{j=1}^{J} M_{ij} (t) \left( \log \frac{f(X_{ij}')}{f(0)} - \log \frac{f(X_{ij})}{f(0)} \right) \right\vert > C' ph^{m} \left( N\land J\right)^{\alpha/2+3n\nu/2} \right)\\
		\lesssim& \exp\left(-h^{-\nu}\right)\left( N\land J\right)^{-np\nu}\\
        \lesssim&\exp\left(-\left( N\land J\right)^{\frac{\nu}{2m+2}}\right)\left( N\land J\right)^{-n\nu}.
    \end{align*}
    So we have
    \begin{align*}
        &\mathbb{P}\left(I_{1,n}>C' h^{m-\nu} \left( N\land J\right)^{\alpha/2+3n\nu/2}\right)\notag\\
        \leq& \sum_{t_k\in[h,1-h]}\mathbb{P} \left( \frac{1}{NJ} \sup_{G_{2,n}} \left\vert \sum_{i=1}^{N} \sum_{j=1}^{J} M_{ij} (t_k) \left( \log \frac{f(X_{ij}')}{f(0)} - \log \frac{f(X_{ij})}{f(0)} \right) \right\vert > C' h^{m-\nu} \left( N\land J\right)^{\alpha/2+3n\nu/2} \right)\notag\\
        \lesssim& \frac{1}{\Delta}\exp\left(-\left( N\land J\right)^{\frac{\nu}{2m+2}}\right)\left( N\land J\right)^{-n\nu}\notag\\
        \lesssim&NJ\left( N\land J\right)^{\frac{m+3/2}{2m+1}+\frac{1}{2}}\left(\log(N\land J)\right)^{-\frac{2m+3}{2m+1}}\exp\left(-\left( N\land J\right)^{\frac{\nu}{2m+2}}\right)\left( N\land J\right)^{-n\nu}\notag\\
        \lesssim&NJ\left( N\land J\right)^{\frac{m+3/2}{2m+1}+\frac{1}{2}-n\nu}\exp\left(-\left( N\land J\right)^{\frac{\nu}{2m+2}}\right).
    \end{align*}
    Hence
    \begin{align*}
        &\sum_{n=0}^{\infty}\mathbb{P}\left(I_{1,n}>C' h^{m-\nu} \left( N\land J\right)^{\alpha/2+3n\nu/2}\right)\notag\\
        \leq&\sum_{n=0}^{\infty}NJ\left( N\land J\right)^{\frac{m+3/2}{2m+1}+\frac{1}{2}-n\nu}\exp\left(-\left( N\land J\right)^{\frac{\nu}{2m+2}}\right)\notag\\
        =&NJ\left( N\land J\right)^{\frac{m+3/2}{2m+1}+\frac{1}{2}}\exp\left(-\left( N\land J\right)^{\frac{\nu}{2m+2}}\right)\left(1-\left( N\land J\right)^{-\nu}\right)^{-1}\notag\\
        \lesssim&NJ\left( N\land J\right)^{\frac{m+3/2}{2m+1}+\frac{1}{2}}\exp\left(-\left( N\land J\right)^{\frac{\nu}{2m+2}}\right)\notag\\
        \rightarrow& 0,
    \end{align*}
    where the last inequality holds since $N\land J\gg \log (N\vee J)$. This implies that
    \begin{align*}
        &\mathbb{P}\Big(\sum_{n=0}^{\infty}\left( N\land J\right)^{-2(n-1)\nu}I_{1,n}\geq \left( N\land J\right)^{2\nu+\alpha/2}C' h^{m-\nu} \left(1-\left( N\land J\right)^{-\nu/2}\right)^{-1}\Big)\notag\\
        =&\mathbb{P}\Big(\sum_{n=0}^{\infty}\left( N\land J\right)^{-2(n-1)\nu}I_{1,n}\geq \sum_{n=0}^{\infty}\left( N\land J\right)^{-2(n-1)\nu}C' h^{m-\nu} \left( N\land J\right)^{\alpha/2+3n\nu/2}\Big)\notag\\
        =&\mathbb{P}\Big(\sum_{n=0}^{\infty}\left( N\land J\right)^{-2(n-1)\nu}I_{1,n}\geq C' h^{m-\nu} \left( N\land J\right)^{\alpha/2+3\nu/2}\left(1-\left( N\land J\right))^{-\nu/2}\right)^{-1}\Big)\notag\\
        \rightarrow&0.
    \end{align*}
    We finally deduce that with high probability,
    \begin{align}\label{eq.thm1.6}
        &\sum_{n=0}^{\infty}\left( N\land J\right)^{-2(n-1)\nu}I_{1,n}\notag\\
        \lesssim_{p}&\left( N\land J\right)^{2\nu+\alpha/2}C' h^{m-\nu} \left(1-\left( N\land J\right)^{-\nu/2}\right)^{-1}\notag\\
        \lesssim&\left( N\land J\right)^{2\nu+\alpha/2-m/(2m+1)+\nu/(2m+1)}\left(\log \left( N\land J\right)\right)^{2(m-\nu)/(2m+1)}\notag\\
        \lesssim&\left( N\land J\right)^{2\nu+\alpha/2-m/(2m+1)+\nu/2m}.
    \end{align}
    \textbf{Step 2.2: }Bound $I_{2,n}$ for any fixed $n$.\\[3mm]
	Since $K(x)$ is $L_K$-Lipschitz, $K^{\prime}(x)$ exists almost everywhere and $|K^{\prime}(x)|\leq L_K$. Then, under $A_{ij}$ we have
    \begin{align*}
        \left|\frac{\mathrm{d}}{\mathrm{d}t}\frac{\int_{0}^{1} K_h (t-s) \mathrm{d} Y_{ij} (s)}{\int_{0}^{1} K_h (t-s) \mathrm{d} s}\right|\leq& \sum_{i=1}^N\sum_{j=1}^JL_K\frac{\int_{0}^{1}   \mathrm{d} Y_{ij}(s)}{h^2\left|\int_{0}^{1} K_h (t-s) \mathrm{d} s\right|}\\
		&+\sum_{i=1}^N\sum_{j=1}^J\frac{K(0)\int_{0}^{1}   \mathrm{d} Y_{ij}(s)}{\left(\int_{0}^{1} K_h (t-s) \mathrm{d} s\right)^2}\left|\frac{1}{h}K\left(\frac{1-t}{h}\right)-\frac{1}{h}K\left(\frac{-t}{h}\right)\right|\\
		\lesssim& \frac{1}{h^2}\int_{0}^{1}\mathrm{d} Y_{ij} (s)\\
		\lesssim&\frac{1}{h^{3/2}}\sqrt{N\land J}.
    \end{align*}
    This implies that $M_{ij}(t)$ is Lipschitz continuous with Lipschitz constant $\frac{C}{h^{3/2}}\sqrt{N\land J}$. Since $\Delta=h^{m+3/2}/(\sqrt{N\land J}NJ)$, we have
    \begin{align*}
        I_{2,n}=&\sup_{|t-t^{\prime}|\leq \Delta,t,t^{\prime}\in[h,1-h] }\sup_{(\mathbf{X}, \mathbf{X}') \in G_{2,n}} \left\vert \sum_{i=1}^{N} \sum_{j=1}^{J} (M_{ij} (t)-M_{ij} (t^{\prime})) \left( \log \frac{f(X_{ij}')}{f(0)} - \log \frac{f(X_{ij})}{f(0)} \right) \right\vert\notag\\
        \lesssim&\sup_{|t-t^{\prime}|\leq \Delta,t,t^{\prime}\in[h,1-h] }\sup_{(\mathbf{X}, \mathbf{X}') \in G_{2,n}}\left\vert \sum_{i=1}^{N} \sum_{j=1}^{J} |M_{ij} (t)-M_{ij} (t^{\prime})| |X_{ij}'-X_{ij}| \right\vert\notag\\
        \lesssim& \frac{\Delta}{h^{3/2}}\sqrt{N\land J}\sup_{(\mathbf{X}, \mathbf{X}') \in G_{2,n}}\sqrt{NJ} \left\|\mathbf{X}' - \mathbf{X}\right\|_{F}\notag\\
        \lesssim& \frac{\Delta}{h^{3/2}}\sqrt{N\land J}NJ \left(N\land J \right)^{\alpha/2+n\nu}\notag\\
        \leq &h^{m}\left(N\land J \right)^{\alpha/2+n\nu}.
    \end{align*}
    So we have that with high probability
    \begin{align}\label{eq.thm1.7}
        \sum_{n=0}^{\infty}\left(N\land J\right)^{-2(n-1)\nu}I_{2,n}\leq &\sum_{n=0}^{\infty}h^{m}\left(N\land J \right)^{\alpha/2+n\nu-2(n-1)\nu}\lesssim  \left(N\land J\right)^{\alpha/2+3\nu-m/(2m+1)}.
    \end{align}
    \textbf{Step 3: }Obtain the final bound.\\[3mm]
    By (\ref{eq.thm1.2}), (\ref{eq.thm1.3}), (\ref{eq.thm1.5}), (\ref{eq.thm1.6}) and (\ref{eq.thm1.7}), for any $\nu>0$ small enough, we have
    \begin{align*}
        \frac{1}{NJ}\sup_{t\in[h,1-h]} \big\|\widehat{\mathbf{Z}} (t) - \mathbf{Z}^{\ast} (t) \big\|_F^2 =&  O_p \left( \left(N\land J\right)^{\alpha/2 - m/(2m+1) + \max\{3\nu,3\nu/2+\nu/2m\}} \right),
    \end{align*}
    which also implies that
    \begin{equation*}
	   \frac{1}{NJ} \sup_{t\in[h,1-h]}\big\| \widehat{\mathbf{X}} (t) - \mathbf{X}^{\ast} (t) \big\|_F^2 = o_p \left( \left(N\land J\right)^{\alpha/2 - m/(2m+1) + 3\nu} \right).
    \end{equation*}
    Therefore, as long as $\alpha > -2m/(2m+1)$, we can repeat the above procedure to obtain a sharper rate. Finally, we conclude that for any $\delta > 0$,
    \begin{equation*}
	 	\frac{1}{NJ} \sup_{t\in[h,1-h]}\big\| \widehat{\mathbf{X}} (t) - \mathbf{X}^{\ast} (t) \big\|_F^2 = O_p \left( \left(N\land J\right)^{- 2m/(2m+1) + \delta} \right),
    \end{equation*}
    which completes the proof.
\end{proof}
\begin{corollary}
    Let $\sigma_1\geq \ldots\geq \sigma_{r}$ be the non-zero singular values of $\int_{h}^{1-h}\mathbf{X}^{\ast} (t)\mathrm{d}t$ and let diagonal matrix $\angle(\mathbf{A}^{*},\widehat{\mathbf{A}})\in\mathbb{R}^{r\times r}$ contain all $r$ principal angles between $\mathbf{A}^{*}$ and $\widehat{\mathbf{A}}$ \citep{stewart1990matrix}. Under Conditions 1-4:
    \begin{enumerate}[(i)]
        \item (Dependent case) Assume $J=O(N)$ and recall $\phi(J)$ from Condition 4(iii). For any $\delta > 0$, choose $h \asymp \left( J/\phi(J)\right)^{-1/(2m+1)+\delta/m}$. Assume that $NJ\left( J/\phi(J)\right)^{-m/(2m+1)+\delta}=o(\sigma_{r}^2)$. Then, as $N$ and $J$ go to infinity, we have
    \begin{equation*}
        \|\sin(\angle(\mathbf{A}^{*},\widehat{\mathbf{A}}))\|_F=o_p(1).
    \end{equation*}
        \item (Independent case) Assume that $\phi(J)=1$ in Condition 4(iii) and $\log (N\vee J) \ll N\land J$. For any $\delta > 0$, choose $h \asymp ((N\land J)/(\log^2 (N\land J)))^{-1/(2m+1)}$. Assume that $NJ\left(N\land J\right)^{- 2m/(2m+1) + \delta}=o(\sigma_{r}^2)$. Then, as $N$ and $J$ go to infinity, we have
	\begin{align*}
		\|\sin(\angle(\mathbf{A}^{*},\widehat{\mathbf{A}}))\|_F=o_p(1).
	\end{align*}
    \end{enumerate}
\end{corollary}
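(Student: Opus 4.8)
The plan is to reduce the statement about the loading matrices to a singular-subspace perturbation bound for the time-averaged signal matrices, and then to invoke Theorem~\ref{thm1} together with a $\sin\Theta$ inequality. Write $\overline{\mathbf X}^{*}=\int_{h}^{1-h}\mathbf X^{*}(t)\,\mathrm dt$ and $\widehat{\overline{\mathbf X}}=\int_{h}^{1-h}\widehat{\mathbf X}(t)\,\mathrm dt$. Since $\overline{\mathbf X}^{*}=\big(\int_{h}^{1-h}\boldsymbol\Theta^{*}(t)\,\mathrm dt\big)(\mathbf A^{*})^{\top}$ and, by hypothesis, $\sigma_{r}=\sigma_{r}(\overline{\mathbf X}^{*})>0$, the matrix $\overline{\mathbf X}^{*}$ has rank exactly $r$; hence $\mathbf A^{*}$ has full column rank $r$ and $\mathrm{col}(\mathbf A^{*})$ coincides with the row space of $\overline{\mathbf X}^{*}$, i.e.\ with the span of the top-$r$ right singular vectors $\mathbf V^{*}$ of $\overline{\mathbf X}^{*}$. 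This is the first step.

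Second, I would control the perturbation $\|\widehat{\overline{\mathbf X}}-\overline{\mathbf X}^{*}\|_{F}$ directly from Theorem~\ref{thm1}. Writing $\widehat{\overline{\mathbf X}}-\overline{\mathbf X}^{*}=\int_{h}^{1-h}(\widehat{\mathbf X}(t)-\mathbf X^{*}(t))\,\mathrm dt$ and using the integral triangle inequality followed by Cauchy--Schwarz,
\[
\|\widehat{\overline{\mathbf X}}-\overline{\mathbf X}^{*}\|_{F}^{2}\le(1-2h)\int_{h}^{1-h}\|\widehat{\mathbf X}(t)-\mathbf X^{*}(t)\|_{F}^{2}\,\mathrm dt\le\int_{h}^{1-h}\|\widehat{\mathbf X}(t)-\mathbf X^{*}(t)\|_{F}^{2}\,\mathrm dt,
\]
which by Theorem~\ref{thm1}(i) is $O_{p}\big(NJ\,(J/\phi(J))^{-m/(2m+1)+\delta}\big)$ in the dependent case and by Theorem~\ref{thm1}(ii) is $O_{p}\big(NJ\,(N\wedge J)^{-2m/(2m+1)+\delta}\big)$ in the independent case. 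Under the stated assumption $NJ\cdot(\text{rate})=o(\sigma_{r}^{2})$ this yields $\|\widehat{\overline{\mathbf X}}-\overline{\mathbf X}^{*}\|_{F}=o_{p}(\sigma_{r})$.

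Third, I would identify $\mathrm{col}(\widehat{\mathbf A})$ with a singular subspace of $\widehat{\overline{\mathbf X}}$ on a high-probability event. Weyl's inequality gives $\sigma_{r}(\widehat{\overline{\mathbf X}})\ge\sigma_{r}-\|\widehat{\overline{\mathbf X}}-\overline{\mathbf X}^{*}\|_{F}>0$ with probability tending to one; since $\widehat{\overline{\mathbf X}}=\big(\int_{h}^{1-h}\widehat{\boldsymbol\Theta}(t)\,\mathrm dt\big)\widehat{\mathbf A}^{\top}$ has rank at most $r$, on this event it has rank exactly $r$, which forces $\widehat{\mathbf A}$ to have full column rank $r$ with $\mathrm{col}(\widehat{\mathbf A})$ equal to the row space of $\widehat{\overline{\mathbf X}}$, i.e.\ the span of the top-$r$ right singular vectors $\widehat{\mathbf V}$ of $\widehat{\overline{\mathbf X}}$. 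Finally, I would apply a Wedin-type $\sin\Theta$ bound (see, e.g., \citealp{stewart1990matrix}) to $\overline{\mathbf X}^{*}$ and $\widehat{\overline{\mathbf X}}$: because $\sigma_{r+1}(\widehat{\overline{\mathbf X}})=0$, the spectral gap is $\sigma_{r}$, so $\|\sin(\angle(\mathbf A^{*},\widehat{\mathbf A}))\|_{F}=\|\sin(\angle(\mathbf V^{*},\widehat{\mathbf V}))\|_{F}\lesssim\|\widehat{\overline{\mathbf X}}-\overline{\mathbf X}^{*}\|_{F}/\sigma_{r}=o_{p}(1)$, as claimed.

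I expect the only delicate point to be the bookkeeping in the third step: one must verify that, with high probability, $\widehat{\mathbf A}$ (equivalently $\int_{h}^{1-h}\widehat{\boldsymbol\Theta}(t)\,\mathrm dt$) is not rank deficient, so that $\mathrm{col}(\widehat{\mathbf A})$ genuinely equals the top-$r$ right singular subspace of $\widehat{\overline{\mathbf X}}$ rather than a strictly larger subspace, and one must invoke the version of the $\sin\Theta$ theorem appropriate to the degenerate gap $\sigma_{r+1}(\widehat{\overline{\mathbf X}})=0$. The analytic content is routine once Theorem~\ref{thm1} and Cauchy--Schwarz are in hand; the two cases (dependent and independent) differ only through the rate plugged into the denominator.
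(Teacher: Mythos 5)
Your proposal is correct and follows essentially the same route as the paper's own proof: bound $\|\int_h^{1-h}(\widehat{\mathbf X}(t)-\mathbf X^{*}(t))\,\mathrm{d}t\|_F$ by Cauchy--Schwarz and Theorem~\ref{thm1}, conclude it is $o_p(\sigma_r)$ under the stated assumption, and then apply Weyl's inequality together with Wedin's $\sin\Theta$ theorem to the rank-$r$ factorizations of the time-averaged matrices. Your third step merely spells out the rank and subspace-identification bookkeeping that the paper leaves implicit, and it is handled correctly.
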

\begin{proof}[Proof of Corollary 1]
    We only prove part (i). By Theorem 1(i), we have
    \begin{align*}
        \left\|\int_{h}^{1-h}\mathbf{X}^{\ast} (t)\mathrm{d}t-\int_{h}^{1-h}\widehat{\mathbf{X}} (t)\mathrm{d}t\right\|_F\leq&\int_{h}^{1-h}\left\|\mathbf{X}^{\ast} (t)-\widehat{\mathbf{X}} (t)\right\|_F\mathrm{d}t\\
        \leq&\left(\int_{h}^{1-h}\left\|\mathbf{X}^{\ast} (t)-\widehat{\mathbf{X}} (t)\right\|_F^2\mathrm{d}t\right)^{1/2}\\
        =&O_{p} \left(NJ \left( J/\phi(J)\right)^{-m/(2m+1)+\delta} \right)^{1/2},
    \end{align*}
    which is of order $o_p(\sigma_r)$ by our assumption. We note that $\int_{h}^{1-h}\mathbf{X}^{\ast} (t)\mathrm{d}t=(\int_{h}^{1-h}\mathbf{\Theta}^{\ast} (t)\mathrm{d}t)(\mathbf{A}^{*})^{\top}$ and $\int_{h}^{1-h}\widehat{\mathbf{X}}(t)\mathrm{d}t=(\int_{h}^{1-h}\widehat{\mathbf{\Theta}}(t)\mathrm{d}t)(\widehat{\mathbf{A}})^{\top}$. Hence the result is proved by applying Weyl's theorem and Wedin's $\sin\Theta$ theorem \citep{stewart1990matrix}.
\end{proof}
\subsection{Proof of Theorem 2}\label{section:A2}
The proof of Theorem 2 is based on the following lemma, whose proof will be provided later in the Appendix B.
\begin{lemma}[Varshamov-Gilbert]\label{lem3}
	Let $\Omega = \left\{ \omega = (\omega_1, \cdots, \omega_N): \omega_j \in \{ 0, 1 \} \right\}$. Suppose that $N \geq 8$. There exists $\omega^0, \cdots, \omega^M \in \Omega$ such that (i) $\omega^0 = (0, \cdots, 0)$, (ii) $M \geq 2^{N/8}$, and (iii) $H(\omega^j, \omega^k) \geq N/8$ for $0 \leq j < k \leq M$, here $H(\omega, \nu) = \sum_{i=1}^{N} I (\omega_i \neq \nu_i )$ is the Hamming Distance between $\omega$ and $\nu$. We call $\Omega' = \{\omega^0, \cdots, \omega^M\}$ a pruned hypercube.
\end{lemma}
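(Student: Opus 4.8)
The plan is to prove Lemma~\ref{lem3} (the Varshamov--Gilbert bound) by the classical greedy maximal-packing argument, together with an entropy estimate for the volume of a Hamming ball.

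\textbf{Construction.} First I would fix a subset $\Omega' = \{\omega^0,\omega^1,\ldots,\omega^M\}\subseteq\{0,1\}^N$ that contains $\omega^0=(0,\ldots,0)$, has all pairwise Hamming distances at least $N/8$, and is maximal (with respect to inclusion) among subsets with these two properties; since $\{0,1\}^N$ is finite, such a maximal $\Omega'$ exists, and $\{(0,\ldots,0)\}$ is an admissible starting point. Properties (i) and (iii) of the lemma then hold by construction, so the whole content reduces to the cardinality bound (ii).

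\textbf{Covering argument.} By maximality, every $x\in\{0,1\}^N$ must satisfy $H(x,\omega^j)<N/8$ for some $j\in\{0,\ldots,M\}$, since otherwise $\Omega'\cup\{x\}$ would again have all pairwise distances $\ge N/8$, contradicting maximality. Hence the $M+1$ Hamming balls $B_j=\{x:H(x,\omega^j)<N/8\}$ cover $\{0,1\}^N$, and since $|B_j|=\sum_{k=0}^{\lceil N/8\rceil-1}\binom{N}{k}\le\sum_{k=0}^{\lfloor N/8\rfloor}\binom{N}{k}$, this yields
\[
  2^N \;\le\; (M+1)\sum_{k=0}^{\lfloor N/8\rfloor}\binom{N}{k}.
\]

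\textbf{Entropy bound and arithmetic.} I would then invoke the elementary inequality $\sum_{k=0}^{\lfloor\lambda N\rfloor}\binom{N}{k}\le 2^{N H_2(\lambda)}$ for $0<\lambda\le 1/2$, where $H_2(\lambda)=-\lambda\log_2\lambda-(1-\lambda)\log_2(1-\lambda)$; it follows from $1=(\lambda+(1-\lambda))^N\ge\sum_{k\le\lambda N}\binom{N}{k}\lambda^k(1-\lambda)^{N-k}$ together with $\lambda^k(1-\lambda)^{N-k}\ge\lambda^{\lambda N}(1-\lambda)^{(1-\lambda)N}$ for $k\le\lambda N$. With $\lambda=1/8$ this gives $2^N\le(M+1)\,2^{NH_2(1/8)}$, hence $M+1\ge 2^{N(1-H_2(1/8))}$. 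Finally $H_2(1/8)=\tfrac38+\tfrac78\log_2\tfrac87<0.55$, so $1-H_2(1/8)-\tfrac18>0.33$, and for every $N\ge 8$
\[
  M+1 \;\ge\; 2^{N(1-H_2(1/8))} \;=\; 2^{N/8}\cdot 2^{N(1-H_2(1/8)-1/8)} \;\ge\; 2^{N/8}\cdot 2^{8\cdot 0.33} \;>\; 5\cdot 2^{N/8},
\]
so $M\ge 5\cdot 2^{N/8}-1\ge 2^{N/8}$, which is exactly (ii).

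This is a classical packing bound, so there is no deep obstacle; the step demanding the most care is the Hamming-ball volume estimate --- keeping the floor/ceiling bookkeeping consistent so the entropy inequality is applicable, and checking the boundary case $N=8$ explicitly rather than relying on an asymptotic statement, since the multiplicative slack $2^{N(1-H_2(1/8)-1/8)}$ is only moderate for small $N$.
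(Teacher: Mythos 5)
Your proof is correct, and its skeleton is the same as the paper's: a greedy/maximal packing of $\{0,1\}^N$ at Hamming distance $N/8$, followed by the volume (covering) inequality $2^N \le (M+1)\sum_{k\le \lfloor N/8\rfloor}\binom{N}{k}$. The only real point of divergence is how the Hamming-ball volume is bounded: the paper rewrites $\sum_{k\le \lfloor N/8\rfloor}2^{-N}\binom{N}{k}$ as $\operatorname{pr}(\sum_i Z_i\le \lfloor N/8\rfloor)$ for i.i.d.\ Bernoulli$(1/2)$ variables and applies Hoeffding's inequality to get $M+1\ge 2^{N/4}$, whereas you use the binary entropy bound $\sum_{k\le\lambda N}\binom{N}{k}\le 2^{NH_2(\lambda)}$ to get the slightly sharper $M+1\ge 2^{N(1-H_2(1/8))}$. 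Both estimates leave enough multiplicative slack to absorb the $+1$ and conclude $M\ge 2^{N/8}$ for $N\ge 8$, and your explicit check of the constant $1-H_2(1/8)-1/8>0.33$ handles the boundary case cleanly; the two arguments are interchangeable here, with yours being marginally more self-contained (no concentration inequality needed) at the cost of a small amount of arithmetic with $H_2(1/8)$.
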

\begin{proof}[Proof of Theorem 2]
We proceed with the following two steps:\\[3mm]
\noindent \textbf{Step 1.} Packing Set Construction: First, set $g(t)$ to be a sufficiently smooth function which is supported on $[0, 1]$, satisfying $\sup_{[0, 1]} \vert g (t) \vert \leq M$ and $\sup_{[0, 1]} \vert g^{(m)} (t) \vert \leq M$. Let $n \in \mathbb{N}$ to be determined later and define
	\begin{equation*}
		g_k (t) = \frac{1}{n^m} g \left( nt - (k - 1)  \right), \ 1 \leq k \leq n.
	\end{equation*}
	Then $g_k (t)$ is supported on $[(k-1)/n, k/n]$, $\sup_{[0,1]} \vert g_k (t) \vert \leq M$ and
	\begin{equation*}
		\left\vert \frac{\mathrm{d}^m}{\mathrm{d} t^m} g_k (t) \right\vert = \vert g^{(m)} (nt - (k-1)) \vert \leq M, \ \forall t \in [0,1].
	\end{equation*}
	Now without loss of generality we assume $N \leq J$. According to Lemma \ref{lem3}, as long as $n rJ \geq 8$, we can construct a pruned hypercube $\Omega'$ of $\Omega = \{0, 1\}^{r \times J \times n}$, with $\vert \Omega' \vert \geq 2^{nrJ/8}$ and $\forall \omega, \nu \in \Omega'$, $\omega \neq \nu$, we have
	\begin{equation*}
		\sum_{i=1}^{r} \sum_{j=1}^{J} \sum_{k=1}^{n} I \left( \omega_{ij}^k \neq \nu_{ij}^k \right) \geq \frac{nrJ}{8}.
	\end{equation*}
	Now we construct our packing set through $\Omega'$ as $G_T' = \{ \mathbf{X}^{\omega} (t) : \omega \in \Omega' \}$. We know that $\vert G_T' \vert \geq 2^{n rJ/8}$. And $\forall \omega \in \Omega'$, $\mathbf{X}^{\omega} (t)$ is an $N \times J$ matrix-valued function on $[0,1]$, defined element-wisely by
	\begin{equation*}
		X_{ij}^{\omega} (t) = \sum_{k=1}^{n} \omega_{i(\text{mod}r), j}^k g_k (t).
	\end{equation*}
	Intuitively, for $1 \leq i \leq r, 1 \leq j \leq J$ we define the first $r \times J$ block of $\mathbf{X}^{\omega} (t)$ by $X_{ij}^{\omega} (t) = \sum_{k=1}^{n} \omega_{ij}^k g_k(t)$, then copy this block several times until the whole matrix is defined. It's easy to verify that $\mathbf{X}^{\omega}\in\mathcal{G}$ and satisfies Condition 2, and for any $\omega, \nu \in \Omega', \omega \neq \nu$, we have
	\begin{align*}
		& \frac{1}{NJ} \int_{0}^{1} \left\Vert \mathbf{X}^{\omega} (t) - \mathbf{X}^{\nu} (t) \right\Vert_F^2 \mathrm{d} t = \frac{1}{NJ} \sum_{i=1}^{N} \sum_{j=1}^{J} \int_{0}^{1} \left( X_{ij}^{\omega} (t) - X_{ij}^{\nu} (t) \right)^2 \mathrm{d} t \\
		\geq & \frac{1}{NJ} \left[ \frac{N}{r} \right] \sum_{i=1}^{r} \sum_{j=1}^{J} \int_{0}^{1} \left( \sum_{k=1}^{n} \left( \omega_{ij}^k - \nu_{ij}^k \right) g_k (t) \right)^2 \mathrm{d} t \\
		= & \frac{1}{NJ} \left[ \frac{N}{r} \right] \sum_{i=1}^{r} \sum_{j=1}^{J} \int_{0}^{1} \sum_{k=1}^{n} \left( \omega_{ij}^k - \nu_{ij}^k \right)^2 g_k (t)^2 \mathrm{d} t \\
		= & \frac{1}{NJ n^{2m+1}} \left[ \frac{N}{r} \right] \sum_{i=1}^{r} \sum_{j=1}^{J} \sum_{k=1}^{n} I \left( \omega_{ij}^k \neq \nu_{ij}^k \right) \int_{0}^{1}  g (t)^2 \mathrm{d} t \\
		\geq & \frac{r}{8 N n^{2m}} \left[ \frac{N}{r} \right] \int_{0}^{1}  g (t)^2 \mathrm{d} t \geq \frac{1}{16 n^{2m}} \int_{0}^{1}  g (t)^2 \mathrm{d} t.
	\end{align*}
	And similarly we can show that
	\begin{equation*}
		\frac{1}{NJ} \int_{0}^{1} \left\Vert \mathbf{X}^{\omega} (t) - \mathbf{X}^{\nu} (t) \right\Vert_F^2 \mathrm{d} t \leq \frac{1}{n^{2m}} \int_{0}^{1}  g (t)^2 \mathrm{d} t.
	\end{equation*}
	
	\noindent \textbf{Step 2.} Utilize Fano's inequality to give a lower bound:
	
	\noindent Let $\epsilon^2 = \int_{0}^{1} g(t)^2 \mathrm{d} t / (64 n^{2m})$. For any $\mathbf{X}^{\omega} (t) \in G_T'$, denote its estimator as $\widehat{\mathbf{X}}^{\omega} (t)$ and we choose its $L^2$-projection to $G_T'$ as
	\begin{equation*}
		\mathbf{X}^{\omega^*} (t) = \argmin_{\omega^* \in \Omega'} \frac{1}{NJ} \int_{0}^{1} \big\| \mathbf{X}^{\omega^*} (t) - \widehat{\mathbf{X}}^{\omega} (t) \big\|_F^2 \mathrm{d} t.
	\end{equation*}
	We note that if $\mathbf{X}^{\omega^*} (t) \neq \mathbf{X}^{\omega} (t)$, and
	\begin{equation*}
		\frac{1}{NJ} \int_{0}^{1} \big\| \widehat{\mathbf{X}}^{\omega} (t) - \mathbf{X}^{\omega} (t) \big\|_F^2 \mathrm{d} t < \epsilon^2,
	\end{equation*}
	then we must have
	\begin{equation*}
		\frac{1}{NJ} \int_{0}^{1} \big\| \mathbf{X}^{\omega^*} (t) - \widehat{\mathbf{X}}^{\omega} (t) \big\|_F^2 \mathrm{d} t > \epsilon^2 > \frac{1}{NJ} \int_{0}^{1} \big\| \widehat{\mathbf{X}}^{\omega} (t) - \mathbf{X}^{\omega} (t) \big\|_F^2 \mathrm{d} t,
	\end{equation*}
	since
	\begin{equation*}
		\frac{1}{NJ} \int_{0}^{1} \left\Vert \mathbf{X}^{\omega} (t) - \mathbf{X}^{\omega^*} (t) \right\Vert_F^2 \mathrm{d} t \geq 4 \epsilon^2,
	\end{equation*}
	as proved in Step 1. However, this contradicts the definition of $\mathbf{X}^{\omega^*} (t)$ and we conclude that
	\begin{equation*}
		\frac{1}{NJ} \int_{0}^{1} \big\| \widehat{\mathbf{X}}^{\omega} (t) - \mathbf{X}^{\omega} (t) \big\|_F^2 \mathrm{d} t < \epsilon^2 \ \text{implies} \ \mathbf{X}^{\omega^*} (t) = \mathbf{X}^{\omega} (t).
	\end{equation*}
	Now we proceed our proof by contradiction, assume that for any $\mathbf{X}^{\omega} (t) \in G_T'$,
	\begin{equation*}
		\mathbb{P} \left( \frac{1}{NJ} \int_{0}^{1} \big\| \widehat{\mathbf{X}}^{\omega} (t) - \mathbf{X}^{\omega} (t) \big\|_F^2 \mathrm{d} t < \epsilon^2 \right) > \frac{1}{2},
	\end{equation*}
	then we immediately obtain that
	\begin{equation*}
		\mathbb{P} \left( \mathbf{X}^{\omega^*} (t) = \mathbf{X}^{\omega} (t) \right) > \frac{1}{2} \Rightarrow \mathbb{P} \left( \mathbf{X}^{\omega^*} (t) \neq \mathbf{X}^{\omega} (t) \right) < \frac{1}{2},
	\end{equation*}
	here the probability is taken over a uniform prior on $G_T'$ and the distribution of $\mathbf{Y}^{\omega} (t)$, the multivariate Poisson process associated to $\mathbf{X}^{\omega} (t)$.
	
	Since $\mathbf{X}^{\omega^*} (t)$ only depends on $\mathbf{Y}^{\omega} (t)$, using Fano's inequality gives us
	\begin{equation*}
		\mathbb{P} \left( \mathbf{X}^{\omega^*} (t) \neq \mathbf{X}^{\omega} (t) \right) \geq 1 - \frac{\log 2 + \vert G_T' \vert^{-2} \sum_{\omega, \nu \in \Omega'} D_{KL} \left( \mathbf{Y}^{\omega} (t) \Vert \mathbf{Y}^{\nu} (t) \right) }{\log \vert G_T' \vert}.
	\end{equation*}
	Note that for multivariate Poisson processes $\mathbf{Y}^{\omega} (t)$ and $\mathbf{Y}^{\nu} (t)$, we have
	\begin{align*}
		& D_{KL} \left( \mathbf{Y}^{\omega} (t) \Vert \mathbf{Y}^{\nu} (t) \right) = \sum_{i=1}^{N} \sum_{j=1}^{J} D_{KL} \left( Y_{ij}^{\omega} (t) \Vert Y_{ij}^{\nu} (t) \right) \\
		= & \sum_{i=1}^{N} \sum_{j=1}^{J} \int_{0}^{1} \left( X_{ij}^{\omega} (t) \log \frac{X_{ij}^{\omega} (t)}{X_{ij}^{\nu} (t)} - \left( X_{ij}^{\omega} (t) - X_{ij}^{\nu} (t) \right) \right) \mathrm{d} t \\
		\leq &  \sum_{i=1}^{N} \sum_{j=1}^{J} \sup_{ \left\vert x \right\vert \leq M } \frac{f'(x)^2}{f(x)} \int_{0}^{1} \left( X_{ij}^{\omega} (t) - X_{ij}^{\nu} (t) \right)^2 \mathrm{d} t \\
		= & \sup_{ \left\vert x \right\vert \leq M } \frac{f'(x)^2}{f(x)} \int_{0}^{1} \left\Vert \mathbf{X}^{\omega} (t) - \mathbf{X}^{\nu} (t) \right\Vert_F^2 \mathrm{d} t \leq \sup_{ \left\vert x \right\vert \leq M } \frac{f'(x)^2}{f(x)} \frac{NJ}{n^{2m}} \int_{0}^{1} g(t)^2 \mathrm{d} t.
	\end{align*}
	Note that $\log \vert G_T' \vert \geq n rJ \log 2/8$, therefore
	\begin{equation*}
		\mathbb{P} \left( \mathbf{X}^{\omega^*} (t) \neq \mathbf{X}^{\omega} (t) \right) \geq 1 - \frac{8}{n rJ} -  \frac{8 N \sup_{ \left\vert x \right\vert \leq M } \frac{f'(x)^2}{f(x)} \int_{0}^{1} g(t)^2 \mathrm{d} t }{n^{2m+1} r \log 2 }.
	\end{equation*}
	Now as long as $rJ \geq 32$ (if $rJ < 32$ there are only finitely many cases) we choose
	\begin{equation*}
		n = \left[ \left( \frac{32}{\log 2 } \sup_{ \left\vert x \right\vert \leq M } \frac{f'(x)^2}{f(x)} \int_{0}^{1} g(t)^2 \mathrm{d} t \right)^{1/(2m+1)} \left( \frac{N}{r} \right)^{1/(2m+1)} \right] + 1,
	\end{equation*}
	getting to $\mathrm{P} \left( \mathbf{X}^{\omega^*} (t) \neq \mathbf{X}^{\omega} (t) \right) \geq 1/2$, and existence of an absolute constant $C$ such that
	\begin{equation*}
		\epsilon^2 = \frac{\int_{0}^{1} g(t)^2 \mathrm{d} t }{64 n^{2m}} \geq C \left( \frac{N}{r} \right)^{-2m/(2m+1)},
	\end{equation*}
	a contradiction. Therefore $\exists \mathbf{X}^{\omega} (t) \in G_T'$ such that
	\begin{equation*}
		\mathbb{P} \left( \frac{1}{NJ} \int_{0}^{1} \big\| \widehat{\mathbf{X}}^{\omega} (t) - \mathbf{X}^{\omega} (t) \big\|_F^2 \mathrm{d} t \geq C \left( \frac{N}{r} \right)^{-2m/(2m+1)} \right) \geq \frac{1}{2},
	\end{equation*}
	which completes the proof.
\end{proof}
\subsection{Proof of Theorem 3}\label{section:A3}
\begin{proof}[Proof of Theorem 3]
We first prove the first part of Theorem 3. \\[3mm]
We only need to prove that $\text{IC}(r^{\ast})>\text{IC}(r)$ holds with probability converging to 1 for any $r\in\mathcal{R}$ and $r\neq r^{\ast}$. We denote $\widehat{\mathbf{X}}^{(r)}$ as the estimator obtained when we fix the rank to be $r$. Then we separately discuss the case when $r$ is smaller or bigger than $r^{\ast}$. For notational simplicity, we define
    \begin{equation*}
        \mathcal{J}_{t,h}(\mathbf{X})=\sum_{i=1}^{N}\sum_{j=1}^{J}f(X_{ij}(t))\log f(X_{ij}(t))
    \end{equation*}
    and define
    \begin{equation*}
        \mathcal{G}_r=\{\mathbf{X}(\cdot)=\mathbf{\Theta}(\cdot)\mathbf{A}^{\mathrm{T}}:\mathbf{\Theta}\in L^{2}_{N\times r}[0,1],\mathbf{A}\in \mathbb{R}^{J\times r},\sup_{t\in [0,1]}\|\mathbf{\Theta}(t)\|_{2\rightarrow\infty}\leq M^{1/2},\|\mathbf{A}\|_{2\rightarrow\infty}\leq M^{1/2}\}
    \end{equation*}
    for fixed $r\in\mathcal{R}$. Now we discuss the two cases when $r$ is smaller or larger than $r^{\ast}$ separately.\\[3mm]
	\textbf{Case 1: }$r<r^{\ast}$.\\[3mm]
	We have
	\begin{align}\label{eq.thm4.1}
		&\text{IC}(r^{\ast})-\text{IC}(r)\notag\\
		=&-2\int_{h}^{1-h} \left(\mathcal{L}_{t, h} \big( \widehat{\mathbf{X}}^{(r^{\ast})}(t) \big)-\mathcal{L}_{t, h} \big( \widehat{\mathbf{X}}^{(r)}(t) \big)\right)\mathrm{d}t+v(N,J,r^{\ast})-v(N,J,r)\notag\\
		\leq& -2\int_{h}^{1-h} \left(\mathcal{L}_{t, h} \big( {\mathbf{X}}^{\ast}(t) \big)-\mathcal{L}_{t, h} \big( \widehat{\mathbf{X}}^{(r)}(t) \big)\right)\mathrm{d}t+v(N,J,r^{\ast})-v(N,J,r)\notag\\
		\leq&4\int_{h}^{1-h}\sup_{\mathbf{X}\in \mathcal{G}_{r^{\ast}}}\left|\mathcal{L}_{t, h} \big({\mathbf{X}}\big)-\mathbb{E}\mathcal{L}_{t, h} \big( {\mathbf{X}} \big)\right|\mathrm{d}t+4\sup_{\mathbf{X}\in \mathcal{G}_{r^{\ast}}}\left|\int_{h}^{1-h}\left[\mathbb{E}\mathcal{L}_{t, h} \big( {\mathbf{X}}(t) \big)-\mathcal{J}_{t,h}(\mathbf{X}(t))\right]\mathrm{d}t\right|\notag\\
		&-2\int_{h}^{1-h} \left(\mathcal{J}_{t,h}({\mathbf{X}}^{\ast}(t))-\mathcal{J}_{t,h}(\widehat{\mathbf{X}}^{(r)}(t))\right)\mathrm{d}t+(v(N,J,r^{\ast})-v(N,J,r))\notag\\
		\triangleq&4I_1+4I_2-2I_3+I_4.
	\end{align}
    In the following steps, we bound each of $I_1,I_2,I_3,I_4$.\\[3mm]
	\textbf{Step 1: }Bound $I_1$.\\[3mm]
	By the proof of the first part of Theorem 1, we have shown that $I_1=O_p(NJ h^m)$.\\[3mm]
	\textbf{Step 2: }Bound $I_2$.\\[3mm]
	For any $\mathbf{X}\in \mathcal{G}_{r^{\ast}}$, by Lemma \ref{lem2} we have
    \begin{align*}
        &\left|\int_{h}^{1-h}\left(\mathbb{E}\mathcal{L}_{t, h} \big( {\mathbf{X}}(t) \big)-\mathcal{J}_{t,h}(\mathbf{X}(t))\right)\mathrm{d}t\right|\\
        \leq &\sum_{i=1}^N\sum_{j=1}^J\int_{h}^{1-h}\left|\frac{\int_{0}^{1} K_h (t-s) f \left( X_{ij} (s) \right) \mathrm{d} s}{\int_{0}^{1} K_h (t-s) \mathrm{d} s} - f \left( X_{ij} (t) \right) \right|\left|\log f(X_{ij}(t))\right| \mathrm{d}t\\
        \leq& TNJC_m h^m \sup_{ \left\vert x \right\vert \leq M } \left\vert \log f(x) \right\vert.
    \end{align*}
    So we have $I_2\lesssim NJ h^m $.\\[3mm]
    \textbf{Step 3: }For the third term, for any $t$ in $[h,1-h]$, by Lemma \ref{lem1} we have
    \begin{align*}
        \left\lVert \widehat{\mathbf{X}}^{(r)} (t) - {\mathbf{X}}^{\ast} (t) \right\rVert_F^2 = & \sum_{i=1}^{N} \sum_{j=1}^{J} \left( X^{(r)}_{ij} (t) - X^{\ast}_{ij} (t) \right)^2 \\
		\leq & 4 \beta_{M} \sum_{i=1}^{N} \sum_{j=1}^{J} \left( f ( X^{\ast}_{ij} (t) ) \log \frac{f ( X^{\ast}_{ij} (t) )}{f ( \widehat{X}^{(r)}_{ij} (t) )} - (f(X^{\ast}_{ij} (t)) - f (\widehat{X}^{(r^{\ast})}_{ij} (t))) \right) \\
		=&4 \beta_{M}\left(\mathcal{J}_{t,h}(\mathbf{X}^{\ast}(t))-\mathcal{J}_{t,h}(\widehat{\mathbf{X}}^{(r)}(t))\right).
    \end{align*}
    So we have
    \begin{align*}
        I_3=&\int_{h}^{1-h} \left(\mathcal{J}_{t,h}({\mathbf{X}}^{\ast}(t))-\mathcal{J}_{t,h}(\widehat{\mathbf{X}}^{(r)}(t))\right)\mathrm{d}t\\
		\geq&\frac{1}{\beta_M}\int_{h}^{1-h}\left\lVert \widehat{\mathbf{X}}^{(r)} (t) - {\mathbf{X}}^{\ast} (t) \right\rVert_F^2\mathrm{d}t\\
		\geq& \frac{1}{\beta_M}\int_{h}^{1-h}\sigma_{r^{\ast},t}^2\mathrm{d}t,
    \end{align*}
    where in the last step we use Weyl's inequality for singular values.\\[3mm]
    \noindent \textbf{Step 4: }For the last term $I_4$, by the assumption of Theorem 3, we have 
    \begin{align*}
		I_4=v(N,J,r^{\ast})-v(N,J,r)=o\left(\int_{h}^{1-h}\sigma_{r^{\ast},t}^2\mathrm{d}t\right).
    \end{align*}
    Since $NJ\left( J/\phi(J)\right)^{-m/(2m+1)+\delta}=o(u(N,J,r))$ for some $\delta>0$, for $h=(\phi(J)/J)^{(1-\delta)/(2m+1)}$, by (\ref{eq.thm4.1}) we have $\lim_{N,J\rightarrow \infty}\mathbb{P}(\text{IC}(r^{\ast})>\text{IC}(r))=0$.\\[3mm]
	\textbf{Case 2: }$r^{\ast}<r$.\\[2mm]
    Denote maximum element in $\mathcal{R}$ by $r_{\text{max}}$. Then,
    \begin{align}\label{eq.thm4.2}
		&\text{IC}(r^{\ast})-\text{IC}(r)\notag\\
		=&-2\int_{h}^{1-h} \left(\mathcal{L}_{t, h} \big( \widehat{\mathbf{X}}^{(r^{\ast})}(t) \big)-\mathcal{L}_{t, h} \big( \widehat{\mathbf{X}}^{(r)}(t) \big)\right)\mathrm{d}t+v(N,J,r^{\ast})-v(N,J,r)\notag\\
		\leq&-2\int_{h}^{1-h} \left(\mathcal{L}_{t, h} \big( \mathbf{X}^{\ast}(t) \big)-\mathcal{L}_{t, h} \big( \widehat{\mathbf{X}}^{(r)}(t) \big)\right)\mathrm{d}t+v(N,J,r^{\ast})-v(N,J,r)\notag\\
		\leq& 4\int_{h}^{1-h}\sup_{\mathbf{X}\in \mathcal{G}_{r_{\text{max}}}}\left|\mathcal{L}_{t, h} \big({\mathbf{X}}\big)-\mathbb{E}\mathcal{L}_{t, h} \big( {\mathbf{X}} \big)\right|\mathrm{d}t+4\sup_{\mathbf{X}\in \mathcal{G}_{r_{\text{max}}}}\left|\int_{h}^{1-h}\left(\mathbb{E}\mathcal{L}_{t, h} \big( {\mathbf{X}}(t) \big)-\mathcal{J}_{t,h}(\mathbf{X}(t))\right)\mathrm{d}t\right|\notag\\
		&+2\left|\int_{h}^{1-h}\left(\mathcal{J}_{t,h}(\mathbf{X}^{\ast}(t))-\mathcal{J}_{t,h}(\widehat{\mathbf{X}}^{(r)}(t))\right)\mathrm{d}t\right|-(v(N,J,r)-v(N,J,r^{\ast}))\notag\\\triangleq&4I_1+4I_2+2I_3-I_4.
    \end{align}
    We can use the same method as in Case 1 to prove that $I_1/NJ=O_p\left(\left( J/\phi(J) \right)^{-m/(2m+1)+\delta}\right)$ and $I_2=O\left(\left( J/\phi(J) \right)^{-m/(2m+1)+\delta}\right)$. Moreover, by the proof of the first part in Theorem 1,  we have $I_3=O_p\left(\left( J/\phi(J)\right)^{-m/(2m+1)+\delta}\right)$. By the assumption of Theorem 3, we have
	\begin{align*}
		\lim_{n\rightarrow \infty}\frac{1}{NJ}I_4/\left( J/\phi(J) \right)^{-m/(2m+1)+\delta}\rightarrow \infty.
	\end{align*}
	So by (\ref{eq.thm4.2}) we have $\lim_{N,J\rightarrow \infty}\mathbb{P}(\text{IC}(r^{\ast})>\text{IC}(r))=0$. Hence by the proof in Case 1 and Case 2, we verified that
    \begin{align*}
		\lim_{N,J\rightarrow \infty}\mathbb{P}(\hat{r}=r^{\ast})=1.
    \end{align*}
    Then we prove the second part of Theorem 3 similarly. We separately discuss the case when $r<r^{\ast}$ and $r^{\ast}<r$. The proof in the first case is the same as that of the first part of Theorem 3, so we only discuss the second case.\\[3mm]
    For convenience, we denote $Z_{ij} (t) = \log f(X_{ij} (t))$, $\mathcal{L}_{t, h} (\mathbf{Z}) = \mathcal{L}_{t, h} (\mathbf{X})$, and so on. We omit $\widehat{\mathbf{Z}}^{(r)}$ as $\widehat{\mathbf{Z}}$. By the proof of the second part in Theorem 1, for any $\delta>0$ we have
    \begin{align*}
	   \sup_{t\in[h,1-h]}\big\| \widehat{\mathbf{Z}} (t) - \mathbf{Z}^{\ast} (t) \big\|_F^2=O_p \left( NJ\left( N\land J \right)^{- 2m/(2m+1) + \delta} \right).
    \end{align*}
    Using Taylor's expansion, we have
    \begin{align}\label{eq.thm5.1}
		& 0 \leq \int_{h}^{1-h}\left(\mathcal{L}_{t, h} ( \widehat{\mathbf{Z}} (t) ) - \mathcal{L}_{t, h} ( \mathbf{Z}^{\ast} (t) )\right)\mathrm{d}t \notag\\
		= & \sum_{i=1}^{N} \sum_{j=1}^{J} \int_{h}^{1-h}\left( \frac{\int_{0}^{1} K_h (t-s) \mathrm{d} Y_{ij} (s) }{\int_{0}^{1} K_h (t-s) \mathrm{d} s} \left( \widehat{Z}_{ij} (t) - Z_{ij}^{\ast} (t) \right) - \left( f(\widehat{X}_{ij} (t)) - f(X_{ij}^{\ast} (t) )\right) \right)\mathrm{d}t \notag\\
		= & \sum_{i=1}^{N} \sum_{j=1}^{J} \int_{h}^{1-h}\left( \frac{\int_{0}^{1} K_h (t-s) \mathrm{d} Y_{ij} (s) }{\int_{0}^{1} K_h (t-s) \mathrm{d} s} - f(X_{ij}^{\ast}(t)) \right) \left( \widehat{Z}_{ij} (t) - Z_{ij}^{\ast} (t) \right)\mathrm{d}t \notag\\
		& - \sum_{i=1}^{N} \sum_{j=1}^{J} \int_{h}^{1-h}\frac{e^{\tilde{Z}_{ij} (t) }}{2} \left( \widehat{Z}_{ij} (t) - Z_{ij}^{\ast} (t) \right)^2\mathrm{d}t\notag\\
		\triangleq &I_1-I_2,
    \end{align}
    where $\tilde{Z}_{ij} (t)$ is some real number between $Z_{ij}^{\ast} (t)$ and $\widehat{Z}_{ij} (t)$. By the proof of the second part in Theorem 1, $I_1=O_p \left( NJ\left( N\land J \right)^{- 2m/(2m+1) + \delta} \right)$. Since $e^{\tilde{Z}_{ij} (t)} \geq \min (f(X_{ij}^{\ast} (t)), f(\widehat{X}_{ij} (t))) \geq \inf_{\left\vert x \right\vert \leq M} f(x)$, it follows that
    \begin{align*}
		|I_2|\lesssim T\sup_{t\in[h,1-h]}\big\| \widehat{\mathbf{Z}} (t) - \mathbf{Z}^{\ast} (t) \big\|_F^2=O_p \left( NJ\left( N\land J \right)^{- 2m/(2m+1) + \delta} \right).
    \end{align*}
    Then by (\ref{eq.thm5.1}) we have
    \begin{align*}
		\int_{h}^{1-h}\left(\mathcal{L}_{t, h} ( \widehat{\mathbf{Z}}^{(r)} (t) ) - \mathcal{L}_{t, h} ( \mathbf{Z}^{\ast} (t) )\right)\mathrm{d}t =O_p \left( NJ\left(N\land J \right)^{- 2m/(2m+1) + \delta} \right).
    \end{align*}
    So we proved that for any $r^{\ast}\leq r\leq r_{max}$,    \begin{align*}
		\int_{h}^{1-h}\left(\mathcal{L}_{t, h} ( \widehat{\mathbf{X}}^{(r)} (t) ) - \mathcal{L}_{t, h} ( \mathbf{X}^{\ast} (t) )\right)\mathrm{d}t=&O_p \left(NJ \left(N\land J \right)^{- 2m/(2m+1) + \delta} \right)
    \end{align*}
    for $\delta>0$ small enough. Since we have
    \begin{align*}
        &\text{IC}(r^{\ast})-\text{IC}(r)\notag\\
	=&-2\int_{h}^{1-h} 
        \left(\mathcal{L}_{t, h} \big( \widehat{\mathbf{X}}^{(r^{\ast})}(t) \big)-\mathcal{L}_{t, h} \big( \widehat{\mathbf{X}}^{(r)}(t) \big)\right)\mathrm{d}t+v(N,J,r^{\ast})-v(N,J,r)\\
        \leq&-2\left[\int_{h}^{1-h}\left(\mathcal{L}_{t, h} ( \widehat{\mathbf{X}}^{(r^{\ast})} (t) ) - \mathcal{L}_{t, h} ( \mathbf{X}^{\ast} (t) )\right)\mathrm{d}t-\int_{h}^{1-h}\left(\mathcal{L}_{t, h} ( \widehat{\mathbf{X}}^{(r)} (t) ) - \mathcal{L}_{t, h} ( \mathbf{X}^{\ast} (t) )\right)\mathrm{d}t\right]+u(N,J,r)\\
        =&O_p \left(NJ \left(N\land J \right)^{- 2m/(2m+1) + \delta} \right)+u(N,J,r)
    \end{align*}
    and $NJ\left(N\land J \right)^{- 2m/(2m+1) + \delta}=o(u(N,J,r))$, we have $\lim_{N,J\rightarrow \infty}\mathbb{P}(\text{IC}(r^{\ast})>\text{IC}(r))=0$. This completes the proof of Theorem 3.
\end{proof}

\section{Proof of lemmas}\label{section:B}
\begin{proof}[Proof of Lemma 1]
    Using the well-known inequality $-\log x \geq 1-x$, we obtain that
	\begin{align*}
	    f(a) \log \frac{f(a)}{f(b)} - \left( f(a)-f(b) \right) = & -2 f(a) \log \frac{\sqrt{f(b)}}{\sqrt{f(a)}} - \left(f(a)-f(b)\right)   \\
	    \geq & 2 f(a) \left(1-\frac{\sqrt{f(b)}}{\sqrt{f(a)}}\right) - \left(f(a)-f(b)\right) \\
	    = & \left(\sqrt{f(a)} - \sqrt{f(b)}\right)^2 \\
	    = & \left(\int_{b}^{a} \frac{f'(t)}{2 \sqrt{f(t)}} \mathrm{d} t\right)^2 \\
	    \geq & \left(a-b\right)^2 \inf_{\left\vert x \right\vert \leq \alpha} \frac{f'(x)^2}{4f(x)},
	\end{align*} 
	which implies that $\left( a - b \right)^2 \leq 4 \beta_{\alpha} \left( f(a) \log \frac{f(a)}{f(b)} - \left( f(a) - f(b) \right) \right)$.
\end{proof}
\begin{proof}[Proof of Lemma 2]
	Using Taylor series expansion, we obtain that
	\begin{align*}
		& \frac{\int_{0}^{1} K_h (t-s) f \left( X_{ij} (s) \right) \mathrm{d} s}{\int_{0}^{1} K_h (t-s) \mathrm{d} s} - f \left( X_{ij} (t) \right) = \frac{\int_{0}^{1} K_h (t-s) \left( f \left( X_{ij} (s) \right) - f \left( X_{ij} (t) \right) \right) \mathrm{d} s}{\int_{0}^{1} K_h (t-s) \mathrm{d} s} \\
		= & \sum_{k=1}^{m-1} \frac{1}{k!} \frac{\mathrm{d}^k}{\mathrm{d}t^k} f \left( X_{ij} (t) \right) \frac{\int_{0}^{1} K_h (t-s) \left( s-t \right)^k \mathrm{d} s}{\int_{0}^{1} K_h (t-s) \mathrm{d} s} + \frac{1}{m!} \frac{\int_{0}^{1} K_h (t-s) \frac{\mathrm{d}^m}{\mathrm{d}u^m} f \left( X_{ij} (u) \right) \left( s-t \right)^m \mathrm{d} s}{\int_{0}^{1} K_h (t-s) \mathrm{d} s},
	\end{align*}
	where $u = \theta s + (1-\theta) t$. Now under Condition 3, if $h \leq t \leq 1-h$, it's easy to show that for $k=1, \cdots, m-1$,
	\begin{align*}
		\int_{0}^{1} K_h (t-s) \left( s-t \right)^k \mathrm{d} s = 0.
	\end{align*}
	According to Condition 2, $\left\vert \frac{\mathrm{d}^m}{\mathrm{d}u^m} f( X_{ij} (u)) \right\vert \leq M$, hence we have
	\begin{align*}
		\left\vert \frac{\int_{0}^{1} K_h (t-s) f \left( X_{ij} (s) \right) \mathrm{d} s}{\int_{0}^{1} K_h (t-s) \mathrm{d} s} - f \left( X_{ij} (t) \right) \right\vert \leq \frac{M}{m!} h^m \int_{-1}^{1} \left\vert K(x) \right\vert \left\vert x \right\vert^m \mathrm{d} x.
	\end{align*}
	Choose $C_m = \frac{M}{m!} \int_{-1}^{1} \left\vert K(x) \right\vert \left\vert x \right\vert^m \mathrm{d} x$, we complete the proof.
\end{proof}
\begin{proof}[Proof of Lemma 2 under Condition 3']
Using Taylor series expansion, we obtain that
	\begin{align*}
		& \frac{\int_{0}^{1} K_h (t-s) f \left( X_{ij} (s) \right) \mathrm{d} s}{\int_{0}^{1} K_h (t-s) \mathrm{d} s} - f \left( X_{ij} (t) \right) = \frac{\int_{0}^{1} K_h (t-s) \left( f \left( X_{ij} (s) \right) - f \left( X_{ij} (t) \right) \right) \mathrm{d} s}{\int_{0}^{1} K_h (t-s) \mathrm{d} s} \\
		= & \sum_{k=1}^{m-1} \frac{1}{k!} \frac{\mathrm{d}^k}{\mathrm{d}t^k} f \left( X_{ij} (t) \right) \frac{\int_{0}^{1} K_h (t-s) \left( s-t \right)^k \mathrm{d} s}{\int_{0}^{1} K_h (t-s) \mathrm{d} s} + \frac{1}{m!} \frac{\int_{0}^{1} K_h (t-s) \frac{\mathrm{d}^m}{\mathrm{d}u^m} f \left( X_{ij} (u) \right) \left( s-t \right)^m \mathrm{d} s}{\int_{0}^{1} K_h (t-s) \mathrm{d} s},
	\end{align*}
	where $u = \theta s + (1-\theta) t$. Now under Condition 3', if $h^{1-\epsilon} \leq t \leq 1-h^{1-\epsilon}$, it's easy to show that
	\begin{align*}
		\left|\int_{0}^{1} K_h (t-s) \left( s-t \right)^k \mathrm{d} s\right|=&h^{k}\left|\int_{\frac{t-1}{h}}^{\frac{t}{h}}x^{k}K(x)dx\right|\\
        =&h^{k}\left|\int^{\frac{t-1}{h}}_{-\infty}x^{k}K(x)dx+\int_{\frac{t}{h}}^{\infty}x^{k}K(x)dx\right|\\
        \leq&h^{k}\int^{-h^{-\epsilon}}_{-\infty}x^{k}|K(x)|dx+\int_{h^{-\epsilon}}^{\infty}x^{k}|K(x)|dx\\[1mm]
        \lesssim& h^{k}h^{m-k}\\[3mm]
        =&h^{m}.
        \end{align*}
        Moreover, it is easy to see that $\int_{0}^{1} K_h (t-s) \mathrm{d} s\rightarrow 1$ as $h\downarrow 0$. So we can choose constant $C_m>0$ such that the following holds when $h>0$ is small enough:
	\begin{align*}
		\left\vert \frac{\int_{0}^{1} K_h (t-s) f \left( X_{ij} (s) \right) \mathrm{d} s}{\int_{0}^{1} K_h (t-s) \mathrm{d} s} - f \left( X_{ij} (t) \right) \right\vert \leq C_m h^m.
	\end{align*}
    This completes the proof.
\end{proof}
\begin{proof}[Proof of Lemma 3]
Let $D = [N/8]$. Set $\omega^0 = (0, \cdots, 0)$. Define $\Omega_0 = \Omega$ and $\Omega_1 = \{ \omega \in \Omega: H(\omega, \omega^0) > D \}$. Let $\omega^1$ be any element in $\Omega_1$. We continue recursively and at the $j$-th step we get $\Omega_j = \{ \omega \in \Omega_{j-1}: H(\omega, \omega^{j-1}) > D \}$, and $\omega^j \in \Omega_j$, where $j = 1, \cdots, M$, until $\Omega_{M+1}$ is empty. Let $n_j$ be the number of elements in set $A_j = \{ \omega \in \Omega_j: H(\omega, \omega^j) \leq D \}$, $j = 0, \cdots, M$. It follows clearly that
	\begin{equation}
		n_j \leq \sum_{i=0}^{D} \binom{N}{i}.
	\end{equation}
	Since the sets $A_0, \cdots, A_M$ form a partition of $\Omega$, $\sum_{j=0}^{M} n_j = 2^N$. Thus,
	\begin{equation}
		(M+1) \sum_{i=0}^{D} \binom{N}{i} \geq 2^N,
	\end{equation}
	leading to
	\begin{equation}
		M + 1 \geq \frac{1}{\sum_{i=0}^{D} 2^{-N} \binom{N}{i} } = \frac{1}{\mathrm{P} \left(\sum_{i=1}^{N} Z_i \leq D \right) },
	\end{equation}
	where $Z_1, \cdots, Z_N$ are i.i.d. Ber($1/2$) random variables. By Hoeffding's inequality,
	\begin{equation}
		\mathrm{P} \left(\sum_{i=1}^{N} Z_i \leq D \right) = \mathrm{P} \left(\sum_{i=1}^{N} Z_i \leq \left[ \frac{N}{8} \right] \right) \leq \exp \left(-\frac{9N}{32} \right) < 2^{- N/4}.
	\end{equation}
	Therefore $M \geq 2^{N/8}$ as long as $N \geq 8$. And finally note that by our construction, $H(\omega^j, \omega^k) \geq D+1 > N/8$ for any $j \neq k$.
\end{proof}
\section{Computation Algorithm}\label{section:C}
We propose an projected gradient descent algorithm for optimizing the discretized pseudo-likelihood (7). To handle the constraints in our problem, a projected gradient descent update is used in each iteration. For vector $\mathbf{x}$, we define the following projection operator:
\begin{align*}
	\operatorname{Proc}_M(\mathbf{x})=\underset{\|\mathbf{y}\| \leq M}{\arg \min }\|\mathbf{y}-\mathbf{x}\|= \begin{cases}\mathbf{x} & \text { if }\|\mathbf{x}\| \leq M, \\ M \mathbf{x} /\|\mathbf{x}\| & \text { if }\|\mathbf{x}\|>M .\end{cases}
\end{align*}
\begin{algo}[Projected gradient descent algorithm]
\rm 
~\\[2mm]
\textbf{Input: }Data $\{Y_{ij}(t):t\in[0,1],i=1,\ldots,N,j=1,\ldots,J\}$, pre-specified dimension $r$, constraint $M$.\\[3mm]
We partition interval $[h,1-h]$ by $q$ evenly separated time points $t_1,\ldots,t_q$. We set iteration number $m=1$, initial values $\boldsymbol{\Theta}^{(0)}(t_1)=(\boldsymbol{\theta}_1^{(0)}(t_1), \ldots, \boldsymbol{\theta}_N^{(0)}(t_1)),\ldots,\boldsymbol{\Theta}^{(0)}(t_q)=(\boldsymbol{\theta}_1^{(0)}(t_q), \ldots, \boldsymbol{\theta}_N^{(0)}(t_q))$ and $\boldsymbol{A}^{(0)}=(\boldsymbol{a}_1^{(0)}, \ldots, \boldsymbol{a}_J^{(0)})$.\\[3mm]
\textbf{Update:} At the $m$-th iteration, perform:\\[3mm]
For each respondent $i$ and time node $t_l$, update
\begin{align*}
\boldsymbol{\theta}_i^{(m)}(t_l)=\operatorname{Proc}_{\sqrt{M}}\left(\boldsymbol{\theta}_i^{(m-1)}(t_l)+\rho \mathbf{s}_{i,l}^{(m-1)}(\boldsymbol{\Theta}^{(m-1)},\boldsymbol{A}^{(m-1)})\right),
\end{align*}
where
\begin{align*}
\mathbf{s}_{i,l}^{(m-1)}(\boldsymbol{\Theta},\boldsymbol{A})=  \frac{\partial}{\partial \boldsymbol{\theta}_i(t_l)}\mathcal{L}_{h} \left( \boldsymbol{\Theta}(t_1), ..., \boldsymbol{\Theta}(t_q), \mathbf A \right) .
\end{align*}
\item For each item $j$, update
\begin{align*}
    \boldsymbol{a}_j^{(m)}=\operatorname{Proc}_{\sqrt{M}}\left(\boldsymbol{a}_j^{(m-1)}+\rho \tilde{\mathbf{s}}_j^{(m-1)}(\boldsymbol{\Theta}^{(m-1)},\boldsymbol{A}^{(m-1)})\right),
\end{align*}
where
\begin{align*}
    \tilde{\mathbf{s}}_j^{(m-1)}(\boldsymbol{\Theta},\boldsymbol{A})=  \frac{\partial}{\partial \boldsymbol{a}_j}\mathcal{L}_{h} \left( \boldsymbol{\Theta}(t_1), ..., \boldsymbol{\Theta}(t_q), \mathbf A \right).
\end{align*}
The step size $\rho>0$ is chosen by backtracking line search.
Iterate this step until convergence. Let $\tilde{m}$ be the last iteration number upon convergence.\\[3mm]
\textbf{Output: }$\widehat{\mathbf{X}}(t_l)=\mathbf{{\Theta}}^{(\tilde{m})}(t_l)(\boldsymbol{A}^{(\tilde{m})})^{\mathrm{T}}$ for each $l=1,\ldots,q$.
\end{algo}
The following theorem guarantees the convergence of Algorithm 1 to a critical point of the discretized pseudo-likelihood.
\begin{theorem}\label{thm_comp}
Denote the parameters obtained in the $k$-th update by $\boldsymbol{\Theta}^{(k)}=(\boldsymbol{\Theta}^{(k)}(t_1), ..., \boldsymbol{\Theta}^{(k)}(t_q))$ and $\mathbf A^{(k)}$. Then $\boldsymbol{\Theta}^{(k)}\rightarrow \boldsymbol{\Theta}=(\boldsymbol{\Theta}(t_1), ..., \boldsymbol{\Theta}(t_q))$ and $\mathbf A^{(k)}\rightarrow \mathbf A$, where $\max_{l=1,\ldots,q}\|\boldsymbol{\Theta}(t_l)\|_{2\rightarrow\infty}\leq M^{1/2}$ and $\|\mathbf{A}\|_{2\rightarrow\infty}\leq M^{1/2}$ and $(\boldsymbol{\Theta},\mathbf A)$ is a critical point of $\mathcal{L}_{h} \left( \boldsymbol{\Theta}(t_1), ..., \boldsymbol{\Theta}(t_q), \mathbf A \right)$.
\end{theorem}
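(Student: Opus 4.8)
The plan is to recognize Algorithm~1 as a single projected gradient ascent step for $\mathcal{L}_h$ over a compact convex constraint set, and then to invoke the Kurdyka--{\L}ojasiewicz (K{\L}) convergence machinery. First I would note that in the $m$-th update the gradient blocks $\mathbf{s}_{i,l}^{(m-1)}$ and $\tilde{\mathbf{s}}_j^{(m-1)}$ are all evaluated at the \emph{same} iterate $(\boldsymbol{\Theta}^{(m-1)},\mathbf{A}^{(m-1)})$, and that the feasible set $\mathcal{G}_q := \{(\boldsymbol{\Theta}(t_1),\ldots,\boldsymbol{\Theta}(t_q),\mathbf{A}):\max_{l}\|\boldsymbol{\Theta}(t_l)\|_{2\to\infty}\le M^{1/2},\ \|\mathbf{A}\|_{2\to\infty}\le M^{1/2}\}$ is a Cartesian product of Euclidean balls of radius $M^{1/2}$, one per row of each $\boldsymbol{\Theta}(t_l)$ and of $\mathbf{A}$ (since $\|\cdot\|_{2\to\infty}$ is the largest row-$\ell_2$ norm). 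The Euclidean projection onto such a product decomposes row by row and coincides with $\operatorname{Proc}_{\sqrt M}$, so one update is exactly $(\boldsymbol{\Theta}^{(m)},\mathbf{A}^{(m)}) = \operatorname{Proj}_{\mathcal{G}_q}\big((\boldsymbol{\Theta}^{(m-1)},\mathbf{A}^{(m-1)}) + \rho_m\,\nabla\mathcal{L}_h(\boldsymbol{\Theta}^{(m-1)},\mathbf{A}^{(m-1)})\big)$ with $\rho_m$ produced by backtracking; equivalently, it is a forward--backward (proximal gradient) step for $F := -\mathcal{L}_h + \iota_{\mathcal{G}_q}$, where $\iota_{\mathcal{G}_q}$ is the indicator of $\mathcal{G}_q$.

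Next I would record the regularity on which everything rests. On $\mathcal{G}_q$, Cauchy--Schwarz gives $|X_{ij}(t_l)| = |\langle\boldsymbol{\theta}_i(t_l),\boldsymbol{a}_j\rangle|\le M$, so (with $f=\exp$, or more generally $f$ as in Condition~1) each summand of $\mathcal{L}_h$ is a $C^\infty$ function of $(\boldsymbol{\theta}_i(t_l),\boldsymbol{a}_j)$ with all derivatives uniformly bounded on $\mathcal{G}_q$; hence $\mathcal{L}_h$ is bounded above on $\mathcal{G}_q$ and its gradient is $L$-Lipschitz there for some finite $L$ (extending $\mathcal{L}_h$ smoothly to a convex neighbourhood of $\mathcal{G}_q$ makes this global). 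Using the descent lemma together with the Armijo rule I would show that the accepted step sizes satisfy $\rho_m\ge\underline{\rho}>0$, that $\mathcal{L}_h$ increases along the iterates with sufficient ascent $\mathcal{L}_h(\boldsymbol{\Theta}^{(m)},\mathbf{A}^{(m)}) - \mathcal{L}_h(\boldsymbol{\Theta}^{(m-1)},\mathbf{A}^{(m-1)}) \ge c\,\|(\boldsymbol{\Theta}^{(m)},\mathbf{A}^{(m)}) - (\boldsymbol{\Theta}^{(m-1)},\mathbf{A}^{(m-1)})\|^2$ for some $c>0$, and hence (by boundedness of $\mathcal{L}_h$) that $\sum_m\|(\boldsymbol{\Theta}^{(m)},\mathbf{A}^{(m)}) - (\boldsymbol{\Theta}^{(m-1)},\mathbf{A}^{(m-1)})\|^2<\infty$, so successive differences vanish. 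From the variational characterization of the projection, $\nabla\mathcal{L}_h(\boldsymbol{\Theta}^{(m-1)},\mathbf{A}^{(m-1)}) - \rho_m^{-1}\big((\boldsymbol{\Theta}^{(m)},\mathbf{A}^{(m)})-(\boldsymbol{\Theta}^{(m-1)},\mathbf{A}^{(m-1)})\big)\in N_{\mathcal{G}_q}(\boldsymbol{\Theta}^{(m)},\mathbf{A}^{(m)})$, which combined with $L$-smoothness yields a subgradient (relative-error) bound $w^{(m)}\in\partial F(\boldsymbol{\Theta}^{(m)},\mathbf{A}^{(m)})$ with $\|w^{(m)}\|\le (L+\underline{\rho}^{-1})\,\|(\boldsymbol{\Theta}^{(m)},\mathbf{A}^{(m)})-(\boldsymbol{\Theta}^{(m-1)},\mathbf{A}^{(m-1)})\|$. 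Together with compactness of $\mathcal{G}_q$, these facts imply that the iterates have limit points, every limit point is a critical (KKT) point of $\mathcal{L}_h$ on $\mathcal{G}_q$, and $\mathcal{L}_h$ is constant on the limit-point set.

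Finally, to upgrade subsequential convergence to convergence of the whole sequence, I would verify that $F = -\mathcal{L}_h + \iota_{\mathcal{G}_q}$ satisfies the K{\L} inequality: $\mathcal{L}_h$ is real-analytic in its arguments and $\mathcal{G}_q$ is a compact semialgebraic set, so $F$ is definable in the o-minimal structure $\mathbb{R}_{\mathrm{an}}$ and is therefore a K{\L} function. The sufficient-ascent property, the relative-error bound, the continuity of $F$ along the iterates, and the K{\L} inequality are precisely the hypotheses of the abstract convergence theorem for descent methods (Attouch--Bolte--Svaiter; Bolte--Sabach--Teboulle), which yields that $\{(\boldsymbol{\Theta}^{(m)},\mathbf{A}^{(m)})\}$ is Cauchy, hence converges to a single critical point $(\boldsymbol{\Theta},\mathbf{A})\in\mathcal{G}_q$; feasibility $\max_l\|\boldsymbol{\Theta}(t_l)\|_{2\to\infty}\le M^{1/2}$ and $\|\mathbf{A}\|_{2\to\infty}\le M^{1/2}$ passes to the limit because $\mathcal{G}_q$ is closed. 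I expect the main obstacle to be the K{\L} step---specifically, confirming that the $2\to\infty$-ball indicator and the real-analytic objective lie in a common o-minimal structure so that $F$ is a genuine K{\L} function, and then carefully propagating the projection's variational inequality into the exact relative-error form required by the abstract theorem; the smoothness, step-size lower bound, and sufficient-ascent parts are routine once $|X_{ij}(t_l)|\le M$ on $\mathcal{G}_q$ is used.
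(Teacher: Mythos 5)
Your proposal is correct in outline but takes a genuinely different route from the paper. The paper's own proof is elementary: it verifies (as you do) that the row-wise operator $\operatorname{Proc}_{\sqrt M}$ is the Euclidean projection onto the product-of-balls constraint set, then takes a limit point $\gamma$ of a subsequence, argues by contradiction (using the strict increase of $\mathcal{L}_h$ along iterates and a case analysis on whether the update map fixes $\gamma$) that $\gamma$ must satisfy the variational inequality $\nabla^{\top}\mathcal{L}_h(\gamma)(\tilde\gamma-\gamma)\le 0$ for all feasible $\tilde\gamma$, and finally asserts that the algorithm ``stops updating on a critical point'' so the limit point is unique. Your forward--backward/K{\L} route replaces that last assertion with the Attouch--Bolte--Svaiter machinery: sufficient ascent from the descent lemma and Armijo rule, a relative-error subgradient bound from the projection's variational characterization, and the K{\L} inequality for $-\mathcal{L}_h+\iota_{\mathcal{G}_q}$ (real-analytic plus semialgebraic, hence definable). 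What your approach buys is a rigorous proof of convergence of the \emph{whole} sequence: the paper's claim that ``there can be only one limit point'' does not actually follow from all limit points being critical (the critical set could be a continuum), so your heavier argument closes a real gap in the paper's reasoning. What it costs is an extra hypothesis you correctly flag: the K{\L} step needs $f$ to be real-analytic (true for $f=\exp$ used in the paper, but not guaranteed by Condition~1, which only gives $C^m$ smoothness), and the Lipschitz-gradient step implicitly needs $m\ge 2$ or $f'$ locally Lipschitz. If you state the theorem under $f=\exp$ (or $f$ real-analytic with $f,f'$ bounded away from zero on $[-M,M]$), your plan goes through and is in fact stronger than the published argument.
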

\begin{remark}
    In simulation studies, $\mathcal{L}_{h}$ usually have only one critical point, which is its unique global maximum point.
\end{remark}
\begin{proof}[Proof of Theorem \ref{thm_comp}]
    For notational simplicity, we denote:
    \begin{align*}
        \mathcal{H}=\{\boldsymbol{\Theta}(t_1),\ldots,\boldsymbol{\Theta}(t_q),\mathbf{A}: \max_{l=1,\ldots,q}\|\boldsymbol{\Theta}(t_l)\|_{2\rightarrow\infty}
    \leq M^{1/2},\|\mathbf{A}\|_{2\rightarrow\infty}\leq M^{1/2}\}.
    \end{align*}
    Furthermore, denote the projection operator in algorithm by $\mathcal{P}$. We first prove that $\mathcal{P}$ is the projection operator on compact set $\mathcal{H}$. For any $\gamma=(\boldsymbol{\Theta},\mathbf{A})\triangleq (\alpha_1,\ldots,\alpha_K)$, where $\alpha_1,\ldots,\alpha_K\in\mathbb{R}^{r}$ and $K=qN+J$. Then $\mathcal{P}(\alpha_1,\ldots,\alpha_K)=(\operatorname{Proc}_{\sqrt{M}}(\alpha_1),\ldots,\operatorname{Proc}_{\sqrt{M}}(\alpha_K))$. Then for any $\tilde{\gamma}=(\tilde{\alpha}_1,\ldots,\tilde{\alpha}_K)\in\mathcal{H}$, we have
    \begin{align*}
        &\langle (\tilde{\alpha}_1-\operatorname{Proc}_{\sqrt{M}}(\alpha_1),\ldots,\tilde{\alpha}_K-\operatorname{Proc}_{\sqrt{M}}(\alpha_K)),({\alpha}_1-\operatorname{Proc}_{\sqrt{M}}(\alpha_1),\ldots,{\alpha}_K-\operatorname{Proc}_{\sqrt{M}}(\alpha_K))\rangle\\
        =&\sum_{k=1}^{K}\langle \tilde{\alpha}_k-\operatorname{Proc}_{\sqrt{M}}(\alpha_k),{\alpha}_k-\operatorname{Proc}_{\sqrt{M}}(\alpha_k)\rangle\\
        \leq& 0.
    \end{align*}
    This implies that $\mathcal{P}$ is the projection operator on compact set $\mathcal{H}$. By the projected gradient descent algorithm, there holds:
    $\gamma^{(k)}=(\boldsymbol{\Theta}^{(k)},\mathbf{A}^{(k)})\in\mathcal{H}$ for any $k\in\mathbb{N}$. Suppose $\gamma=(\boldsymbol{\Theta},\mathbf{A})\in\mathcal{H}$ is a limit point of its subsequence, i.e., $\gamma^{(k_n)}\rightarrow \gamma$. We first prove that $\gamma=(\boldsymbol{\Theta},\mathbf{A})$ is a critical point, i.e.,
    \begin{align*}
        \nabla^{\mathrm{T}} \mathcal{L}_{h} (\gamma)(\tilde{\gamma}-\gamma)\leq 0
    \end{align*}
    for any $\tilde{\gamma}=(\boldsymbol{\tilde{\Theta}},\tilde{\mathbf{A}})\in\mathcal{H}$. If not, then $\nabla^{\mathrm{T}} \mathcal{L}_{h} (\gamma)(\tilde{\gamma}-\gamma)> 0$ for fixed $\tilde{\gamma}=(\boldsymbol{\tilde{\Theta}},\tilde{\mathbf{A}})\in\mathcal{H}$, Since by the algorithm, $\mathcal{L}_{h} (\gamma^{(k)})$ is strictly increasing, we can deduce that if we input $\gamma$ as the parameter value in the initial step, the algorithm will stop updating or update $\gamma$ to itself.\\[3mm]
    \textbf{Case 1: }If the algorithm stop updating,  then for any $\rho>0$ and $\gamma_{\rho}\triangleq \mathcal{P}(\gamma+\rho\nabla\mathcal{L}_{h} (\gamma))$, we have $\mathcal{L}_h(\gamma_{\rho})\leq \mathcal{L}_h(\gamma)$. We first prove that for any $\rho>0$, we have
    \begin{align*}
        \nabla^{\mathrm{T}} \mathcal{L}_{h} (\gamma)(\gamma_{\rho}-\gamma)< 0.
    \end{align*}
    If not, i.e., $\nabla^{\mathrm{T}} \mathcal{L}_{h} (\gamma)(\gamma_{\rho}-\gamma)\geq 0$. Denote $\gamma+\rho\nabla\mathcal{L}_{h} (\gamma)=\gamma_{\rho}+v$. Then by the projection property, we have $(\gamma_{\rho}-\gamma)^{\mathrm{T}}v\geq 0$. So we have
    \begin{align*}
        -\|\gamma_{\rho}-\gamma\|_2^2=(\gamma_{\rho}-\gamma)^{\mathrm{T}}(v-\rho\nabla\mathcal{L}_{h} (\gamma))\geq 0,
    \end{align*}
    which implies that $\gamma_{\rho}=\gamma$. This falls into Case 2. Hence for any $\rho>0$, we have $\nabla^{\mathrm{T}} \mathcal{L}_{h} (\gamma)(\gamma_{\rho}-\gamma)< 0$. Then for $\rho>0$ small enough, we have $\mathcal{L}_{h} (\gamma)< \mathcal{L}_{h} (\gamma_{\rho})$, which leads to contradiction.\\[3mm]
    \textbf{Case 2: }If the algorithm update $\gamma$ to itself. Then for any $\rho>0$, we have $\mathcal{P}(\gamma+\rho\nabla\mathcal{L}_{h} (\gamma))=\gamma$ . Then for any $\tilde{\gamma}\in\mathcal{H}$, we have
    \begin{align*}
        (\tilde{\gamma}-\gamma)^{\mathrm{T}}(\gamma+\rho\nabla\mathcal{L}_{h} (\gamma)-\gamma)\leq 0.
    \end{align*}
    Hence $\nabla^{\mathrm{T}} \mathcal{L}_{h} (\gamma)(\tilde{\gamma}-\gamma)\leq 0$ for any $\tilde{\gamma}\in\mathcal{H}$. So we prove that $\gamma$ is a critical point. By this proof, we can also show that the algorithm will stop updating on a critical point. This indicates that there can be only one limit point among $\gamma^{(k)},k=1,2,\ldots$. Hence we proved that $\gamma^{(k)}$ converges to a critical point of $\mathcal{L}_{h} (\gamma)$.
\end{proof}
\section{Details of Simulations Settings and Appended Results}\label{section:D}
We provide the detailed settings of the simulations in Section 4. For comparison,  we also provide simulation settings with independent event types, more event types than observation units, and modified specifications on rate functions which lead to weaker signals. In each simulation setting, we consider three cases regarding the pattern of the true parameter matrix $\mathbf{X}(t)=\mathbf{\Theta}(t)\mathbf{A}^{\mathrm{T}}$:
\begin{enumerate}[C1.]
    \item $\mathbf{X}(\cdot)$ takes constant value on $[0,1]$, i.e., $\mathbf{\Theta}(\cdot)$ takes constant value on $[0,1]$.
    \item $\mathbf{X}(\cdot)$ changes linearly on $[0,1]$, i.e., $\mathbf{\Theta}(\cdot)$ changes linearly on $[0,1]$.
    \item $\mathbf{X}(\cdot)$ changes periodically on $[0,1]$, i.e., $\mathbf{\Theta}(\cdot)$ changes periodically on $[0,1]$.
\end{enumerate}
According to the simulated rate functions, we generate $N\times J$ independent Poisson processes in independent cases and generate $N\times J$ Poisson processes with block-wise independence by the thinning approach mentioned in Section 4 in dependent cases, respectively. We choose the true number of factors as $r=3$ and link function as $f(x)=\exp(x)$. In dependent cases, the block size is chosen as $\phi(J)=J^{1/3}$.

For estimation, we choose $M=36$ and choose kernel function as the Epanechnikov kernel function $K(x)=0.75(1-x^2), \ -1\leq x\leq 1$, with kernel order $m=2$. It is easy to verify that the chosen kernel function satisfies Condition \ref{cond:3}. The smoothing bandwidth is chosen as $h=0.1((N\land J)/(\log^2 (N\land J)))^{-0.2}$ in independent cases and $h=0.1\left( J/\phi(J)\right)^{-0.19}$ in dependent cases, respectively. We estimate $\mathbf{X}(t)$ based on $31$ evenly distributed time points $t_1,\ldots,t_{31}$ on $[h,1-h]$ then obtain the estimates on $901$ evenly distributed time points $t_1,\ldots,t_{901}$ on $[h,1-h]$ by linear interpolation. The estimation error is then evaluated by the average of
${(1-2h)}\sum_{k=1}^{901} \|\mathbf{X}^\ast (t_k)-\widehat{\mathbf{X}}(t_k)\|_F^2/(901NJ)$ among all independent replications.

For model selection, we apply our proposed information criterion to select $r$ from the candidate set $\{1, 2, 3, 4, 5\}$ with the penalty term chosen as $v(N,J,r)=4000rNJh^{3.99}$ in independent cases and $v(N,J,r)=40rNJh^{1.99}$ in dependent cases, respectively. The selection accuracy is then evaluated by the number of times that the number of factors is under- or over-selected among all independent replications.
\subsection{Independent Case with Regular Event Rate}\label{section:D1}
The rate functions for the Poisson processes are generated by:
\begin{enumerate}[S1.]
\item Let $(N,J)=(200,100)$, $(400,200)$, $(800,400)$, $(1600,800)$.
\begin{enumerate}[C1.]
	\item Generate $\mathbf{\Theta}$, $\mathbf{A}$ by $\Theta_{ij}\stackrel{iid}{\sim}U(-1.8,1.8)$ and $A_{ij}\stackrel{iid}{\sim}U(-1.8,1.8)$. Let $\mathbf{X}(t)=\mathbf{\Theta}\mathbf{A}^{\mathrm{T}}$ for any $t\in[0,1]$.
	\item Generate $\mathbf{\Theta}(0)$, $\mathbf{A}$ by $\Theta_{ij}(0)\stackrel{iid}{\sim}U(-1.6,1.6)$ and $A_{ij}\stackrel{iid}{\sim}U(-1.6,1.6)$. Generate linear coefficient $\mathbf{B}\in\mathbb{R}^{r\times N}$ by $B_{ij}\stackrel{iid}{\sim}U(-3.6,3.6)$ and let $\Theta_{ij}(t)=B_{ij}t+\Theta_{ij}(0)$. Let $\mathbf{X}(t)=\mathbf{\Theta}(t)\mathbf{A}^{\mathrm{T}}$ for any $t\in[0,1]$.
	\item Generate $\mathbf{\Theta}(0)$, $\mathbf{A}$ by $\Theta_{ij}(0)\stackrel{iid}{\sim}U(-1.7,1.7)$ and $A_{ij}\stackrel{iid}{\sim}U(-1.7,1.7)$. Choose the period as $T_0=1$. Generate amplitude coefficient $\mathbf{S}\in\mathbb{R}^{r\times N}$ and phase coefficient $\mathbf{W}\in\mathbb{R}^{r\times N}$ by $S_{ij}\stackrel{iid}{\sim}U(-1.2,1.2)$ and $W_{ij}\stackrel{iid}{\sim}U(0,T_0)$. Let $\Theta_{ij}(t)=\Theta_{ij}(0)+S_{ij}\sin(2\pi(t-W_{ij})/T_0)$ and let $\mathbf{X}(t)=\mathbf{\Theta}(t)\mathbf{A}^{\mathrm{T}}$ for any $t\in[0,1]$.
\end{enumerate}
\item Let $(N,J)=(200,100)$, $(400,200)$, $(800,400)$, $(1600,800)$.
\begin{enumerate}[C1.]
\item Generate $\mathbf{\Theta}$, $\mathbf{A}$ by $\Theta_{ij}\stackrel{iid}{\sim}U(-1.8,1.8)$, $A_{ij}\stackrel{iid}{\sim}U(-1.8,1.8)$ for $i=1,\ldots,r-1,j=1,\ldots,J$ and $A_{ij}\stackrel{iid}{\sim}U(-0.9,0.9)$ for $i=r,j=1,\ldots,J$. Let $\mathbf{X}(t)=\mathbf{\Theta}\mathbf{A}^{\mathrm{T}}$ for any $t\in[0,1]$.
\item Generate $\mathbf{\Theta}(0)$, $\mathbf{A}$ by $\Theta_{ij}\stackrel{iid}{\sim}U(-1.6,1.6)$, $A_{ij}\stackrel{iid}{\sim}U(-1.6,1.6)$ for $i=1,\ldots,r-1,j=1,\ldots,J$ and $A_{ij}\stackrel{iid}{\sim}U(-0.8,0.8)$ for $i=r,j=1,\ldots,J$. Generate linear coefficient $\mathbf{B}\in\mathbb{R}^{r\times N}$ by $B_{ij}\stackrel{iid}{\sim}U(-3.6,3.6)$ and let $\Theta_{ij}(t)=B_{ij}t+\Theta_{ij}(0)$. Let $\mathbf{X}(t)=\mathbf{\Theta}(t)\mathbf{A}^{\mathrm{T}}$ for any $t\in[0,1]$.
\item Generate $\mathbf{\Theta}(0)$, $\mathbf{A}$ by $\Theta_{ij}\stackrel{iid}{\sim}U(-1.7,1.7)$, $A_{ij}\stackrel{iid}{\sim}U(-1.7,1.7)$ for $i=1,\ldots,r-1,j=1,\ldots,J$ and $A_{ij}\stackrel{iid}{\sim}U(-0.85,0.85)$ for $i=r,j=1,\ldots,J$. Choose the period as $T_0=1$. Generate amplitude coefficient $\mathbf{S}\in\mathbb{R}^{r\times N}$ and phase coefficient $\mathbf{W}\in\mathbb{R}^{r\times N}$ by $S_{ij}\stackrel{iid}{\sim}U(-1.2,1.2)$ and $W_{ij}\stackrel{iid}{\sim}U(0,T_0)$. Let $\Theta_{ij}(t)=\Theta_{ij}(0)+S_{ij}\sin(2\pi(t-W_{ij})/T_0)$ and let $\mathbf{X}(t)=\mathbf{\Theta}(t)\mathbf{A}^{\mathrm{T}}$ for any $t\in[0,1]$.
\end{enumerate}
\end{enumerate}
The $N\times J$ independent Poisson processes are then generated according to the simulated rate functions. We also present the results when we increase the number of grid points from 31 to 121 to show that the estimation error is not sensitive to the chosen number of grid points due to the smoothness of the rate function. Moreover, we evaluate the estimation error regarding the time-independent matrix $\mathbf{A}^*$. Let $\angle(\mathbf{A}^{*},\widehat{\mathbf{A}})\in\mathbb{R}^{r\times r}$ be the diagonal matrix containing all $r$ principal angles between the column spaces of $\mathbf{A}^{*}$ and $\widehat{\mathbf{A}}$. The estimation error is then evaluated by the element-wise product of $\sin(\angle(\mathbf{A}^{*},\widehat{\mathbf{A}}))$.

Table \ref{table1} presents the average estimation error of our proposed kernel-based estimator and the estimator under the Poisson factor model among the 50 independent replications for all 24 settings. Table \ref{table2} presents the average estimation error when the number of grid points changes from 31 to 121. Compared to Table \ref{table1}, there is little change in the estimation error, indicating that discretization by 31 grid points is good enough for approximation. Table \ref{table3} presents the frequency that the number of factors is under- and over-selected among the 50 independent replications for all the 24 settings. Table \ref{table4} presents the average estimation error regarding time-independent matrix $\mathbf{A}^*$ of our proposed kernel-based estimator and the estimator under the Poisson factor model among the 50 independent replications for all 24 settings. In the setting where the rate function is non-constant, our proposed method still yields substantially smaller estimation errors when estimating the time-independent matrix $\mathbf{A}^*$ than those under the Poisson factor model.\\[4mm]
\begin{table}[H]
	\centering
	\begin{tabular}{|c|c|c|c|c|c|c|}
        \hline & \multicolumn{3}{c|}{S1} & \multicolumn{3}{c|}{S2}\\
        \hline Kernel-based method & C1 & C2 & C3 & C1 & C2 & C3 \\
        \hline
        $J=100$ & 0.0714&0.0846&0.0747&0.0895&0.1012&0.0917\\
        $J=200$ & 0.0373&0.0441&0.0390&0.0467&0.0535&0.0483\\
        $J=400$ &0.0198&0.0238&0.0211&0.0249&0.0289&0.0260\\
        $J=800$ &0.0108&0.0129&0.0115&0.0136&0.0156&0.0142\\
        \hline Poisson factor model& C1 & C2 & C3 & C1 & C2 & C3 \\
        \hline
        $J=100$ &0.0138&0.8848&0.7206&0.0170&0.6175&0.5362\\
        $J=200$ &0.0065&0.8787&0.7274&0.0082&0.6199&0.5241\\
        $J=400$ &0.0032&0.9000&0.7307&0.0040&0.6228&0.5309\\
        $J=800$ &0.0016&0.9240&0.7287&0.0020&0.6502&0.5334\\
        \hline
    \end{tabular}
    \vspace{0.5em}
	\caption{(Independent case with regular event rate) Mean estimation error among 50 independent replications based on the proposed estimator and the estimator under the Poisson factor model under each of the 24 simulation settings.}\label{table1}
\end{table}
\begin{table}[H]
	\centering
	\begin{tabular}{|c|c|c|c|c|c|c|}
        \hline & \multicolumn{3}{c|}{S1} & \multicolumn{3}{c|}{S2}\\
        \hline Kernel-based method & C1 & C2 & C3 & C1 & C2 & C3 \\
        \hline
        $J=100$ &0.0747&0.0886&0.0782&0.0935&0.1060&0.0959\\
        $J=200$ &0.0393&0.0467&0.0411&0.0492&0.0566&0.0509\\
        $J=400$ &0.0212&0.0254&0.0225&0.0265&0.0308&0.0277\\
        $J=800$ &0.0116&0.0140&0.0124&0.0147&0.0169&0.0153\\
        \hline Poisson factor model& C1 & C2 & C3 & C1 & C2 & C3 \\
        \hline
        $J=100$ &0.0138&0.8848&0.7206&0.0170&0.6175&0.5362\\
        $J=200$ &0.0073&0.9785&0.7611&0.0091&0.6830&0.5458\\
        $J=400$ &0.0036&0.9773&0.7442&0.0046&0.6911&0.5442\\
        $J=800$ &0.0018&1.0012&0.7542&0.0023&0.6955&0.5491\\
        \hline
    \end{tabular}
    \vspace{0.5em}
	\caption{(Independent case with regular event rate and more grid points) Mean estimation error among 50 independent replications based on the proposed estimator and the estimator under the Poisson factor model under each of the 24 simulation settings.}\label{table2}
\end{table}
\begin{table}[H]
	\centering
	\begin{tabular}{|c|c|c|c|c|c|c|c|c|c|}
        \hline & \multicolumn{3}{c|}{S1} & \multicolumn{3}{c|}{S2} \\
        \hline Under-selection & ~~C1~~ & ~~C2~~ & ~~C3~~ & ~~C1~~ & ~~C2~~ & ~~C3~~ \\
        \hline 
        $J=100$ & 0 & 0 & 0 & 0 & 0 & 0  \\
        $J=200$ & 0 & 0 & 0 & 0 & 0 & 0  \\
        $J=400$ & 0 & 0 & 0 & 0 & 0 & 0 \\
        $J=800$ & 0 & 0 & 0 & 0 & 0 & 0 \\
        \hline Over-selection & C1 & C2 & C3 & C1 & C2 & C3 \\
        \hline 
        $J=100$ & 0 & 0 & 0 & 0 & 0 & 0 \\
        $J=200$ & 0 & 0 & 0 & 0 & 0 & 0 \\
        $J=400$ & 0 & 0 & 0 & 0 & 0 & 0 \\
        $J=800$ & 0 & 0 & 0 & 0 & 0 & 0 \\
        \hline
    \end{tabular}
    \vspace{0.5em}
	\caption{(Independent case with regular event rate) The number of times that the true number of factors is under- or over-selected selected among 50 independent replications under each of the 24 simulation settings.}\label{table3}
\end{table}
\begin{table}[H]
	\centering
	\begin{tabular}{|c|c|c|c|c|c|c|}
        \hline & \multicolumn{3}{c|}{S1} & \multicolumn{3}{c|}{S2}\\
        \hline Kernel-based method & C1 & C2 & C3 & C1 & C2 & C3 \\
        \hline
        $J=100$ &74.540&90.887&60.056&210.941&264.609&170.585\\
        $J=200$ &22.891&26.227&18.512&63.840&76.968&53.994\\
        $J=400$ &7.416&8.221&5.927&20.934&24.254&18.008\\
        $J=800$ &2.543&2.756&1.978&7.066&8.113&5.871\\
        \hline Poisson factor model& C1 & C2 & C3 & C1 & C2 & C3 \\
        \hline
        $J=100$ &57.858&1370.073&329.678&159.902&2122.063&519.222\\
        $J=200$ &19.184&646.892&148.225&52.383&950.349&206.596\\
        $J=400$ &6.425&303.716&68.679&17.798&467.246&97.187\\
        $J=800$ &2.235&186.285&35.219&6.178&318.743&51.668\\
        \hline
    \end{tabular}
    \vspace{0.5em}
	\caption{Mean estimation error ($\times$ $10^6$) of matrix $\mathbf{A}^*$ among 50 independent replications based on the proposed estimator and the estimator under the Poisson factor model under each of the 24 simulation settings.}\label{table4}
\end{table}
\subsection{Dependent Case with Regular Event Rate}\label{section:D3}
This is the simulation setting that is presented in Section 4. Let $(N,J)=(200,100)$, $(400,200)$, $(800,400)$, $(1600,800)$ and $\phi(J)=$ 5, 6, 7, 9. The rate functions for the Poisson processes are generated by the same method as in Appendix \ref{section:D1}. We then generate $N\times J$ Poisson processes with block-wise independence by the thinning algorithm. We also evaluate the estimation error when we increase the number of grid points from 31 to 121 to show that the result is not sensitive to the chosen number of grid points due to the smoothness of the rate function. Table \ref{table7} presents the average estimation error of our proposed kernel-based estimator and the estimator under the Poisson factor model among the 50 independent replications for all 24 settings. Table \ref{tablenew} presents the average estimation error when the number of grid points changes from 31 to 121, which has little difference from Table \ref{table7}. Table \ref{table8} presents the frequency that the number of factors is under- and over-selected among the 50 independent replications for all the 24 settings. The usage of pseudo-likelihood function instead of true likelihood function leads to slightly inferior results, compared to the independent case in Appendix \ref{section:D1}.\\[4mm]
\begin{table}[H]
	\centering
	\begin{tabular}{|c|c|c|c|c|c|c|}
        \hline & \multicolumn{3}{c|}{S1} & \multicolumn{3}{c|}{S2}\\
        \hline Kernel-based method & C1 & C2 & C3 & C1 & C2 & C3 \\
        \hline
        $J=100$ &0.1006&0.1174&0.1048&0.1371&0.1606&0.1407\\
        $J=200$ &0.0536&0.0630&0.0562&0.0692&0.0806&0.0727\\
        $J=400$ &0.0291&0.0350&0.0308&0.0378&0.0437&0.0398\\
        $J=800$ &0.0159&0.0190&0.0170&0.0205&0.0240&0.0217\\
        \hline Poisson factor model& C1 & C2 & C3 & C1 & C2 & C3 \\
        \hline
        $J=100$ &0.0154&0.9743&0.7518&0.0192&0.6928&0.5530\\
        $J=200$ &0.0073&0.9785&0.7611&0.0091&0.6830&0.5458\\
        $J=400$ &0.0036&0.9773&0.7442&0.0046&0.6911&0.5442\\
        $J=800$ &0.0018&1.0012&0.7542&0.0023&0.6955&0.5491\\
        \hline
    \end{tabular}
    \vspace{0.5em}
	\caption{(Dependent case with regular event rate) Mean estimation error among 50 independent replications based on the proposed estimator and the estimator under the Poisson factor model under each of the 24 simulation settings.}\label{table7}
\end{table}
\begin{table}[H]
	\centering
	\begin{tabular}{|c|c|c|c|c|c|c|}
        \hline & \multicolumn{3}{c|}{S1} & \multicolumn{3}{c|}{S2}\\
        \hline Kernel-based method & C1 & C2 & C3 & C1 & C2 & C3 \\
        \hline
        $J=100$ &0.1084&0.1268&0.1131&0.1474&0.1729&0.1515\\
        $J=200$ &0.0588&0.0692&0.0617&0.0758&0.0885&0.0797\\
        $J=400$ &0.0326&0.0393&0.0345&0.0423&0.0489&0.0445\\
        $J=800$ &0.0181&0.0217&0.0193&0.0234&0.0274&0.0248\\
        \hline Poisson factor model& C1 & C2 & C3 & C1 & C2 & C3 \\
        \hline
        $J=100$ &0.0154&0.9743&0.7518&0.0192&0.6928&0.5530\\
        $J=200$ &0.0073&0.9784&0.7611&0.0091&0.6830&0.5458\\
        $J=400$ &0.0036&0.9773&0.7442&0.0046&0.6911&0.5442\\
        $J=800$ &0.0018&1.0012&0.7542&0.0023&0.6955&0.5491\\
        \hline
    \end{tabular}
    \vspace{0.5em}
	\caption{(Dependent case with regular event rate and more grid points) Mean estimation error among 50 independent replications based on the proposed estimator and the estimator under the Poisson factor model under each of the 24 simulation settings.}\label{tablenew}
\end{table}
\begin{table}[H]
	\centering
	\begin{tabular}{|c|c|c|c|c|c|c|c|c|c|}
        \hline & \multicolumn{3}{c|}{S1} & \multicolumn{3}{c|}{S2} \\
        \hline Under-selection & ~~C1~~ & ~~C2~~ & ~~C3~~ & ~~C1~~ & ~~C2~~ & ~~C3~~ \\
        \hline 
        $J=100$ & 0 & 0 & 0 & 0 & 0 & 0  \\
        $J=200$ & 0 & 0 & 0 & 0 & 0 & 0  \\
        $J=400$ & 0 & 0 & 0 & 0 & 0 & 0 \\
        $J=800$ & 0 & 0 & 0 & 0 & 0 & 0 \\
        \hline Over-selection & C1 & C2 & C3 & C1 & C2 & C3 \\
        \hline 
        $J=100$ & 0 & 0 & 0 & 13 & 29 & 10 \\
        $J=200$ & 0 & 0 & 0 & 0 & 0 & 0 \\
        $J=400$ & 0 & 0 & 0 & 0 & 0 & 0 \\
        $J=800$ & 0 & 0 & 0 & 0 & 0 & 0 \\
        \hline
    \end{tabular}
    \vspace{0.5em}
	\caption{(Dependent case with regular event rate) The number of times that the true number of factors is under- or over-selected selected among 50 independent replications under each of the 24 simulation settings.}\label{table8}
\end{table}
\subsection{Independent Case with Lower Event Rate}\label{section:D2}
In this case, we modify the generating method of the rate functions such that the total number of events is approximately half as the number of events in Appendix \ref{section:D1}. The rate functions of the Poisson processes are generated by:
\begin{enumerate}[S1.]
\item Let $(N,J)=(200,100)$, $(400,200)$, $(800,400)$, $(1600,800)$.
\begin{enumerate}[C1.]
	\item Generate $\mathbf{\Theta}$, $\mathbf{A}$ by $\Theta_{ij}\stackrel{iid}{\sim}U(-1.6,1.6)$ and $A_{ij}\stackrel{iid}{\sim}U(-1.6,1.6)$. Let $\mathbf{X}(t)=\mathbf{\Theta}\mathbf{A}^{\mathrm{T}}$ for any $t\in[0,1]$.
	\item Generate $\mathbf{\Theta}(0)$, $\mathbf{A}$ by $\Theta_{ij}(0)\stackrel{iid}{\sim}U(-1.4,1.4)$ and $A_{ij}\stackrel{iid}{\sim}U(-1.4,1.4)$. Generate linear coefficient $\mathbf{B}\in\mathbb{R}^{r\times N}$ by $B_{ij}\stackrel{iid}{\sim}U(-3.2,3.2)$ and let $\Theta_{ij}(t)=B_{ij}t+\Theta_{ij}(0)$. Let $\mathbf{X}(t)=\mathbf{\Theta}(t)\mathbf{A}^{\mathrm{T}}$ for any $t\in[0,1]$.
	\item Generate $\mathbf{\Theta}(0)$, $\mathbf{A}$ by $\Theta_{ij}(0)\stackrel{iid}{\sim}U(-1.5,1.5)$ and $A_{ij}\stackrel{iid}{\sim}U(-1.5,1.5)$. Choose the period as $T_0=1$. Generate amplitude coefficient $\mathbf{S}\in\mathbb{R}^{r\times N}$ and phase coefficient $\mathbf{W}\in\mathbb{R}^{r\times N}$ by $S_{ij}\stackrel{iid}{\sim}U(-1,1)$ and $W_{ij}\stackrel{iid}{\sim}U(0,T_0)$. Let $\Theta_{ij}(t)=\Theta_{ij}(0)+S_{ij}\sin(2\pi(t-W_{ij})/T_0)$ and let $\mathbf{X}(t)=\mathbf{\Theta}(t)\mathbf{A}^{\mathrm{T}}$ for any $t\in[0,1]$.
\end{enumerate}
\item Let $(N,J)=(200,100)$, $(400,200)$, $(800,400)$, $(1600,800)$.
\begin{enumerate}[C1.]
\item Generate $\mathbf{\Theta}$, $\mathbf{A}$ by $\Theta_{ij}\stackrel{iid}{\sim}U(-1.6,1.6)$, $A_{ij}\stackrel{iid}{\sim}U(-1.6,1.6)$ for $i=1,\ldots,r-1,j=1,\ldots,J$ and $A_{ij}\stackrel{iid}{\sim}U(-0.8,0.8)$ for $i=r,j=1,\ldots,J$. Let $\mathbf{X}(t)=\mathbf{\Theta}\mathbf{A}^{\mathrm{T}}$ for any $t\in[0,1]$.
\item Generate $\mathbf{\Theta}(0)$, $\mathbf{A}$ by $\Theta_{ij}\stackrel{iid}{\sim}U(-1.4,1.4)$, $A_{ij}\stackrel{iid}{\sim}U(-1.4,1.4)$ for $i=1,\ldots,r-1,j=1,\ldots,J$ and $A_{ij}\stackrel{iid}{\sim}U(-0.7,0.7)$ for $i=r,j=1,\ldots,J$. Generate linear coefficient $\mathbf{B}\in\mathbb{R}^{r\times N}$ by $B_{ij}\stackrel{iid}{\sim}U(-3.2,3.2)$ and let $\Theta_{ij}(t)=B_{ij}t+\Theta_{ij}(0)$. Let $\mathbf{X}(t)=\mathbf{\Theta}(t)\mathbf{A}^{\mathrm{T}}$ for any $t\in[0,1]$.
\item Generate $\mathbf{\Theta}(0)$, $\mathbf{A}$ by $\Theta_{ij}\stackrel{iid}{\sim}U(-1.5,1.5)$, $A_{ij}\stackrel{iid}{\sim}U(-1.5,1.5)$ for $i=1,\ldots,r-1,j=1,\ldots,J$ and $A_{ij}\stackrel{iid}{\sim}U(-0.75,0.75)$ for $i=r,j=1,\ldots,J$. Choose the period as $T_0=1$. Generate amplitude coefficient $\mathbf{S}\in\mathbb{R}^{r\times N}$ and phase coefficient $\mathbf{W}\in\mathbb{R}^{r\times N}$ by $S_{ij}\stackrel{iid}{\sim}U(-1,1)$ and $W_{ij}\stackrel{iid}{\sim}U(0,T_0)$. Let $\Theta_{ij}(t)=\Theta_{ij}(0)+S_{ij}\sin(2\pi(t-W_{ij})/T_0)$ and let $\mathbf{X}(t)=\mathbf{\Theta}(t)\mathbf{A}^{\mathrm{T}}$ for any $t\in[0,1]$.
\end{enumerate}
\end{enumerate}
The $N\times J$ independent Poisson processes are then generated according to the simulated rate functions. 
Table \ref{table5} presents the average estimation error of our proposed kernel-based estimator and the estimator under the Poisson factor model among the 50 independent replications for all 24 settings. Table \ref{table6} presents the frequency that the number of factors is under- and over-selected among the 50 independent replications for all 24 settings. In each setting, the total number of events is approximately half as the number of events in the corresponding setting in Appendix \ref{section:D1}, leading to slightly inferior but still valid results.\\[4mm]
\begin{table}[H]
	\centering
	\begin{tabular}{|c|c|c|c|c|c|c|}
        \hline & \multicolumn{3}{c|}{S1} & \multicolumn{3}{c|}{S2}\\
        \hline Kernel-based method & C1 & C2 & C3 & C1 & C2 & C3 \\
        \hline
        $J=100$ &0.0980&0.1103&0.1028&0.1156&0.1276&0.1191\\
        $J=200$ &0.0512&0.0594&0.0539&0.0604&0.0678&0.0630\\
        $J=400$ &0.0275&0.0320&0.0291&0.0325&0.0367&0.0340\\
        $J=800$ &0.0149&0.0174&0.0159&0.0177&0.0200&0.0186\\
        \hline Poisson factor model& C1 & C2 & C3 & C1 & C2 & C3 \\
        \hline
        $J=100$ &0.0187&0.4762&0.3671&0.0216&0.3452&0.2853\\
        $J=200$ &0.0089&0.4698&0.3672&0.0104&0.3348&0.2697\\
        $J=400$ &0.0044&0.4763&0.3642&0.0052&0.3387&0.2704\\
        $J=800$ &0.0022&0.4896&0.3685&0.0026&0.3468&0.2719\\
        \hline
    \end{tabular}
    \vspace{0.5em}
	\caption{(Independent case with lower event rate) Mean estimation error among 50 independent replications based on the proposed estimator and the estimator under the Poisson factor model under each of the 24 simulation settings.}\label{table5}
\end{table}
\begin{table}[H]
	\centering
	\begin{tabular}{|c|c|c|c|c|c|c|c|c|c|}
        \hline & \multicolumn{3}{c|}{S1} & \multicolumn{3}{c|}{S2} \\
        \hline Under-selection & ~~C1~~ & ~~C2~~ & ~~C3~~ & ~~C1~~ & ~~C2~~ & ~~C3~~ \\
        \hline 
        $J=100$ & 0 & 0 & 0 & 0 & 1 & 0  \\
        $J=200$ & 0 & 0 & 0 & 0 & 0 & 0  \\
        $J=400$ & 0 & 0 & 0 & 0 & 0 & 0 \\
        $J=800$ & 0 & 0 & 0 & 0 & 0 & 0 \\
        \hline Over-selection & C1 & C2 & C3 & C1 & C2 & C3 \\
        \hline 
        $J=100$ & 0 & 0 & 0 & 0 & 0 & 0 \\
        $J=200$ & 0 & 0 & 0 & 0 & 0 & 0 \\
        $J=400$ & 0 & 0 & 0 & 0 & 0 & 0 \\
        $J=800$ & 0 & 0 & 0 & 0 & 0 & 0 \\
        \hline
    \end{tabular}
    \vspace{0.5em}
	\caption{(Independent case with lower event rate) The number of times that the true number of factors is under- or over-selected selected among 50 independent replications under each of the 24 simulation settings.}\label{table6}
\end{table}
\subsection{Dependent Case with Lower Event Rate}\label{section:D4}
Let $(N,J)=(200,100)$, $(400,200)$, $(800,400)$, $(1600,800)$ and $\phi(J)=$ 5, 6, 7, 9. The rate functions for the Poisson processes are generated by the same method as in Appendix \ref{section:D2}. We then generate $N\times J$ Poisson processes with block-wise independence by the thinning algorithm. Table \ref{table9} presents the average estimation error of our proposed kernel-based estimator and the estimator under the Poisson factor model among the 50 independent replications for all 24 settings. Table \ref{table10} presents the frequency that the number of factors is under- and over-selected among the 50 independent replications for all the 24 settings.\\[4mm]
\begin{table}[H]
	\centering
	\begin{tabular}{|c|c|c|c|c|c|c|}
        \hline & \multicolumn{3}{c|}{S1} & \multicolumn{3}{c|}{S2}\\
        \hline Kernel-based method & C1 & C2 & C3 & C1 & C2 & C3 \\
        \hline
        $J=100$ &0.1529&0.1797&0.1564&0.2380&0.3940&0.2586\\
        $J=200$ &0.0758&0.0907&0.0808&0.1035&0.1290&0.1084\\
        $J=400$ &0.0413&0.0486&0.0443&0.0534&0.0622&0.0557\\
        $J=800$ &0.0225&0.0265&0.0243&0.0285&0.0328&0.0302\\
        \hline Poisson factor model& C1 & C2 & C3 & C1 & C2 & C3 \\
        \hline
        $J=100$ &0.0208&0.5124&0.3802&0.0246&0.3895&0.2994\\
        $J=200$ &0.0099&0.5156&0.3835&0.0120&0.3740&0.2871\\
        $J=400$ &0.0050&0.5185&0.3765&0.0060&0.3713&0.2810\\
        $J=800$ &0.0025&0.5301&0.3779&0.0031&0.3752&0.2797\\
        \hline
    \end{tabular}
    \vspace{0.5em}
	\caption{(Dependent case with lower event rate) Mean estimation error among 50 independent replications based on the proposed estimator and the estimator under the Poisson factor model under each of the 24 simulation settings.}\label{table9}
\end{table}
\begin{table}[H]
	\centering
	\begin{tabular}{|c|c|c|c|c|c|c|c|c|c|}
        \hline & \multicolumn{3}{c|}{S1} & \multicolumn{3}{c|}{S2} \\
        \hline Under-selection & ~~C1~~ & ~~C2~~ & ~~C3~~ & ~~C1~~ & ~~C2~~ & ~~C3~~ \\
        \hline 
        $J=100$ & 0 & 0 & 0 & 0 & 0 & 0  \\
        $J=200$ & 0 & 0 & 0 & 0 & 0 & 0  \\
        $J=400$ & 0 & 0 & 0 & 0 & 0 & 0 \\
        $J=800$ & 0 & 0 & 0 & 0 & 0 & 0 \\
        \hline Over-selection & C1 & C2 & C3 & C1 & C2 & C3 \\
        \hline 
        $J=100$ & 25 & 40 & 20 & 46 & 48 & 50 \\
        $J=200$ & 0 & 0 & 0 & 0 & 0 & 0 \\
        $J=400$ & 0 & 0 & 0 & 0 & 0 & 0 \\
        $J=800$ & 0 & 0 & 0 & 0 & 0 & 0 \\
        \hline
    \end{tabular}
    \vspace{0.5em}
	\caption{(Dependent case with lower event rate) The number of times that the true number of factors is under- or over-selected selected among 50 independent replications under each of the 24 simulation settings.}\label{table10}
\end{table}
\subsection{Dependent Case with High-dimensional Setting}\label{section:D5}
Let $(N,J)=(100,800)$, $(200,800)$, $(400,800)$, $(800,800)$, $(1600,800)$ and $\phi(J)=$ 9, 9, 9, 9, 9. The rate functions for the Poisson processes are generated by the same method as in Appendix \ref{section:D1}. We then generate $N\times J$ Poisson processes with block-wise independence by the thinning algorithm. Table \ref{table11} presents the average estimation error of our proposed kernel-based method and Poisson factor model among the 50 independent replications for all 30 settings. Table \ref{table12} presents the frequency that the number of factors is under- and over-selected among the 50 independent replications for all the 30 settings. Our proposed method is still valid when $N$ is substantially smaller than $J$. Holding $J$ as constant, our method suffers less increment in the estimation error as $N$ decreases, compared to Poisson factor model.\\[4mm]
\begin{table}[H]
	\centering
	\begin{tabular}{|c|c|c|c|c|c|c|}
        \hline & \multicolumn{3}{c|}{S1} & \multicolumn{3}{c|}{S2}\\
        \hline Kernel-based method & C1 & C2 & C3 & C1 & C2 & C3 \\
        \hline $N=100$ &0.0204&0.0221&0.0204&0.0263&0.0279&0.0260\\
        $N=200$ &0.0168&0.0190&0.0173&0.0219&0.0242&0.0222\\
        $N=400$ &0.0160&0.0188&0.0169&0.0208&0.0236&0.0216\\
        $N=800$ &0.0164&0.0195&0.0176&0.0214&0.0248&0.0224\\
        $N=1600$ &0.0159&0.0190&0.0170&0.0205&0.0240&0.0217\\
        \hline Poisson factor model& C1 & C2 & C3 & C1 & C2 & C3 \\
        \hline 
        $N=100$ &0.0101&0.9409&0.7253&0.0018&0.9981&0.7287\\
        $N=200$ &0.0057&0.9595&0.7253&0.0127&0.6468&0.5293\\
        $N=400$ &0.0035&0.9688&0.7205&0.0071&0.6660&0.5286\\
        $N=800$ &0.0024&1.0007&0.7331&0.0043&0.6838&0.5289\\
        $N=1600$ &0.0018&1.0012&0.7542&0.0023&0.6955&0.5491\\
        \hline
    \end{tabular}
    \vspace{0.5em}
	\caption{(Dependent case with high-dimensional setting) Mean estimation error among 50 independent replications based on the proposed estimator and the estimator under the Poisson factor model under each of the 30 simulation settings.}\label{table11}
\end{table}
\begin{table}[H]
    \centering
    \begin{tabular}{|c|c|c|c|c|c|c|c|c|c|}
        \hline & \multicolumn{3}{c|}{S1} & \multicolumn{3}{c|}{S2} \\
        \hline Under-selection & ~~C1~~ & ~~C2~~ & ~~C3~~ & ~~C1~~ & ~~C2~~ & ~~C3~~ \\
        \hline 
        $N=100$ & 0 & 0 & 0 & 0 & 0 & 0 \\
        $N=200$ & 0 & 0 & 0 & 0 & 0 & 0 \\
        $N=400$ & 0 & 0 & 0 & 0 & 0 & 0 \\
        $N=800$ & 0 & 0 & 0 & 0 & 0 & 0 \\
        $N=1600$ & 0 & 0 & 0 & 0 & 0 & 0 \\
        \hline Over-selection & C1 & C2 & C3 & C1 & C2 & C3 \\
        \hline 
        $N=100$ & 0 & 0 & 0 & 0 & 0 & 0 \\
        $N=200$ & 0 & 0 & 0 & 0 & 0 & 0 \\
        $N=400$ & 0 & 0 & 0 & 0 & 0 & 0 \\
        $N=800$ & 0 & 0 & 0 & 0 & 0 & 0 \\
        $N=1600$ & 0 & 0 & 0 & 0 & 0 & 0 \\
        \hline
    \end{tabular}
    \vspace{0.5em}
	\caption{(Dependent case with high-dimensional setting) The number of times that the true number of factors is under- or over-selected selected among 50 independent replications under each of the 30 simulation settings.}\label{table12}
\end{table}
\newpage
\bibliographystyle{apalike}
\bibliography{paper-ref,proof-ref}

\end{document}